\pdfoutput=1
\documentclass[11pt]{scrartcl}
\usepackage[utf8]{inputenc}
\usepackage{amsmath,amssymb,mathtools,amsthm}
\usepackage{thmtools}
\usepackage[margin=1in]{geometry}
\usepackage{dsfont, lmodern}
\usepackage{booktabs}
\usepackage{enumerate}
\usepackage{tikz}
\usetikzlibrary{cd, calc}
\usetikzlibrary{decorations.pathmorphing}
\usetikzlibrary{decorations.markings}
\usetikzlibrary{fit}
\usepackage{marginnote}
\usepackage{MnSymbol}
\usepackage{stmaryrd}
\usepackage[most]{tcolorbox}
\usepackage{algorithm}
\usepackage{algorithmic}
\usepackage[normalem]{ulem}
\usepackage{cancel,xcolor}
\usepackage{orcidlink}
\usepackage{listings}
\usepackage{xcolor}
\usepackage{enumitem}
\usepackage{quantikz}

\newtheorem{theorem}{Theorem}

\newtheorem{proposition}[theorem]{Proposition}

\newtheorem{lemma}[theorem]{Lemma}
\declaretheorem[style=definition,sibling=theorem,name=Example]{example}
\newtheorem{corollary}[theorem]{Corollary}
\newtheorem*{theorem*}{Theorem}

\newcommand{\supp}[1]{\operatorname{supp}(#1)}
\newcommand{\R}{\mathds R}
\newcommand{\C}{\mathds C}
\newcommand{\kk}{\mathds K}
\newcommand{\conj}[1]{#1^\inv}
\newcommand{\adj}{*}
\newcommand{\xx}{X}

\newcommand{\ncmonoid}{\mathcal W}
\newcommand{\ncalgebra}{\mathcal P}

\newcommand{\inv}{*}
\newcommand{\idealgen}[1]{\langle #1 \rangle}

\newcommand{\lc}[1]{\operatorname{lc}(#1)}
\newcommand{\lm}[1]{\operatorname{lm}(#1)}
\newcommand{\lt}[1]{\operatorname{lt}(#1)}
\newcommand{\clique}{\mathcal C}
\colorlet{summarycolor}{yellow!70!red!90!blue!80!white}
\colorlet{algorithmcolor}{blue!50!white}
\colorlet{softwarecolor}{red!50!white}

\newtcbtheorem[auto counter,number within=section]{summary}%
  {Highlight}{fonttitle=\bfseries\upshape, fontupper=\slshape,
     arc=0mm, colback=blue!5!white,colframe=summarycolor}{theorem}

\newcommand\blueout{\bgroup\markoverwith
{\textcolor{blue}{\rule[.5ex]{2pt}{0.4pt}}}\ULon}
\newcommand{\mathsout}[1]% will draw line through middle of #1
{\bgroup\mathchoice
  {\sbox0{$\displaystyle{#1}$}%
    \usebox0\hspace{-\wd0}%
    \rule[0.5\ht0-0.5\dp0-.5pt]{\wd0}{1pt}}%
  {\sbox0{$\textstyle{#1}$}%
    \usebox0\hspace{-\wd0}%
    \rule[0.5\ht0-0.5\dp0-.5pt]{\wd0}{1pt}}%
  {\sbox0{$\scriptstyle{#1}$}%
    \usebox0\hspace{-\wd0}%
    \rule[0.5\ht0-0.5\dp0-.5pt]{\wd0}{1pt}}%
  {\sbox0{$\scriptscriptstyle{#1}$}%
    \usebox0\hspace{-\wd0}%
    \rule[0.5\ht0-0.5\dp0-.5pt]{\wd0}{1pt}}%
\egroup}

\newcommand{\id}{\mathbf{1}}

\let\epsilon\varepsilon

\newcommand{\graph}{\mathcal G}
\newcommand{\complem}[1]{\bar #1}

\newcommand{\Span}{\operatorname{Span}}
\newcommand{\gnf}{\operatorname{gnf}}
\newcommand{\lex}{\operatorname{lex}}
\newcommand{\foa}{\operatorname{foa}}

\newcommand{\nf}{\operatorname{nf}}

\newcommand{\layer}{{L}}

\title{Partially-Commutative Polynomial Optimization}

\usepackage{authblk}
\hypersetup{colorlinks=false, linkbordercolor=red, citebordercolor=green, urlbordercolor=blue}

\author[1]{Abhishek Mishra}
\author[2]{Mois\'es Bermejo Mor\'an}
\author[1]{Stefano Pironio}

\affil[1]{\normalsize Laboratoire d'Information Quantique, Universit\'e libre de Bruxelles, Belgium}
\affil[2]{\normalsize Department of Computer Science, School of Computing and Data Science, The University of Hong Kong, Hong Kong}
\date{\vspace{-5ex}}

\makeatletter
\def\thmt@trivialref#1#2{%
  \ifcsname r@#1\endcsname
    \expandafter\ifx\csname r@#1\endcsname\relax
      #2%
    \else
      \@xa\@xa\@xa\thmt@trivi@lr@f\csname r@#1\endcsname\relax\@nil
    \fi
  \else
    #2%
  \fi
}
\makeatother

\begin{document}

\maketitle

\begin{abstract}
Semidefinite programming hierarchies for commutative and non-commutative polynomial optimization represent a powerful computational tool with many applications in quantum information. 
In such applications, a given variable is typically not either commuting or non-commuting with all other variables, but instead commutes with some variables and does not commute with others, i.e., the variables satisfy some partial commutation relations. While such partial commutation relations can always be incorporated in a fully non-commutative setting through suitable linear constraints in the semidefinite programming relaxations, exploiting their algebraic properties from the onset can result in more compact relaxations. 
This leads us to introduce partially-commutative polynomial optimization, a framework that encompasses commutative and non-commutative polynomial optimization, allowing for arbitrary commutation relations among the variables. We point out that the underlying algebraic structure is that of a partially-commutative monoid. We present and review several key aspects of such monoids and show how they can be used to build SDP relaxations for partially-commutative polynomial optimization problems in which the partial commutations are natively implemented in the monomial structure, without the need of additional linear constraints.
\end{abstract}

% \tableofcontents  

\section{Introduction}

Non-commutative polynomial optimization (NCPO) refers to optimization problems over matrices or operators in a dimension-free setting. Formally, these operators are represented by non-commutative variables. A standard approach to NCPO is the hierarchy introduced in \cite{navascues2007bounding,navascues2008convergent,doherty2008quantum,pironio2010convergent} commonly known as the NPA hierarchy. It provides a sequence of semidefinite programming (SDP) relaxations, and can be viewed as the non-commutative counterpart of the commutative SDP hierarchies introduced by Lasserre \cite{lasserre2001global} and Parrilo \cite{parrilo2003semidefinite}.

The NPA hierarchy is a central tool in quantum information and quantum foundations. It is used, for example, to upper bound Bell inequality violations and characterize quantum correlations in device-independent settings \cite{brunner2014bell,tavakoli2024semidefinite}, to establish security in device-independent quantum key distribution \cite{brown2024device}, and to certify properties of quantum many-body systems \cite{wang2024certifying}. It is also connected to deep problems in operator algebras, such as Connes' embedding problem \cite{junge2011connes, ji2021mip}. The original framework has been extended in multiple directions, including dimension-bounded NCPO \cite{navascues2015bounding}, tracial NCPO \cite{burgdorf2013tracial}, and NCPO involving trace and state polynomials \cite{klep2022optimization, klep2023state}. Several works also exploit additional problem structure, such as sparsity \cite{wang2021exploiting} and symmetry \cite{bamps2015sum, rosset2018symdpoly}, to reduce SDP size.

The moment-matrix approach to NCPO starts from the observation that any feasible operator solution induces a linear functional on the free $\inv$-algebra of non-commutative polynomials. This functional is determined by its values on monomials -- the moments -- arranged into a moment matrix constrained to be positive semidefinite. Increasing the monomial set yields a hierarchy of SDP relaxations that converges, under suitable assumptions, to the NCPO optimum.

Equality polynomial constraints on the non-commuting variables are usually incorporated as linear constraints on moments. However, more compact SDP relaxations can be obtained by using these equalities to reduce the monomial basis from the outset, rather than imposing linear relations on the full basis. The idea is to identify polynomials that are equivalent modulo the equality constraints and represent moments in the smaller quotient algebra of these equivalence classes. This can be done, e.g., by computing canonical representatives of the equivalence classes.

This idea is not new and has already been suggested in commutative polynomial optimization \cite{parrilo2005exploiting,laurent2008sums} and non-commutative polynomial optimization \cite{pironio2010convergent}. It is used, implicitly or explicitly, in most numerical implementations of the NPA hierarchy. However, a detailed presentation of this idea is missing. Furthermore, careless reductions of the monomial basis can lead to suboptimal or even incorrect relaxations. One objective of this paper is to remedy this situation and to give a detailed presentation of this approach.

Handling arbitrary equality constraints through this quotient approach is computationally hard: one must compute canonical representatives modulo the ideal generated by these constraints. Although general algorithms exist (i.e., through Gr\"obner bases), they can become prohibitively expensive or even fail to terminate. This is related to the fact that the underlying mathematical problem is undecidable.

In this paper, we thus focus on a broad and practically relevant subclass of equality constraints: commutation relations between selected pairs of variables. Such partial commutations naturally arise in quantum information scenarios with multiple parties acting on shared systems. In Bell scenarios, for instance, operators belonging to different parties commute, while operators within one party need not. More generally, in multipartite systems, operators may act on overlapping or disjoint subsets of parties, resulting in a complex commutation structure. This defines a partially commutative regime, interpolating between the fully non-commutative and fully commutative cases.

To formalize this, we introduce \emph{partially commutative polynomial optimization} (PCPO). PCPO generalizes both commutative polynomial optimization (CPO) and NCPO by allowing selective commutation relations among the variables. The relevant underlying algebraic object is a \emph{partially commutative monoid}, also known as a \emph{trace monoid} \cite{cartier1969applications,diekert1995book}. These monoids have been extensively studied, notably in the context of concurrent computations, where non-commuting variables represent processes that are causally dependent, while commutative ones represent independent processes that can be executed in parallel. 
We point out that existing normal-form representations for these monoids can be adapted to build smaller and more efficient SDP relaxations in the PCPO setting that directly account for partial commutations without the need of imposing them as explicit linear constraints.

The paper is organized as follows. Section~\ref{sec:ncpo} reviews standard NCPO and its moment-matrix SDP relaxations. Section~\ref{sec:equality_constraints} explains in detail how equality constraints can be exploited to reduce the monomial basis and build more compact SDP relaxations. We point out in particular in Subsection~\ref{subsec:pitfalls} that computing canonical forms of polynomials through naive substitution rules may fail to identify equivalent polynomials, leading to suboptimal or incorrect SDP relaxations. The correct approach requires a representation of the quotient algebra modulo the equality constraints. We discuss in Subsection~\ref{sec:groebner} how this can in principle be done using Gr\"obner bases, but also the limitations of this approach.
Section~\ref{sec:pcpo} introduces PCPO and links it to partially commutative monoids. We present basic results about the associated Gr\"obner bases in Subsection~\ref{sec:word_representatives} and present more convenient structural representatives of partially-commutative monomials in Subsections~\ref{sec:circuit_representation} and~\ref{subsec:wire_projections}.  We introduce in particular a circuit-based representation that may be more natural for the quantum information community and explain how it can be used to build SDP hierarchies for PCPO. In Section~\ref{sec:tracial_pcpo}, we consider the tracial PCPO setting, where one optimizes traces rather than minimum eigenvalues and we discuss cyclic equivalence in the partially commutative context.

\section{Non-commutative polynomial optimization}
\label{sec:ncpo}
\subsection{Notation and definitions}\label{sec:notation}
Let $\xx$ be an alphabet, i.e., a finite set $\xx=\{x_1,\ldots,x_n\}$ of non-commutative variables or letters. Finite sequences $w=x_{i_1}x_{i_2}\ldots x_{i_k}$ of these variables correpond to monomials or words.  The collection $\ncmonoid=\langle \xx \rangle$ of these words forms a monoid, called the free monoid, where multiplication of two words correponds to their concatenation and the identity element is the empty word denoted $\id$. The length or degree of a monomial $w\in \ncmonoid$, written $|w|$, is the number of letters it contains. The set of all monomials of degree less than or equal to $d$ is denoted $\ncmonoid_d$.

We also define an involution on $\xx$, that is, a map $\inv : \xx \to \xx$, satisfying $(x^\inv)^\inv = x$ for each $x \in \xx$.  A letter $ x \in \xx$ is called hermitian if $ x^\inv = x $. The involution extends to $\ncmonoid$ via $(x_{i_1}x_{i_2} \dots x_{i_k})^\inv = x_{i_k}^\inv \dots x_{i_2}^\inv x_{i_1}^\inv$. The involution serves as an abstract analogue of the adjoint (or Hermitian conjugate) in operator or matrix algebras. 

Let $\kk$ denote the field of real numbers $\R$ or complex numbers $\C$. A polynomial on the alphabet $\xx$ over the field $\kk$ is a finite linear combination of monomials in $\ncmonoid$ with coefficients in $\kk$, 
\begin{equation}
    p = \sum_{w \in \ncmonoid} p_w w,\qquad p_w\in \kk,\quad p_w \neq 0 \text{ for finitely many } w\,.
\end{equation}
The set $\ncalgebra$ of all polynomials on $\xx$ forms an algebra, called the free $*$-algebra, with the natural addition, multiplication and scalar multiplication and involution defined by
\begin{equation}
    p^\inv = \sum_{w \in \ncmonoid} \conj{p_w} w^\inv\,,
\end{equation}
where  $ \conj{p_w} =p_w $ if $ \kk = \R $ and $\conj{p_w}$ is the complex conjugate of $p_w$ if $ \kk = \C $. A polynomial $ p $ is called Hermitian if $ p^\inv = p $.

Give a polynomial $p\in\ncalgebra$, the set $\operatorname{supp}(p) = \{ w : p_w \neq 0\} \subset \ncmonoid$ is called the support of $p$ and its degree $\deg(p)$ is the maximum degree of the monomials in its support: $\deg(p) = \max\{ |w| : w \in \operatorname{supp}(p)\}$. We denote $\ncalgebra_d$ the set of polynomials of degree at most $d$. We may  view $ \ncalgebra $ and $\ncalgebra_d$ as $ \kk $-vector spaces with bases given by $\ncmonoid $ and $\ncmonoid_d $, respectively.
% For a subset $P\subset \ncalgebra$, we denote by $\supp{P}=\bigcup_{p\in P}\supp{p}$ the set of all monomials appearing in the polynomials in $P$.

A monomial ordering $<$ on $ \ncmonoid $ is an order on $\ncmonoid$ with the following properties: \emph{(i)} it is a total order, i.e., for any $u, v \in \ncmonoid$, either $u \leq v$ or $v \leq u$; \emph{(ii)} it is a well order, i.e., every non-empty subset of $\ncmonoid$ has a minimal element; and \emph{(iii)} it is compatible with monomial multiplication, i.e., for any $u, v, l, r \in \ncmonoid$, if $u < v$ then $l u r < l v r$. For a fixed linear order on the letters $\xx$, an example of monomial ordering on $\ncmonoid$ is the degree-lexicographic ordering. This ordering is defined as $u < v$ if \emph{(i)} $ |u| < |v|$ or \emph{(ii)} $|u| = |v|$ and $u_i < v_i$, where $i$ is the index of the first letter in $u$ and $v$ that does not coincide. 

Given a monomial order on $\ncmonoid$, each polynomial $p \in \ncalgebra$ admits a decomposition of the form $p = \sum_{i=1}^n c_i w_i$ where the $w_i$ are ordered decreasingly. In this decomposition, $\lt{p} = c_1 w_1$ is called the leading term of $p$, $\lm{p} = w_1$ the leading monomial, and $\lc{p} = c_1$ the leading coefficient.

\subsection{Operator representations}
A $*$-representation of the non-commutative $*$-algebra $\mathcal{P}$ is a $*$-homomorphism
\begin{equation}
  \pi : \mathcal{P} \to \mathcal{B}(\mathcal{H})\,,
\end{equation}
for some Hilbert space $\mathcal{H}$, where $\mathcal{B}(\mathcal{H})$ is the algebra of bounded operators on $\mathcal{H}$. In other words, $\pi$ is a linear map that preserves multiplication and involution: $\pi(pq) = \pi(p)\pi(q)$ and $\pi(p^\inv) = \pi(p)^\adj$ for each $p, q \in \mathcal{P}$. 

Since $\mathcal{P}$ is generated by the variables in $\xx$, a $*$-representation $\pi$ is completely determined by the images $A_x = \pi(x)$ of the alphabet $\xx$. We can thus identify a $*$-representation with an operator representation $A = \{A_x : x \in \xx\}$ of the variables in $\xx$ that is compatible with the involution: $A_x^\adj = A_{x^\inv}$ for each $x \in \xx$. The operator representation $A$ of the alphabet $\xx$ extends to a $*$-representation of the whole algebra $\ncalgebra$ by associating to each polynomial $p \in \ncalgebra$ the operator $p(A) \in \mathcal{B}(\mathcal{H})$ obtained by substituting each variable $x$ with the corresponding operator $A_x$ and the empty word $\id$ with the identity operator $\mathds 1$. If $p$ is a Hermitian polynomial, then $p(A)$ is a self-adjoint operator and $\langle \psi, p(A) \psi \rangle$ is a real number for any state $\ket{\psi} \in \mathcal{H}$.

\subsection{NCPO formulation}
\label{subsec:ncpo}
A non-commutative polynomial optimization (NCPO) problem is specified by: \emph{(i)} the alphabet $\xx$, the involution on $\xx$, and the field $\kk$, defining the free $*$-algebra $\ncalgebra$ of polyomials; \emph{(ii)} a Hermitian polynomial $p \in \ncalgebra$ corresponding to the objective function; \emph{(iii)} a finite set of polynomials $R \subset \ncalgebra$ representing equality constraints; and \emph{(iv)} a finite set of Hermitian polynomials $Q \subset \ncalgebra$ representing inequality constraints.  The NCPO problem is then \cite{pironio2010convergent}
\begin{tcolorbox}[title=NCPO problem, colframe=gray!60, colback=white, coltitle=black, colbacktitle=gray!60]
  \begin{align}\label{eq:ncpo}
    p^{\mathrm{opt}} = \inf_{A} \inf_{||\psi||=1} \langle \psi,p(A)\psi\rangle \quad \text{s.t. } r(A)=0,\,\forall r\in R, \quad q(A)\geq 0,\, \forall q\in Q
  \end{align}
  where the optimization is over all operator representations $A$ of the variables in $\xx$ and all states (i.e., vectors) $\ket{\psi}$ on the Hilbert space $\mathcal H$ on which $A$ acts.
\end{tcolorbox}

Note that we will assume that the set of equality constraints $R$ is closed under involution, i.e., $r^\inv \in R$ for each $r \in R$. Indeed, since $r(A) = 0$ implies $r^\inv(A) =  \left(r(A)\right)^\inv = 0$ for any operator representation $A$, we can always extend $R$ to be closed under involution without changing the NCPO.

More generally, we could also add constraints of the form $\langle \psi, s(A) \psi \rangle \geq 0$ for some Hermitian polynomials $s$ and $t(A)|\psi\rangle = 0$ for some polynomias $t$. Such constraints are easily dealed with (see \cite{pironio2010convergent}) and the results below are straightforwardly extended to this more general setting. In the following, we do not consider these constraints explicitly for the sake of simplicity.

\subsection{Moment matrices and SDP relaxations}
\label{subsec:sdp_relaxations}

A standard way to relax the NCPO problem is to optimize over linear functionals on the free \(*\)-algebra. Indeed, any feasible operator representation \(A\) and unit vector \(\psi\) define a linear functional
\begin{equation}
L:\ncalgebra\to\kk,\qquad L(f)=\langle \psi,f(A)\psi\rangle.
\end{equation}
This functional satisfies the following properties. It is normalized: \(L(\id)=1\);
 it vanishes on any polynomial obtained by left and right multiplication of the equality constraints with any other polynomials:
    \begin{equation}
    L(s r t)=0 \qquad \forall r\in R,\ \forall s,t\in \ncalgebra;
    \end{equation}
it is nonnegative on all Hermitian squares localized by the inequality constraints:
    \begin{equation}\label{eq:local_positivity}
    L(s^\inv q s)\ge 0 \qquad \forall q\in \{\id\}\cup Q,\ \forall s\in \ncalgebra.
    \end{equation}
This leads to the following infinite-dimensional relaxation of the NCPO problem:
\begin{tcolorbox}[title=Linear functional relaxation of NCPO, colframe=gray!60, colback=white, coltitle=black, colbacktitle=gray!60]
\begin{align}
\label{eq:inf_lin_relax}
\hat p^{\mathrm{opt}}=\inf_L \quad & L(p)\\
\text{s.t.}\quad
& L(\id)=1, \nonumber\\
& L(s^\inv q s)\geq 0 \qquad \forall q\in \{\id\}\cup Q, \forall s\in\ncalgebra,\nonumber\\
& L(s r t)=0 \qquad \forall r\in R,\ \forall s,t\in \ncalgebra, \nonumber
\end{align}
where the optimization is over all linear functionals \(L:\ncalgebra\to\kk\).
\end{tcolorbox}
 Any feasible pair \((A,\psi)\) of the NCPO problem yields a feasible solution to \eqref{eq:inf_lin_relax}, and therefore $\hat p^{\mathrm{opt}}\le p^{\mathrm{opt}}$. Under suitable representation assumptions (e.g. archimedeanity), this relaxation is exact \cite{navascues2008convergent,pironio2010convergent}.

By linearity, \(L\) is completely determined by its values on the monomials. We may thus identify \(L\) with the infinite sequence of moments
\begin{equation}
y_w:=L(w),\qquad w\in \ncmonoid,
\end{equation}
so that, for any polynomial \(s=\sum_w s_w w\), one has $L(s)=\sum_w s_w\, y_w$. 
Hence, \eqref{eq:inf_lin_relax} can equivalently be viewed as an optimization problem over the moments \(y_w\).

The formulation \eqref{eq:inf_lin_relax} is infinite-dimensional. To obtain a tractable relaxation, we restrict \(L\) to a finite-dimensional subspace of \(\ncalgebra\). For this, choose a finite set $B\subset\ncmonoid$ of monomials and
consider linear functionals
\begin{equation}
L:\mathcal{P}_{B^\inv B}\to \kk,
\end{equation}
restricted to the subspace $\mathcal{P}_{B^\inv B}=\Span(B^\inv B)$ where $B^\inv B = \{u^\inv v:\; u,v\in B\}$. In this way, we can relax the condition \eqref{eq:local_positivity} for $q=\id$ to the condition
\begin{equation}
    L(s^\inv s)\ge 0 \qquad \forall s\in\Span(B),
\end{equation}
and $L(s^\inv s)$ is well-defined since then $s^\inv s\in \Span(B^*B)=\mathcal{P}_{B^\inv B}$ belongs to the domain of $L$.
Similarly, for an arbitrary polynomial $q$, define $B_q\subset \ncmonoid$ as the largest subset of monomials such that
\begin{equation}
\{u^\inv w v:\; u,v\in B_q,\ w\in \supp{q}\}\subseteq B^\inv B,
\end{equation}
where for $q=\id$, we have $B_{\id}=B$. Then, the condition \eqref{eq:local_positivity} for an arbitary $q$ can be relaxed to 
\begin{equation}\label{eq:local_positivity_truncated}
    L(s^\inv q s)\ge 0 \qquad \forall s\in\Span(B_q),
\end{equation}
and again $L(s^\inv q s)$ is well-defined since then $s^\inv q s\in \mathcal{P}_{B^\inv B}$. 
Finally, the linear constraints $L(s r t)=0$ are only retained for $s r t\in \mathcal{P}_{B^\inv B}$.

We can further notice that for each $q\in \{\id\}\cup Q$, the set of conditions \eqref{eq:local_positivity_truncated} is equivalent to the positive semidefiniteness of the matrix $M(q,L)$ whose rows and colums are indexed by $B_q$ and whose entries are defined by
\begin{equation}
M_B(q,L)(u,v)=L(u^\inv q v),\qquad u,v\in B_q\,.
\end{equation}
Indeed, for any polynomial $s=\sum_{w\in B_q} s_w w\in \Span(B_q)$, one has
\begin{equation}
L(s^\inv q s)=\sum_{u,v\in B_q}\overline{s_u}s_v\, M(q,L)(u,v)\geq 0.
\end{equation}
When \(q=\id\), the matrix \(M_B(\id,L):=M_B(L)\) is called the moment matrix; for \(q\in Q\), the matrices \(M_B(q,L)\) are called the localizing matrices.

The truncated version of \eqref{eq:inf_lin_relax} associated to the subset of monomials $B\subset \ncmonoid$ (where we assume that $p\in\mathcal{P}_{B^\inv B}$ so that the objective function is well-defined) is then
\begin{tcolorbox}[title=SDP relaxation of NCPO with truncated basis $B\subset \ncmonoid$, colframe=gray!60, colback=white, coltitle=black, colbacktitle=gray!60]
\begin{align}
\label{eq:truncated_lin_relax}
\hat p_B^{\mathrm{opt}}=\inf_L \quad & L(p)\\
\text{s.t.}\quad
& L(\id)=1, \nonumber\\
& M_B(q,L)\succeq 0 \qquad \forall q\in \{\id\}\cup Q, \nonumber\\
& L(urv)=0 \qquad \forall r\in R,\ \forall u,v\in B^\inv B
\text{ such that } \supp{urv}\subseteq B^\inv B, \nonumber
\end{align}
where the optimization is over all linear functionals \(L:\mathcal{P}_{B^\inv B}\to\kk\).
\end{tcolorbox}
By construction, \eqref{eq:truncated_lin_relax} is a finite-dimensional relaxation of \eqref{eq:inf_lin_relax}, and therefore also of the original NCPO problem.
This problem is an SDP, which can equivalently be written in terms of the truncated moment sequence $y=(y_w)_{w\in B^\inv B}$, fully characterizing the linear functional $L$ on $\mathcal{P}_{B^\inv B}$.

Given a nested sequence of finite monomial sets $B_1\subseteq B_2\subseteq \cdots \subseteq \ncmonoid$, $\bigcup_{k\ge 1} B_k=\ncmonoid$, one obtains a hierarchy of SDP relaxations satisfying $\hat p_{B_1}^{\mathrm{opt}}\le \hat p_{B_2}^{\mathrm{opt}}\le \cdots \le p^{\mathrm{opt}}$. 
Under suitable assumptions, this hierarchy converges to the optimum \(p^{\mathrm{opt}}\) \cite{navascues2008convergent,pironio2010convergent}.
The usual degree-truncated relaxation hierarchy is recovered by taking $B_d=\ncmonoid_{d}$ as the set of monomials of degree at most \(d\); in that case, $B_{d,q}=\ncmonoid_{d-\lceil \deg(q)/2\rceil}$.

\section{Equality constraints and quotient reduction}\label{sec:equality_constraints}
\subsection{Motivating example}
While encoding each equality constraint $r$ as linear conditions $L(urv)=0$ on the moments is straightforward, a more efficient approach is to use these constraints to reduce the monomial basis from the outset, before constructing the moment matrix.

As an illustration, consider the following standard example, corresponding to the Bell--CHSH scenario in quantum information.

\begin{example}[Bell--CHSH]
\label{example:chsh}
This NCPO instance is defined by the alphabet $\xx=\{a_0,a_1,b_0,b_1\}$, with involution $x^\inv=x$ for each $x\in \xx$ (that is, all variables are Hermitian), the field\footnote{If all polynomials defining the NCPO have real coefficients, the problem is invariant under complex conjugation, and one may restrict w.l.o.g. to $\kk=\R$ \cite{navascues2008convergent}.} $\kk=\R$, the objective polynomial $p = a_0b_0+a_0b_1+a_1b_0-a_1b_1$, and the set of equality constraint
\begin{align}
R &= \{a_i^2-1 : i=0,1\}
         \;\cup\;
         \{b_j^2-1 : j=0,1\}
         \;\cup\;
         \{b_j a_i-a_i b_j : i,j=0,1\}.
\end{align}
\end{example}
The corresponding NCPO is
\begin{equation}\label{eq:chsh_ncpo}
\begin{aligned}
p^{\mathrm{opt}}
  = \sup_{A,B}\ \sup_{\|\psi\|=1}\ 
  &\langle \psi,\,(A_0B_0+A_0B_1+A_1B_0-A_1B_1)\psi\rangle \\
\text{s.t.}\quad
  &A_i^2=\mathds{1}\qquad (i=0,1), \\
  &B_j^2=\mathds{1}\qquad (j=0,1), \\
  &B_jA_i=A_iB_j \qquad (i,j=0,1),
\end{aligned}
\end{equation}
where \(A_i\) and \(B_j\) are Hermitian operators corresponding to the variables \(a_i\) and \(b_j\), respectively.

Consider the degree-2 SDP relaxation based on
\begin{align*}
B =\ncmonoid_2
= \{&
\id,\,
a_0,\,
a_1,\,
b_0,\,
b_1,\,
a_0^2,\,
a_0a_1,\,
a_0b_0,\,
a_0b_1,\,
a_1a_0,\,
a_1^2,\,
a_1b_0,\,
a_1b_1,\\
&
b_0a_0,\,
b_0a_1,\,
b_0^2,\,
b_0b_1,\,
b_1a_0,\,
b_1a_1,\,
b_1b_0,\,
b_1^2
\},
\end{align*}
which consists of the $21$ monomials of degree at most 2. The corresponding retained moments are indexed by $B^\inv B=\ncmonoid_4$, which contains $341$ monomials of degree at most 4. Hence the relaxation involves $341$ moments $y=\left(y_w\right)_{w\in \ncmonoid_4}$, defining a $21\times 21$ moment matrix $M(y)$. 
Since all variables are real and $M(y)$ is, $y_{u^*v} = y_{v^*u}$, only $231$ of these moments are independent.

In addition, the equality constrains generate linear conditions $L(urv)=0$ for $r\in R$ and $u,v$ such that $\supp{urv}\subseteq \ncmonoid_4$, In this example, this yields $456$ linear constraints, of which 200 are linearly independent. Thus, the degree-2 SDP relaxation involves \(231\) scalar variables, one semidefinite constraint of size \(21\times 21\), and \(200\) independent linear constraints.

A more effective formulation is obtained by realizing that the equality constraints $R$ effectively induce equivalence relations among the polynomials, since  they lead us to identify in \eqref{eq:chsh_ncpo} $A_i^2$ and $B_j^2$ with $\mathds{1}$, and $B_jA_i$ with $A_iB_j$. We can thus view them as substitution rules:
\begin{equation}\label{eq:chsh_rules}
    S = \{a_i^2 \mapsto 1 : i=0,1\} \cup \{b_j^2 \mapsto 1 : j=0,1\} \cup \{b_j a_i \mapsto a_i b_j : i,j=0,1\}.
\end{equation}
These can be applied recursively to reduce any polynomial expression to a unique linear combination of canonical monomials -- those that can no longer be simplified using the rules in $S$. For instance,
\begin{equation}
a_0^2 b_0 a_1 \xrightarrow{a_0^2 \mapsto 1} b_0 a_1 \xrightarrow{b_0 a_1 \mapsto a_1 b_0} a_1 b_0.
\end{equation} 

Applying these rules to $\ncmonoid_2$ reduces the original set of $21$ monomials to 13 canonical monomials:
\begin{equation}
[\ncmonoid_2]=\{1, a_0, a_1, b_0, b_1, a_0 a_1, a_1 a_0, b_0 b_1, b_1 b_0, a_0 b_0, a_0 b_1, a_1 b_0, a_1 b_1\}.
\end{equation}
Similarly $\ncmonoid_4$ reduces to a set $[\ncmonoid_4]$ of 41 canonical monomials. The degree-2 SDP relaxation may therefore be written directly in terms of the moments $y=(y_w)_{w\in[\ncmonoid_4]}$, with a moment matrix indexed by $[\ncmonoid_2]$. This yields a $13 \times 13$ moment matrix involving $41$ retained moments, or $31$ independent scalar variables after the symmetric character of the moment matrix is taken into account. In this formulation, there are no explicit linear constraints left: they have been absorbed into the choice of canonical monomials.

Thus, the reduced formulation involves \(31\) scalar variables and one semidefinite constraint of size \(13\times 13\), instead of \(231\) scalar variables, one semidefinite constraint of size \(21\times 21\), and \(200\) additional linear constraints. This leads to a substantial computational gain and often also improves numerical stability.

The improvement becomes even more pronounced at higher levels of the hierarchy. In the CHSH example, the number of monomials in $\ncmonoid_d$ grows exponentially with $d$ as $(4^{d+1} - 1)/{3}$. By contrast, after reduction using the rules~\eqref{eq:chsh_rules}, the number of canonical monomials grows only quadratically with $d$ as $2d^2 + 2d + 1$. 
  
\subsection{Pitfall of naive substitutions}\label{subsec:pitfalls}
This approach described above is part of the folklore  of SDP implementations for NPCO. For instance, the Python package \texttt{ncpol2sdpa}~\cite{wittek2015algorithm} allows equality constraints to be handled in two ways: either by adding linear constraints among the moments or by turning them into substitution rules. In the later case, each polynomial
\begin{equation}
  r = c_1w_1 + r'
\end{equation}
where $w_1=\lm{r}$ is the leading monomial of $r$ and $r'$ the remaining terms, is turned into the substitution rule
\begin{equation}
  w_1 \mapsto - \frac{r'}{c_1},
\end{equation}
as in the CHSH example above. In that case, the resulting rewriting system fully reduces every monomial to a unique canonical representative. However, this is not true in general, as the following examples show.

\begin{example}\label{example:abc}  $\xx = \{a,b,c\}$ and $R = \{ba - ab,\; cb - bc\}$.

  These constraints imply that $b$ commutes with both $a$ and $c$. In particular, the  monomials $cab$ and $bca$ are equivalent, since $b$ can be moved freely across $a$ and $c$. However, if one naively interpret $R$ as the substitution rules $S=\{ba \mapsto ab, cb \mapsto bc\}$, this fails to identify this equivalence, as neither neither $cab$ nor $bca$ can be reduced, since neither contains $ba$ or $cb$ as a subword.
\end{example}

\begin{example}\label{example:ab} $\xx = \{a,b\}$ and $R = \{ab - a,\; ba - b\}$.

Interpreting $R$ as the substitution rules $ab \mapsto a$ and $ba \mapsto b$ leads to ambiguity. For instance, the monomial $aba$ can be simplified in two different ways: \begin{align}
    aba &\xrightarrow{ab \mapsto a} a^2\, , \nonumber\\
    aba &\xrightarrow{ba \mapsto b} ab \xrightarrow{ab\mapsto a }a\,.\label{eq:ex3}
\end{align}
Thus the same monomial reduces to two different outcomes, depending on the order of rule application. 
\end{example}

\begin{example}\label{example:ababa}
  $\xx = \{a,b,c\}$ and $R = \{aba - c\}$.
    
    Interpreting $R$ as the substitution rule $aba \mapsto c$ again leads to ambiguity: 
    \begin{align}
    ababa = (aba)ba &\xrightarrow{aba \mapsto c} cba \nonumber \\
    ababa = ab(aba) &\xrightarrow{aba \mapsto c} abc \label{eq:ex4}
\end{align}
\end{example}

These examples show that the folklore approach of turning each equality constraint into a substitution rule may fail to identify equivalent monomials, and may even lead to ambiguous reductions. For example, the package \texttt{ncpol2sdpa} would treat the monomials \(cab\) and \(bca\) in Example~\ref{example:abc} as distinct, leading to two different moment variables \(y_{cab}\) and \(y_{bca}\), even though they should be identified. The resulting SDP is still a valid relaxation, but \emph{(i)} it is generally less efficient than it could be, since it uses more variables and larger moment matrices than needed, and \emph{(ii)} it is a loser relaxation, which may lead to a suboptimal solution.

One of the objectives of the present paper is to draw attention to the above issues and to place on a rigorous algebraic footing the idea of using equality constraints to reduce the size of SDP relaxations for NCPO. We start in the next subsection by describing the algebraic structure underlying this idea. We will later on discuss algorithmic approaches to implement it in practice in a way that avoids the pitfalls illustrated above.

\subsection{Quotient reduction}
The equality constraints \(R\) in the NCPO problem generate the two-sided \(*\)-ideal \(I_R\) in \(\ncalgebra\) defined as the set
\begin{equation}
    I_R = \idealgen{R}=\{\sum_{i} g_i r_i h_i : r_i \in R, g_i, h_i \in \ncalgebra\}.
\end{equation}
This is the smallest subset of \(\ncalgebra\) that contains \(R\) and is closed under addition, multiplication on both sides by arbitrary polynomials, and involution\footnote{We remember that we assume w.l.o.g. that $R$ is closed under involution, i.e., both $r$ and $r^*$ are in $R$.}. The ideal \(I_R\) captures all the algebraic consequences of the equality constraints \(R\). If \((A,\psi)\) is feasible for the NCPO problem, then
\begin{equation}
r(A)=0 \qquad \forall r\in R,
\end{equation}
and therefore
\begin{equation}
f(A)=0 \qquad \forall f\in I_R.
\end{equation}
In particular, the associated linear functional $L: \ncalgebra \to \kk$ vanishes on $I_R$:
\begin{equation}
    L(f)=\langle \psi,f(A)\psi\rangle=0 \qquad \forall f\in I_R.
\end{equation}

The ideal $I_R$ defines an equivalence relation on $\ncalgebra$: two polynomials $f$ and $g$ are equivalent if their difference $f-g$ lies in $I_R$, i.e., if we can write
\begin{equation}
    f = g + s \qquad \text{for some } s \in I_R.
\end{equation}
The equivalence class of a polynomial $f$ is denoted by $[f]$, and the set of all equivalence classes forms the quotient algebra $\ncalgebra/I_R$. The quotient algebra inherits the structure of a $*$-algebra from $\ncalgebra$, with addition, multiplication, scalar multiplication, and involution defined by \begin{align}
    [f] + [g] &= [f+g], \\
    [f][g] &= [fg], \\
    \alpha [f] &= [\alpha f], \\
    [f]^\inv &= [f^\inv].
\end{align}

Given an operator representation $A$, the associated linear functional $L$, since it vanishes on $I_R$, only depends on the equivalence class of a polynomial modulo $I_R$: $L(f) = L(g)$ if $[f] = [g]$. Thus, $L$ factors through the quotient algebra $\ncalgebra/I_R$.
This suggests replacing the original optimization over \(\ncalgebra\) by an optimization over the quotient \(\ncalgebra/I_R\). We thus consider linear functionals
\begin{equation}
L:\ncalgebra/I_R \to \kk.
\end{equation}

The infinite-dimensional quotient version of \eqref{eq:inf_lin_relax} is then
\begin{tcolorbox}[title=Linear functional relaxation of NCPO modulo $I_R$, colframe=gray!60, colback=white, coltitle=black, colbacktitle=gray!60]
\begin{align}
\label{eq:quotient_inf_relax}
\hat p_{/I_R}^{\mathrm{opt}}=\inf_{L}\quad & L([p])\\
\text{s.t.}\quad
& L([\id])=1, \nonumber\\
& L([s]^\inv [q][s])\ge 0
\qquad \forall q\in \{\id\}\cup Q,\ \forall [s]\in \ncalgebra/I_R, \nonumber
\end{align}
where the optimization is over all linear functionals \(L:\ncalgebra/I_R\to\kk\).
\end{tcolorbox}
This problem does not contain any explicit linear constraints, except the normalization constraint $L([\id])=1$,  since the equality constraints in $R$ have been absorbed into the quotient structure.

The quotient formulation \eqref{eq:quotient_inf_relax} on \(\ncalgebra/I_R\) is equivalent to the formulation \eqref{eq:inf_lin_relax} on \(\ncalgebra\) with explicit equality constraints. Indeed, if a linear functional \(\mathsf{L}:\ncalgebra\to\kk\) on the original space satisfies
\begin{equation}
\mathsf{L}(srt)=0 \qquad \forall r\in R,\ \forall s,t\in \ncalgebra,
\end{equation}
then \(\mathsf{L}\) vanishes on \(I_R\) and therefore induces a unique linear functional
\begin{equation}
L:\ncalgebra/I_R\to\kk,\qquad L([f]):=\mathsf{L}(f).
\end{equation}
Conversely, any linear functional \( L\) on \(\ncalgebra/I_R\) defines a linear functional by
\begin{equation}
  \mathsf{L}: \ncalgebra\to\kk,\qquad \mathsf{L}(f): L([f]),
\end{equation}
which automatically vanishes on \(I_R\), hence satisfies all equality constraints in \eqref{eq:inf_lin_relax}.
Moreover, for every \(q\in \{\id\}\cup Q\) and every \(s\in \ncalgebra\), one has
\begin{equation}  
L([s]^\inv [q][s])
=
L([s^\inv q s])
=
\mathsf{L}(s^\inv q s),
\end{equation}
so that the positivity constraints in \eqref{eq:quotient_inf_relax} follow from those in \eqref{eq:inf_lin_relax} under the map $\mathsf{L}\rightarrow L$ and conversely the positivity constraints in \eqref{eq:inf_lin_relax} follow from those in \eqref{eq:quotient_inf_relax} under the map $L\rightarrow \mathsf{L}$.
Hence the two formulations have the same feasible set and the same optimal solution.

\subsection{Finite-dimensional truncation}
As in the free-algebra formulation, the quotient formulation \eqref{eq:quotient_inf_relax} is generally infinite-dimensional and can be truncated to obtain a tractable SDP relaxation. Specifically, one may restrict the linear functional \(L\) to a finite-dimensional subspace of \(\ncalgebra/I_R\), and only retain those positivity constraints that can be expressed within that truncation space, which again can be expressed as the positive semidefiniteness of a finite number of moment and localizing matrices. This leads to a hierarchy of finite-dimensional SDP relaxations, which can be viewed as quotient-based analogs of the usual free-algebra-based relaxations. The quotient formulation inherits the convergence properties of the free-algebra formulation, and therefore also converges to the optimum \(p^{\mathrm{opt}}\) under the same assumptions.

In particular, every finite SDP relaxation \eqref{eq:truncated_lin_relax} in the free-algebra formulation gives rise to a corresponding quotient-based SDP relaxation. Let \(B\subseteq \ncmonoid\) be the finite set of monomials defining the relaxation \eqref{eq:truncated_lin_relax}. Recall that the truncated functional \(L\) is then defined on the subspace
$\ncalgebra_{B^\inv B}:=\Span(B^\inv B)$
and is characterized by the finite sequence of moments $y=(y_w)_{w\in B^\inv B}$

Passing to the quotient, the corresponding truncated functional \(L\) is defined on the quotient space
$[\ncalgebra_{B^\inv B}]
\subseteq \ncalgebra/I_R$, generated by the set
$
[B^\inv B]:=\{[w]:\, w\in B^\inv B\}.
$
Note that the set \([B^\inv B]\) typically does not form a basis of \([\ncalgebra_{B^\inv B}]\), since the classes \([w]\), \(w\in B^\inv B\), may be linearly dependent in \(\ncalgebra/I_R\). We therefore choose a basis
$
W_{B^\inv B}
$
of \([\ncalgebra_{B^\inv B}]\). The quotient functional \(L\) is then characterized by the retained quotient moments
$
y_{[w]}:=L([w]),
[w]\in W_{B^\inv B}.
$
Typically,
$
|W_{B^\inv B}|<|B^\inv B|,
$
so that the quotient formulation involves fewer moment variables than the original one.

Similarly, for each \(q\in \{\id\}\cup Q\), consider the subspace
$
[\ncalgebra_{B_q}]
\subseteq [\ncalgebra_{B^\inv B}],
$ 
where \(B_q\) is the index set used in the free-algebra formulation. Since
$
[\ncalgebra_{B_q}] \subseteq [\ncalgebra_{B^\inv B}],
$
we may choose the basis \(W_{B^\inv B}\) so that it extends a basis
$
W_{B_q}
$
of \([\ncalgebra_{B_q}]\). In particular, we may assume
$
W_{B_q}\subseteq W_{B^\inv B}.
$

We then define the quotient moment and localizing matrices by
\begin{equation}
M_{B/I_R}(q,L)([u],[v])
=
L([u]^\inv [q][v]),
\qquad
[u],[v]\in W_{B_q}.
\end{equation}

The quotient-based version of \eqref{eq:truncated_lin_relax} is therefore
\begin{tcolorbox}[title={SDP relaxation of NCPO with truncated basis $B\subset \ncmonoid$, modulo $I_R$}, colframe=gray!60, colback=white, coltitle=black, colbacktitle=gray!60]
\begin{align}
\label{eq:quotient_truncated_relax}
\hat p_{B/I_R}^{\mathrm{opt}}=\inf_{L}\quad & L([p])\\
\text{s.t.}\quad
& L([\id])=1, \nonumber\\
& M_{B/I_R}(q,L)\succeq 0
\qquad \forall q\in \{\id\}\cup Q, \nonumber
\end{align}
where the optimization is over all linear functionals
$
L:[\ncalgebra_{B^\inv B}]\to \kk.
$
\end{tcolorbox}
This relaxation can equivalently be written as an SDP over the quotient moments
$
y=(y_{[w]})_{[w]\in W_{B^\inv B}}.
$
By construction, it contains no explicit linear constraints associated with the equalities \(R\), since these have been absorbed into the quotient structure.

It is easy to see that the quotient-based relaxation is at least as strong as the original truncated relaxation, i.e., that
\begin{equation}\label{eq:relaxation_comparison}
\hat p_B^{\mathrm{opt}} \le \hat p_{B/I_R}^{\mathrm{opt}}.
\end{equation}
Indeed, any feasible quotient functional
$
L:[\ncalgebra_{B^\inv B}] \to \kk
$
induces a pullback linear functional $
\mathsf{L}:\ncalgebra_{B^\inv B}\to \kk,
\mathsf{L}(f):= L([f]).
$
This pullback \(\mathsf{L}\) is feasible for the original truncated relaxation \eqref{eq:truncated_lin_relax}: the normalization is preserved, the positivity constraints follow from the positive semidefiniteness of the quotient moment and localizing matrices, and the equality constraints hold automatically since every polynomial in the ideal \(I_R\) has zero class in \(\ncalgebra/I_R\). Moreover,
$
\mathsf{L}(p)=L([p]).
$
Hence every feasible solution of the quotient-based relaxation yields a feasible solution of the original truncated relaxation with the same objective value, from which \eqref{eq:relaxation_comparison} follows.

The relation between the two formulations becomes particularly transparent at the matrix level. Let \(q\in \{\id\}\cup Q\), and remind that $W_{B_q}=\{[w]_1,\dots,[w]_r\}$ is a basis of \([\ncalgebra_{B_q}]\). For each \(u\in B_q\), its class \([u]\in [\ncalgebra_{B_q}]\) admits a unique expansion
\begin{equation}
[u]=\sum_{i=1}^r Z_q(i,u)\,[w]_i.
\end{equation}
This defines a matrix
\begin{equation}
Z_q\in \kk^{r\times |B_q|},
\end{equation}
whose \(u\)-th column is the coordinate vector of \([u]\) in the basis \(W_{B_q}\).

Let \(L\) be a quotient feasible solution, and let \(\mathsf{L}(f)=L([f])\) be its pullback to \(\ncalgebra_{B^\inv B}\). Then, for any \(u,v\in B_q\),
\begin{align*}
M_B(q,\mathsf{L})(u,v)
&= \mathsf{L}(u^\inv q v) \\
&= L([u]^\inv [q][v]) \\
&= L\!\left(
\Big(\sum_{i=1}^r \overline{Z_q(i,u)}\,[w]_i^\inv\Big)
[q]
\Big(\sum_{j=1}^r Z_q(j,v)\,[w]_j\Big)
\right) \\
&= \sum_{i,j=1}^r \overline{Z_q(i,u)}\, Z_q(j,v)\,
M_{B/I_R}(q,L)([w]_i,[w]_j).
\end{align*}
Hence,
\begin{equation}
M_B(q,\mathsf{L})= Z_q^\ast\, M_{B/I_R}(q,L)\, Z_q.
\end{equation}
Therefore, the positivity of the moment matrix \(M_B(q,\mathsf{L})\) follows from the positivity of the smaller matrix \(M_{B/I_R}(q,L)\) because of the above factorization.

If the basis \(W_{B_q}\) is chosen as the classes of a subset of monomials in \(B_q\), say
\begin{equation}
W_{B_q}=\{[u]:\, u\in R_q\}
\qquad
\text{for some } R_q\subseteq B_q,
\end{equation}
then, after reordering the indices so that the monomials in \(R_q\) come first, the matrix \(Z_q\) takes the block form
\begin{equation}
Z_q=
\begin{pmatrix}
I & Z'_q
\end{pmatrix},
\end{equation}
and the relation above becomes
\begin{equation}
M_B(q,\mathsf{L})=
\begin{pmatrix}
M' & M'Z'_q\\
{Z'_q}  ^\ast M' & {Z'_q}^\ast M' Z'_q
\end{pmatrix},
\end{equation}
where \(M'=M_{B/I_R}(q,L)\). In this case, the quotient matrix is literally a principal submatrix of the original one, and all remaining rows and columns are linear combinations of those indexed by the chosen representatives. Quotient reduction thus amounts to removing redundant rows and columns from the moment and localizing matrices.

At first sight, one might expect the quotient-based relaxation \eqref{eq:quotient_truncated_relax} and the free-algebra relaxation \eqref{eq:truncated_lin_relax} to be equivalent, since both encode the equalities \(R\). However, the two formulations differ in a crucial way.

In the free-algebra formulation, the equality constraints are imposed only through the truncated relations $\mathsf{L}(urv)=0$ for $\supp{urv}\subseteq B^\inv B$. Letting
\begin{equation}
J_B:=\Span\{urv:\ r\in R,\ \supp{urv}\subseteq B^\inv B\}\subseteq \ncalgebra_{B^\inv B},
\end{equation}
the feasible truncated functionals are exactly those with \(\mathsf{L}|_{J_B}=0\), i.e., they factor through \(\ncalgebra_{B^\inv B}/J_B\). As above, one may project onto this subspace, removing the explicit equality constraints and compressing the moment and localizing matrices via factorizations of the form \(M_B(q,\mathsf{L})=Z_q^\ast M'_q Z_q\) as above.

By contrast, the quotient formulation works modulo the full ideal \(I_R\) generated by \(R\). The quotient space $[\ncalgebra_{B^\inv B}]  \cong \ncalgebra_{B^\inv B}/I_R=\ncalgebra_{B^\inv B}/(I_R\cap \ncalgebra_{B^\inv B})$
identifies two truncated polynomials whenever they differ by an element of \(I_R\), even if this identification cannot be derived using only polynomials supported in \(B^\inv B\). Thus, we have
\begin{equation}
J_B \subseteq I_R\cap \ncalgebra_{B^\inv B}
\end{equation}
and the inclusion can possibly be strict. Since the quotient \(\ncalgebra_{B^\inv B}/(I_R\cap \ncalgebra_{B^\inv B})\) is in general smaller than \(\ncalgebra_{B^\inv B}/J_B\), the former formulation may identify as equal more moment variables and may yield a tighter relaxation. It follows that \(\hat p_B^{\mathrm{opt}} \le \hat p_{B/I_R}^{\mathrm{opt}}\) may hold with strict inequality. Equivalence is guaranteed to hold when \(J_B = I_R\cap \ncalgebra_{B^\inv B}\), i.e., when the truncated relations $\mathsf{L}(urv)=0$ already capture all consequences of the ideal on the truncated space.

In summary, the quotient approach that we presented is beneficial for several reasons. First, it reduces the number of optimization variables and the size of the moment and localizing matrices, leading to smaller SDPs. Second, it makes the relevant compression spaces intrinsic and purely algebraic, since they are determined by the quotient structure rather than by an explicit analysis of linear dependencies among moment variables. Third, compared with the naive truncated free-algebra formulation, it can yield a tighter relaxation by incorporating algebraic consequences of the equality constraints that are not directly visible within the truncation. Fourth, it can simplify the construction of the SDP relaxations in practice, since monomials and polynomials in \(\ncalgebra/I_R\) may admit significantly more compact representations than in the ambient free algebra \(\ncalgebra\).

\subsection{Representations of the quotient algebra}
Having established formally the quotient formulation, we can now go back to the folklore approach of turning each relation in \(R\) into a substitution rule, and understand it as an attempt to represent the quotient algebra \(\ncalgebra/I_R\) by choosing a representative for each class \([p]\in \ncalgebra/I_R\).
However, as the examples in subsection~\ref{subsec:pitfalls} show, it is not sufficient to treat each relation in \(R\) as an independent substitution rule. In order to rewrite correctly and consistently modulo the ideal, the rewriting system must capture not only the original relations in \(R\), but also all of their algebraic consequences inside \(\idealgen{R}\).

This phenomenon already appears in Example~\ref{example:abc}. There, the naive substitution rules $ba\mapsto ab$, $cb\mapsto bc$ derived from the two original relations $ba-ab$ and $cb-bc$ fail to identify the equivalent monomials \(cab\) and \(bca\) and are thus not sufficient to obtain normal forms. The missing relation arises from a combination of the original constraints:
\begin{equation}
\label{eq:cab_bca}
-c(ba-ab)+(cb-bc)a
= (-\cancel{cba}+cab)+(\cancel{cba}-bca)
= cab-bca.
\end{equation}
Thus the ideal $\idealgen{R}$ contains the additional relation \(cab-bca\), corresponding to the new substitution rule $cab \mapsto bca$. 
This is precisely the transformation needed to resolve the issue observed in~\eqref{example:abc}.

Similarly, in Example~\ref{example:ab}, the two relations $ab-a$ and $ba-b$
imply that the following combination also lies in the ideal:
\begin{equation}
-(ab-a)a+a(ba-b)+(ab-a)
=(-\cancel{aba}+a^2)+(\cancel{aba}-\cancel{ab})+(\cancel{ab}-a)
=a^2-a,
\end{equation}
which yields the additional substitution rule $a^2\mapsto a$, which resolves the ambiguity in~\eqref{eq:ex3}. 

Likewise, in Example~\ref{example:ababa}, the single relation \(aba-c\) implies
\begin{equation}
-(aba-c)ba+ab(aba-c)= (-\cancel{ababa}+cba)+(\cancel{ababa}-abc) = cba-abc,
\end{equation}
and produces the additional rule \(cba\mapsto abc\), thereby resolving the ambiguity in~\eqref{eq:ex4}.

\subsubsection{Rewriting rules and Gröbner bases}\label{sec:groebner}
A systematic way to generate a complete set of rewriting rules, yielding unique
normal forms for the equivalence classes in $\ncalgebra/I_R$, is through the
construction of a \emph{Gröbner basis}. 

We explain first how a set of polynomials $G \subset \ncalgebra$ formally induces a
rewriting system. We recall the notation $\lt{p}$, $\lm{p}$, and $\lc{p}$ for, respectively, the leading term, leading monomial, and leading coefficient of $p$ (according to some chosen monomial well-order). After rescaling the elements of $G$, we may assume that they are monic, i.e., that $\lc{g}=1$ for all $g\in G$.

Each polynomial $g \in G$ can be interpreted as defining a rewriting rule $\lm{g} \mapsto \lm{g} - g$. Thus, whenever a monomial $w \in \ncmonoid$ contains $\lm{g}$ as a subword,
say $w = u\,\lm{g}\,v$, we may rewrite $w \mapsto u\,(\lm{g}-g)\,v$.
Equivalently, for a polynomial $p$, a reduction step replaces an occurrence of
$u\,\lm{g}\,v$ in $p$ by $u(\lm{g}-g)v$. 
Formally, the rewriting system associated to $G$ is the following.
% \algorithmbox{df}{
\begin{algorithm}[H]
\caption{Rewriting or division algorithm relative to $G$}
\label{alg:nc-division}
\begin{algorithmic}[1]
\REQUIRE A polynomial $p \in \ncalgebra$ and a finite set
$G=\{g_1,\ldots,g_s\}\subset \ncalgebra$ of monic polynomials.
\STATE $q \gets p$ and $r \gets 0$
\WHILE{$q \neq 0$}
    \IF{there exist $g_i\in G$ and $u,v\in\ncmonoid$ such that
    $\lm{q}=u\,\lm{g_i}\,v$}
        \STATE $q \gets q - \lc{q}\,u\,g_i\,v$
    \ELSE
        \STATE $r \gets r + \lt{q}$
        \STATE $q \gets q - \lt{q}$
    \ENDIF
\ENDWHILE
\RETURN $r$, the remainder of $p$ reduced with respect to $G$
\end{algorithmic}
\end{algorithm}
The output is a polynomial $r$ that is reduced with respect to $G$: no monomial appearing in $r$
contains any of the leading monomials $\lm{g_i}$ as a subword. 

The above rewriting process has the following two properties. First, it is terminating. This follows from the fact that, since $\lm{g_i}-g_i$ only contains
monomials strictly smaller than $\lm{g_i}$, every reduction step strictly
decreases the leading monomial of the polynomial being reduced. Since the monomial order is a well-ordering (i.e., every strictly descending sequence of monomials is finite), the process must eventually terminate. Second, the output $r$ is equivalent to the input $p$ modulo the ideal generated by $G$: one can write
$
p = r + \sum_{i} u_i g_i v_i
$ for some $u_i,v_i\in\ncmonoid$. Thus, the rewriting process produces a representative of the class $[p]$ in the quotient $\ncalgebra/\idealgen{G}$. In general,
however, the result may depend on the choices made during reduction. If a monomial contains several reducible subwords, different choices may produce different remainders $r$.

A Gröbner basis $G$ for the ideal \(I_R=\idealgen{R}\) is a special set of polynomials with the following properties
\begin{enumerate}
\item \(G\) generates the ideal: \(\idealgen{G} = \idealgen{R}\).
\item The associated rewriting system is confluent, meaning that for every polynomial \(p\), the output \(r\) obtained by applying the rewriting algorithm relative to \(G\) is the same regardless of the choices made during reduction.
\end{enumerate}
The first condition ensures that the output $r$ is in the same class as $p$ modulo $I_R$. The second condition ensures that two polynomials $p$, $q$ equivalent modulo $I_R$ (i.e., such that $p-q\in I_R$) will be reduced to the same output $r$. Indeed, by confluence their reduction paths must eventually meet; since reduction terminates, the endpoint is unique.
It follows that the rewriting algorithm relative to a Gröbner basis \(G\) produces a unique normal form $\gnf(p)$ for each class $[p]$ in \(\ncalgebra/I_R\).

There exist general algorithms for constructing Gröbner bases from a given generating set \(R\). In the noncommutative setting, the standard tools are the noncommutative Buchberger algorithm~\cite{mora1994introduction} and the Knuth--Bendix completion procedure~\cite{knuth1970simple}. In principle, this provides a general method for quotient reduction in NCPO: once a Gröbner basis is known, one can reduce every polynomial to its normal form modulo \(I_R\), and use these normal forms to build the SDP relaxations in the quotient formulation.

Note that the rewriting algorithm \ref{alg:nc-division} depends on the chosen monomial order, since the leading monomial of each polynomial in \(G\) depends on this choice. Therefore, the Gröbner basis itself also depends on the monomial order.

In the Bell-CHSH example, the initial set of constraints \(R\) is already a Gröbner basis (for the deg-lex ordering). This explains why the substitution rules~\eqref{eq:chsh_rules} are sufficient to reduce every monomial to a unique canonical representative.

By contrast, in Example~\ref{example:abc}, the Gröbner basis of the ideal \(\idealgen{R}\) is 
\begin{equation}
G
=
R\cup \{ca^k b-bca^k : k=1,2,\dots\}.
\end{equation}
Interpreting each element of \(G\) as a rewriting rule,
\begin{equation}
ba\mapsto ab,
\qquad
cb\mapsto bc,
\qquad
ca^k b \mapsto bca^k \quad (k\ge 1),
\end{equation}
yields a complete rewriting system that correctly identifies equivalent monomials and produces unique normal forms. Likewise, in Example~\ref{example:ab}, the Gröbner basis is $G=\{ab-a,\; ba-b,\; a^2-a,\; b^2-b\}$, with the corresponding rewriting rules $ab\mapsto a$, $ba\mapsto b$, $a^2\mapsto a$, $b^2\mapsto b$. In Example~\ref{example:ababa}, the Gröbner basis is \(G=\{aba-c,\; cba-abc\}\), with the corresponding rewriting rules \(aba\mapsto c\) and \(cba\mapsto abc\).

Although Gröbner bases provide a conceptually general solution, they also have significant practical limitations. Their computation may be prohibitively expensive, or even fail to terminate, since computing a complete Gröbner basis is in general equivalent to solving the word problem for the ideal \(I_R\), which is undecidable. Even if a Gröbner basis can be determined, it may contain a large number of elements or even infinitely many, as in Example~\ref{example:abc}. In practice, one often truncates the computation to bounded degree, but this can still be costly.

Even when a Gröbner basis is available, it need not provide the most efficient concrete representation of the quotient algebra. In Example~\ref{example:abc}, the relations simply express that \(b\) commutes with both \(a\) and \(c\). Rather than applying recursively the infinitely many Gröbner rewriting rules, one may exploit this structure directly: the position of \(b\) in a word is irrelevant, so a canonical representative can be obtained by moving all occurrences of \(b\) to a fixed position and recording their multiplicity. For instance, the monomial
\begin{equation}
w=acbcabcabcbaba
\end{equation}
contains five occurrences of \(b\), and is equivalent modulo the relations to
\begin{equation}
ac^2acaca^2\, b^5.
\end{equation}
Thus the pair
\begin{equation}
(ac^2acaca^2,\; b^5)
\end{equation}
provides a more compact and efficient description of the corresponding quotient class than repeated application of the full Gröbner rewriting system.

In summary, Gröbner bases provide, when they can be computed, a general mechanism for constructing canonical representatives in quotient algebras, and hence for implementing quotient-based SDP relaxations. However, they are often too expensive to use directly, and in many structured situations there exist alternative quotient representations that are far more efficient. In Section~\ref{sec:pcpo}, we focus precisely on such a structured case, corresponding to partial commutation relations among the variables, where the quotient algebra admits a simple description.

\subsubsection{Binomial relations}\label{sec:binomial_relations}
In the Examples~\ref{example:chsh} to \ref{example:ababa}, the set $R$ consists of pure binomial relations of the form 
\begin{equation}
  R = \{u_1 - v_1, \ldots, u_m - v_m\},
\end{equation}
where each $u_i-v_i$ is a difference of two monomials $u_i, v_i \in \ncmonoid$. 
In this case, the algebraic analysis of the quotient $\ncalgebra/\idealgen{R}$ simplifies significantly by descending to the monoidal level, since the relations in $R$ act only on the basis monomials rather than arbitrary linear combinations of them.

More precisely, let 
\begin{equation}
  \ncmonoid_R = \langle X \mid R \rangle = \langle X \mid u_1=v_1,\ldots,u_m=v_m\rangle
\end{equation}
be the finitely presented monoid defined by the generators $X$ and the relations $R$. 
That is, $\ncmonoid_R$ is the set of equivalence classes of words in the alphabet $X$, where two words are identified if one can be transformed into the other by a finite sequence of replacements of some occurrence of $u_i$ by $v_i$, or of $v_i$ by $u_i$, inside a larger word.
Formally, this identification defines a ``congruence", i.e., an equivalence relation that respects the multiplication operation: if two words are $u$ and $u'$ are equivalent and two words $v$ and $v'$ are equivalent, then the concatenated words $uv$ and $u'v'$ are also equivalent. This ensures that $\ncmonoid_R$ is a monoid with well-defined product
\begin{equation}
  [u]\cdot [v] \coloneqq [uv],
\end{equation} 
where $[w]$ denotes the equivalence class of a word $w\in \ncmonoid$.

Since $R$ is closed under involution, the involution on $\ncmonoid$ also descends to $\ncmonoid_R$, with $[w]^\inv \coloneqq [w^\inv]$. This is well-defined: if two words represent the same element of $\ncmonoid_R$, then their involutions also represent the same element.

The polynomial quotient algebra $\ncalgebra/I_R$ can then be identified with the
\emph{monoid algebra} $\kk[\ncmonoid_R]$, which consists of all finite
$\kk$-linear combinations of equivalence classes of words:
\begin{equation}\label{eq:quotient_monoid_algebra}
  \ncalgebra/I_R \cong \kk[\ncmonoid_R]\,.
\end{equation}
In other words, first taking linear combinations of words and then quotienting
by the ideal $I_R$ gives the same result as first quotienting the free monoid
$\ncmonoid$ by the word relations $u_i=v_i$, and then taking linear combinations
of the resulting equivalence classes.

To see this, let $\pi:\ncmonoid\to\ncmonoid_R$ be the map sending a word $w$ to its equivalence class $[w]$. We extend this
linearly to a map
\begin{equation}
\Psi:\ncalgebra\to\kk[\ncmonoid_R],
\qquad
\Psi\left(\sum_w c_w w\right)=\sum_w c_w [w].
\end{equation}Since the product of words is sent to the product of their equivalence classes,
and since $\Psi$ is compatible with the involution, $\Psi$ is a surjective
$*$-homomorphism.

The first isomorphism theorem for algebras states that $\ncalgebra/\ker(\Psi)\cong \text{im}(\Psi)=\kk[\ncmonoid_R]$, which simply says that quotienting by the relation ``having the same image under $\Psi$'' gives precisely the image of $\Psi$.
Thus, establishing the desired isomorphism $\ncalgebra/I_R\cong \kk[\ncmonoid_R]$ reduces to showing that $\ker(\Psi)=I_R$.

For this, first notice that every generator
of $I_R$ is of the form $a(u_i-v_i)b$ for some words $a,b\in\ncmonoid$. Since the words $a u_i b$ and $a v_i b$ represent the same element in $\ncmonoid_R$, we have $\Psi(a u_i b-a v_i b) = [a u_i b]-[a v_i b] =0 $. Thus, $I_R\subseteq \ker(\Psi)$.

Conversely, let $p=\sum_w p_w w\in \ker(\Psi)$. The condition $\Psi(p)=0$ means that, after replacing each word by its equivalence class, the coefficients must cancel class by class. Equivalently, for
every equivalence class $C\subseteq\ncmonoid$, we must have $\sum_{w\in C} p_w=0$.
Fix one such class $C$ and choose a representative word $v_C\in C$. The part of
$p$ supported on $C$ can be written as
\begin{equation}
\sum_{w\in C} p_w w
=
\sum_{w\in C} p_w(w-v_C)
+
\left(\sum_{w\in C}p_w\right)v_C.
\end{equation}
The second term vanishes by our previous observation. 
Hence
\begin{equation}
\sum_{w\in C} p_w w
=
\sum_{w\in C} p_w(w-v_C).
\end{equation}
Since every word $w \in C$ is equivalent to $v_C$ in $\ncmonoid_R$, $w$ can be transformed into $v_C$ by a finite sequence of substitutions $u_i \leftrightarrow v_i$ inside larger words. Each such substitution step corresponds to adding an element of the form $a(u_i - v_i)b \in I_R$. It follows that each difference $w-v_C$ belongs to $I_R$. Summing over all classes $C$, we find $p \in I_R$, so $\ker(\Psi) \subseteq I_R$.

The isomorphism~\eqref{eq:quotient_monoid_algebra} identifies $\ncalgebra/I_R$ as the vector space spanned by the monoid elements of $\ncmonoid_R$. Crucially, quotienting by $I_R$ in $\ncalgebra$ does not create additional linear relations between distinct word-classes; it only merges words that represent the same element of the monoid $\ncmonoid_R$. Consequently, the classes $[w] \in \ncmonoid_R$ form a natural $\kk$-basis for the quotient algebra.

For the implementation of the quotient-based SDP relaxations, a polynomial in the quotient algebra can thus be represented as a finite coefficient vector indexed by elements of $\ncmonoid_R$. Consequently, the
quotient moment sequences and matrices are indexed by elements of
$\ncmonoid_R$ rather than by free words in $\ncmonoid$. Any concrete
representation of the elements of $\ncmonoid_R$ may be used, provided that it
supports equality testing, multiplication, and the involution.

The multiplication and involution of quotient polynomials are inherited directly
from the quotient monoid structure: if
\begin{equation}
  [p]=\sum_{[v]} p_{[v]} [v],
  \qquad
  [q]=\sum_{[w]} q_{[w]} [w],
  \qquad [v],[w]\in\ncmonoid_R,
\end{equation}
then
\begin{equation}
  [pq]=\sum_{[v],[w]}p_{[v]}q_{[w]}\,[v][w],
\end{equation}
and
\begin{equation}
  [p]^\inv=\sum_{[v]} \overline{p_{[v]}}\,[v]^\inv.
\end{equation}
Thus implementing the quotient algebra reduces to three operations on the
quotient monoid: representing its elements, multiplying them, and computing the
involution. The surrounding linear-algebraic layer is unchanged, except that its
basis is indexed by $\ncmonoid_R$ instead of $\ncmonoid$.

\subsection{Hybrid approaches}\label{sec:hybrid_approaches}
As we already mentioned, obtaining a concrete representation of the quotient algebra $\ncalgebra/I_R$ for general equality constraints $R$ can be difficult in general. However, in many cases of interest, the set of relations $R$ may admit a natural decomposition $R= R_0 \cup R_1$ into subsets that can be treated differently. For instance, $R_0$ may admit a simple concrete representation of the quotient algebra $\ncalgebra/I_{R_0}$, while $R_1$ may not. In such cases, it can be advantageous to use a hybrid approach where we treat the former by quotient reduction and the latter by explicit linear constraints in the SDP relaxation. The corresponding quotient-based SDP relaxations are then of the form 
\begin{tcolorbox}[title={SDP relaxation of NCPO with truncated basis \(B\subseteq \ncmonoid\), modulo $I_{R_0}$}, colframe=gray!60, colback=white, coltitle=black, colbacktitle=gray!60]
\begin{align}
\label{eq:quotient_truncated_relax_hybrid}
\hat p_{B/I_{R_0}}^{\mathrm{opt}}=\inf_{L}\quad & L([p])\\
\text{s.t.}\quad
& L([\id])=1, \nonumber\\& M_{B/I_{R_0}}(q,L)\succeq 0
\qquad \forall q\in \{\id\}\cup Q, \nonumber\\
& L([s] [r] [t])=0
\qquad \forall [r]\in [R_1],\ \forall [s],[t]\in W_{B^\inv B}
\text{ s.t.} \supp{[s][r][t]}\subseteq W_{B^\inv B}, \nonumber
\end{align}
where the optimization is over all linear functionals
$L:[\ncalgebra_{B^\inv B}]\to \kk$.
\end{tcolorbox} 
Here, $[R_1]=\{[r]:\, r\in R_1\}$ is the image of $R_1$ in the quotient algebra $\ncalgebra/I_{R_0}$. 

\subsection{CPO as a special case of NCPO with quotient reduction}
\label{subsec:cpo_special_case}

A basic example that brings together the ideas developed above---quotient reduction, the monoid-level treatment of binomial relations, and the possibility of combining quotient reduction with additional polynomial equalities---is commutative polynomial optimization (CPO). Indeed, CPO may be viewed as a special case of NCPO in which the equality constraints include all pairwise commutation relations among the variables. In this way, the standard Lasserre hierarchy for CPO appears as the fully commutative quotient of the quotient-based NPA hierarchy for NCPO.

More precisely, a CPO problem is specified by $n$ real commuting variables $x=(x_1,\ldots,x_n)$ defining the commutative polynomial algebra $\mathbb{R}[x_1,\ldots,x_n]$, together with a polynomial objective $p(x)$, a finite set $R$ of polynomial equality constraints, and a finite set $Q$ of polynomial inequality constraints. The optimization problem is
\begin{equation}
p^{\mathrm{opt}}
=
\inf_{x\in \mathbb{R}^n}
\{\, p(x):\ r(x)=0\ \forall r\in R,\ q(x)\ge 0\ \forall q\in Q \,\}.
\end{equation}

We may view this formally as an instance of the NCPO problem~\eqref{eq:ncpo} by taking
\begin{equation}
X=\{x_1,\ldots,x_n\},
\qquad
x_i^\inv=x_i \ \ \forall i,
\qquad
\kk=\mathbb{R},
\end{equation}
and regarding the same polynomials\footnote{Strictly speaking, one first replaces all polynomials $p$, $r$, $q$ by their Hermitian symmetrization if necessary, where the symmetrization of a polynomial $s$ is $(s+s^\inv)/2$. In the commutative quotient $\mathbb R\langle X\rangle/\idealgen{R_{\mathrm{com}}}$, the class of $s$ is unchanged by this symmetrization, but it makes the NCPO formulation well defined in the free algebra $\mathbb R\langle X\rangle$.}
$p$, $R$, and $Q$ as elements of the free $*$-algebra $\ncalgebra=\mathbb{R}\langle X\rangle$. To recover the commutative setting, one adds the full set of pairwise commutation relations
\begin{equation}
R_{\mathrm{com}}
=
\{x_i x_j-x_j x_i:\ 1\le i<j\le n\}
\end{equation}
to the equality constraints. Thus, the corresponding NCPO formulation has equality set $R_{\mathrm{com}}\cup R$.

We can quotient only by the commutation relations $R_{\mathrm{com}}$ and keep the remaining polynomial equalities $R$ explicitly in the SDP relaxation. The relations in $R_{\mathrm{com}}$ are binomial: each of them identifies two monomials $x_i x_j = x_j x_i$.
As explained earlier, in this situation the quotient can be described directly at the level of the underlying monoid. In the present case, the quotient monoid
\begin{equation}
\ncmonoid_{R_{\mathrm{com}}}
=
\langle x_1,\ldots,x_n \mid x_i x_j = x_j x_i,\ 1\le i < j \le n \rangle
\end{equation}
is the free commutative monoid on $n$ generators. Its elements are in bijection with multi-indices $\alpha=(\alpha_1,\ldots,\alpha_n)\in \mathbb{N}^n$ each corresponding to the commutative monomial
\begin{equation}
x^\alpha = x_1^{\alpha_1}\cdots x_n^{\alpha_n}.
\end{equation}
The product in $\ncmonoid_{R_{\mathrm{com}}}$ is simply $x^\alpha \cdot x^\beta = x^{\alpha+\beta}$ and the involution acts trivially since all variables are Hermitian: $(x^\alpha)^\inv = x^\alpha$.

Accordingly, the quotient algebra
\begin{equation}
\ncalgebra/\idealgen{R_{\mathrm{com}}} \cong \kk[\ncmonoid_{R_{\mathrm{com}}}] = \mathbb{R}[x_1,\ldots,x_n]
\end{equation}
is naturally identified with the ordinary commutative polynomial algebra $\mathbb{R}[x_1,\ldots,x_n]$. 

Let us now consider the truncated quotient relaxation at order $d$. In the free algebra, one would start from the set $B_d=\ncmonoid_d$ of words of degree at most $d$. After quotienting by $R_{\mathrm{com}}$, the corresponding quotient basis is indexed by the commutative monomials
\begin{equation}
\mathbb{N}^n_d:=\{\alpha\in \mathbb{N}^n:\ |\alpha|\le d\},
\qquad
|\alpha|:=\alpha_1+\cdots+\alpha_n.
\end{equation}
Similarly, the retained moments are indexed by $\mathbb{N}^n_{2d}:=\{\alpha\in \mathbb{N}^n:\ |\alpha|\le 2d\}$.
If we denote by
\begin{equation}
y_\alpha := L(x^\alpha),
\qquad
\alpha\in \mathbb{N}^n_{2d},
\end{equation}
the quotient moments, then the moment matrix is indexed by $\mathbb{N}^n_d$ and has entries
\begin{equation}
M_d(y)(\alpha,\beta)=y_{\alpha+\beta},
\qquad
\alpha,\beta\in \mathbb{N}^n_d.
\end{equation}
Likewise, if $q=\sum_\gamma q_\gamma x^\gamma$, the corresponding localizing matrix is indexed by $\mathbb{N}^n_{d-\lceil \deg(q)/2\rceil}$ and has entries
\begin{equation}
M_d(q,y)(\alpha,\beta)
=
\sum_\gamma q_\gamma\, y_{\alpha+\beta+\gamma}.
\end{equation}

Thus the order-$d$ quotient-based relaxation reads
\begin{align}
\label{eq:cpo_lasserre_relax}
p_d^{\mathrm{opt}}=\inf_y\quad & \sum_\alpha p_\alpha\, y_\alpha \\
\text{s.t.}\quad
& y_0=1, \nonumber\\
& M_d(y)\succeq 0, \nonumber\\
& M_d(q,y)\succeq 0 \qquad \forall q\in Q, \nonumber\\
& \sum_\gamma r_\gamma\, y_{\alpha+\beta+\gamma}=0
\qquad
\forall r=\sum_\gamma r_\gamma x^\gamma\in R,\ 
\forall \alpha,\beta\in \mathbb{N}^n_{d-\lceil \deg(r)/2\rceil}. \nonumber
\end{align}
This is exactly the standard SDP relaxation of the Lasserre hierarchy for commutative polynomial optimization.
In the next subsection, we will see that the same philosophy extends to the partially commutative setting.

\section{Partially commutative polynomial optimization}
\label{sec:pcpo}
A natural and important special case of the NCPO framework arises when some of the equality constraints encode partial commutation relations among the variables. We now specialize the general quotient approach to this setting.

\paragraph{Partial commutations and commutation graph.}
A partial commutation structure on the alphabet $\xx$ is encoded by a simple undirected graph $\graph = (\xx, E)$, called the \emph{commutation graph}, where an edge $(x_i, x_j) \in E$ indicates that $x_i$ and $x_j$ commute. The corresponding set of partial commutation relations is
\begin{equation}\label{eq:rg}
R_\graph = \{x_i x_j - x_j x_i : (x_i, x_j) \in E\}.
\end{equation}
We assume that the graph is compatible with the involution, in the sense that if $(x_i, x_j) \in E$, then $(x_i^\inv, x_j^\inv) \in E$. This ensures $R_\graph$ is closed under involution. 
In the literature on partially commutative algebras, the commutation graph $\graph$ is often called the \emph{independence} graph. By convention, one assumes that $\graph$ has no self-loops, so that no variable is specifically declared to commute with itself. This is a technical convention that does not affect the algebraic structure, since the relation $x_i x_i - x_i x_i=0$ is trivial and can be safely omitted from $R_\graph$. 

\paragraph{PCPO problem.}
A partially commutative polynomial optimization (PCPO) problem is an NCPO problem in which the equality constraints decompose as $R_\graph \cup R$, where $R_\graph$ are the commutation relations of a commutation graph $\graph$ and $R$ is an additional set of equality constraints. More explicitly, a PCPO problem is specified by \emph{(i)} the alphabet $\xx$, the involution on $\xx$, and the field $\kk$, defining the free $*$-algebra $\ncalgebra$ of polynomials; \emph{(ii)} a Hermitian polynomial $p \in \ncalgebra$ corresponding to the objective function; \emph{(iii)} a commutation graph $\graph = (\xx, E)$ encoding the partial commutation relations; \emph{(iv)} a finite set of polynomials $R \subset \ncalgebra$ representing additional equality constraints; and \emph{(v)} a finite set of Hermitian polynomials $Q \subset \ncalgebra$ representing inequality constraints.  The corresponding PCPO problem is then
\begin{tcolorbox}[title=Partially commutative polynomial optimization (PCPO), colframe=gray!60, colback=white, coltitle=black, colbacktitle=gray!60]
\begin{align}
p^{\mathrm{opt}}
=
\inf_A \inf_{\|\psi\|=1}\ 
&\langle \psi,p(A)\psi\rangle \\
\text{s.t.}\quad
&A_xA_y= A_yA_x \qquad \forall \{x,y\}\in E, \nonumber\\
&r(A)=0 \qquad \forall r\in R, \nonumber\\
&q(A)\ge 0 \qquad \forall q\in Q. \nonumber
\end{align}
\end{tcolorbox}
This represents a class of problems that interpolates naturally between fully noncommutative and fully commutative polynomial optimization. If $E=\emptyset$, one recovers a standard NCPO problem. If \(\graph\) is complete, all pairs of variables commute, and one recovers an ordinary CPO problem. The intermediate PCPO setting appears naturally in quantum information as it captures the commutation structure of operators acting on different local quantum systems (see Section~\ref{subsec:local_operators} below).

\subsection{The partially commutative monoid}
\label{subsec:pc_monoid}
To implement PCPO through the hybrid quotient approach, where the commutation relations $R_\graph$ are quotiented out and the remaining polynomial equalities $R$ are retained explicitly, we need a concrete way of working with the quotient algebra $\ncalgebra/\idealgen{R_\graph}$. Since the relations in \(R_\graph\) are binomial, the quotient can be described directly at the monoid level, as explained in subsection~\ref{sec:binomial_relations}

The monoid
\begin{equation}
\ncmonoid_\graph = \langle \xx \mid R_\graph \rangle = \langle \xx \mid x_i x_j = x_j x_i \text{ for all } (x_i, x_j) \in E \rangle
\end{equation}
is known in the literature as the \emph{partially commutative (PC) monoid} or the \emph{trace monoid}. Its mathematical foundations were established by Cartier and Foata in combinatorics \cite{cartier1969applications} and later by Mazurkiewicz for modelling concurrent computation systems \cite{mazurkiewicz1977concurrent}.  A thorough account can be found in~\cite{diekert1995book, rozenberg1997handbook, lothaire2002}.

The elements of $\ncmonoid_\graph$ are called \emph{partially commutative monomials (pc-monomials), partially commutative words (pc-words)}, or \emph{traces} in the litterature.
They are the equivalence classes of words in $\ncmonoid$ under the congruence generated by the commutation rules encoded by $\graph$. Two free words $u,v\in \ncmonoid$ belong to the same equivalence class $[u]=[v]$ if and only if there exists a finite sequence of swaps of adjacent commuting variables that transforms $u$ into $v$.

The product in $\ncmonoid_\graph$ is induced by concatenation of representatives, $[u] \cdot [v] = [uv]$. Since $R_\graph$ is invariant under the involution (the commutation graph satisfies $(x,y) \in E \Rightarrow (x^\inv, y^\inv) \in E$), the involution descends to the quotient: $[w]^\inv=[w^\inv]$.

\begin{example}\label{example:abcd} Let $X = \{a,b,c,d\}$, and
  \begin{equation}\label{ex:graph}\graph =
  \tikz[baseline=-0.5ex]{
    \node (a) at (0,0.5) {$a$};
    \node (b) at (1,0.5) {$b$};
    \node (c) at (1,-0.5) {$c$};
    \node (d) at (0,-0.5) {$d$};
    \draw (a) -- (b);
    \draw (b) -- (c);
  }\end{equation}
  so $b$ commutes with $a$ and $c$. The equivalence class of the word $w=cbabdb$ is
\begin{align}
    [cbabdb] = & \{cabbdb, cbabdb, bcabdb, cbbadb, bcbadb, bbcadb \}\,.
\end{align}
As further illustration, $[cbabdb]\cdot [cba] = [cbabdbcba] = [bcbadbbca] = [bcbadb]\cdot[bca]$ and assuming hermitian variables, $[cbabdb]^\inv = [bdbabc]$.
\end{example}

Since two words $u,v\in \ncmonoid$ belong to the same equivalence class only if their letters can be rearranged into one another by swapping adjacent commuting variables, they must have the same multiset of letters. In particular, all representatives of a pc-monomial $[w]$ have the same length or degree. This allows us to define the degree of a pc-monomial unambiguously as $|[w]| := |w|$, and we denote the set of pc-monomials of degree at most $d$ as
\begin{equation}
  \ncmonoid_{\graph,d} := \bigl\{ [w] : w \in \ncmonoid, |w| \le d \bigr\}.
\end{equation}

As explained in Subsection~\ref{sec:binomial_relations} for general binomial relations, the monoid algebra $\kk[\ncmonoid_\graph]$, whose elements are formal linear combinations of pc-monomials $\sum_{[w]\in \ncmonoid_\graph} p_{[w]} [w]$, is naturally isomorphic to the quotient algebra $\ncalgebra/\idealgen{R_\graph}$. The canonical projection $\pi: \ncalgebra \to \kk[\ncmonoid_\graph]$ acts by regrouping the coefficients of all free monomials that fall into the same equivalence class. For example, in the setting of Example~\ref{example:abcd}, a polynomial $p = 2abc +bac + 3bcbadb - 5bcabdb$ would be mapped to $[p] = (2+1)[abc]  + (3 - 5)[cabbdb] = 3[abc] - 2[cabbdb]$ in the quotient.

Thus, SDP implementations of PCPO problems simply rely on indexing moment variables and matrices with pc-monomials $[w]$ rather than free monomials $w$. All words that differ only by the commutation of variables merge into a single index, eliminating redundant constraints and variables from the relaxation, which significantly reduces the size of the SDP. 
To implement this quotient-based relaxation requires a concrete way of representing and manipulating pc-monomials. We now describe several possible representations.

\subsection{Word representatives of pc-monomials}
\label{sec:word_representatives}

A first way to work concretely with $\ncmonoid_\graph$ is to choose, for each
class $[w]\in \ncmonoid_\graph$, a distinguished word representative $v\in\ncmonoid$ with $[v]=[w]$.  Such a representative gives a normal form for pc-monomials and therefore a concrete indexing set for quotient-based SDP
relaxations.

As recalled in Subsection~\ref{sec:groebner}, one systematic way to obtain
normal forms is to reduce words with respect to a Gr\"obner basis of the
PC ideal $\idealgen{R_\graph}$. The Gröbner basis itself can be determined using the
noncommutative Buchberger algorithm, or one of its variants. This requires a monomial order
$<$ on $\ncmonoid$. Both the Gr\"obner basis and the resulting normal form
depend on this order; we denote the normal form of a word $w$ obtained in this
way by $\gnf_<(w)$.

There is also a more direct order-theoretic way to choose representatives.
Let $<$ be any partial order relation on $\ncmonoid$ such that, on every commutation class
$[w]\in\ncmonoid_\graph$, there is a unique minimal element.
We define the corresponding order-normal form by
\begin{equation}
  \nf_{<}(w)
  =
  \min_{<}\{\,v\in\ncmonoid : [v]=[w]\,\}.
\end{equation}
For this definition to make sense, the relation $<$ on $\ncmonoid$ needs not be a monomial order, as defined in Subsection~\ref{sec:notation}. When the chosen relation is a monomial order $<$, however, the order-normal form agrees with the Gr\"obner normal form.

\begin{lemma}\label{lemma:gnf_lex} Let $<$ be an admissible monomial order on $\ncmonoid$. Then
\begin{equation}
\gnf_<(w)=\nf_{<}(w)
\end{equation}
for every $w\in\ncmonoid$.
\end{lemma}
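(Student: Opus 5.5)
The plan is to prove that $\gnf_<(w)$ is the $<$-smallest element of the commutation class $[w]$, using only the two defining properties of a Gröbner basis recalled in Subsection~\ref{sec:groebner}: it generates $\idealgen{R_\graph}$, and its rewriting system is confluent. First, the minimum $\nf_<(w)$ is well defined: $<$ is a total well-order and $[w]$ is a finite set of words (all rearrangements of the same multiset of letters), so it has a unique minimal element. Let $G$ be the (reduced, monic) Gröbner basis of $\idealgen{R_\graph}$ for $<$. The relations in $R_\graph$ are binomial, so the Buchberger/Knuth--Bendix completion only ever forms differences of monomials. Hence we may take every element of $G$ to be a binomial $u-v$ with $u>v$ and $[u]=[v]$; indeed $u-v\in\idealgen{R_\graph}=\ker\Psi$ from Subsection~\ref{sec:binomial_relations} forces $[u]=[v]$.

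Next I would follow a monomial $w$ through Algorithm~\ref{alg:nc-division}. Each step replaces a subword $u$ by $v$, so the running polynomial stays a single monomial, stays in the class $[w]$, and strictly decreases. Therefore $\gnf_<(w)$ is a single word $r\in[w]$ which is irreducible, i.e.\ contains no $\lm{g}$, $g\in G$, as a subword. It remains to show $r=\nf_<(w)=:m$. Since $m$ is the minimum of $[w]$, we have $m\le r$. Both $r$ and $m$ lie in the same class, so $r-m\in\idealgen{R_\graph}=\idealgen{G}$. Confluence then says that $r$ and $m$ have the same normal form. But $r$ is already irreducible, so its normal form is $r$ itself. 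The normal form of $m$ is a word of $[m]$ that is $\le m$; by minimality of $m$ it must equal $m$. Hence $r=m$.

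An alternative route to this last point, avoiding any appeal to the structure of $G$, is the standard fact that $\{\nf\text{-irreducible words}\}$ forms a $\kk$-basis of $\ncalgebra/\idealgen{G}$. Combined with the fact from Subsection~\ref{sec:binomial_relations} that the classes $[w]$ form a basis of $\kk[\ncmonoid_\graph]$, this gives one irreducible word per class. The monomial $m$ itself reduces only to smaller words in $[m]$, which is impossible, so $m$ is irreducible and hence equals $r$.

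The main obstacle I expect is justifying that reduction of a word never produces a genuine linear combination, i.e.\ that $G$ can be taken binomial with both terms in the same class. I would handle this by observing that S-polynomials (overlap differences) of binomials $u_1-v_1$, $u_2-v_2$ are again differences of two monomials, and reductions of such differences by binomials remain so, by induction over the completion procedure. If that argument feels too implementation-dependent, the fallback is the basis argument above. It only needs $p-\gnf_<(p)\in\idealgen{G}$ and the uniqueness of normal forms: write $\gnf_<(w)=\sum c_i r_i$ with irreducible $r_i$, map to $\kk[\ncmonoid_\graph]$ to get $[w]=\sum c_i[r_i]$, and conclude that $\gnf_<(w)$ is the unique irreducible word in $[w]$, which must be its minimum.
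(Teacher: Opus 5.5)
Your proposal is correct and uses essentially the same argument as the paper. The $<$-minimal word of $[w]$ cannot be reduced, because every reduction step strictly decreases a word while staying in its class, so by uniqueness of irreducible representatives (confluence) it must coincide with $\gnf_<(w)$. Your extra step showing that $G$ consists of binomials $u-v$ with $[u]=[v]$ (or, alternatively, the basis argument via $\kk[\ncmonoid_\graph]$) usefully makes explicit something the paper's proof uses tacitly: reducing a word always yields a single word in the same class.
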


\begin{proof}
The Gröbner normal form
$\gnf_<(w)$ is the unique irreducible representative obtained by reducing $w$
with respect to $G$. Since every reduction rule associated with $G$ has the form
$\lm{g} \longmapsto \lm{g}-g$, it strictly decreases a word with respect to $<$. Assume that $v = \nf_{<}(w) \neq \gnf_<(w)$. Since $v$ is minimal in the class $[w]$, it cannot therefore be converted into $\gnf_<(w)$ by the reduction rules associated to $G$. But this contradicts the fact that $\gnf_<(w)$ is the unique irreducible representative of $[w]$.
\end{proof}

We now specialize this discussion to lexicographic representatives.
Given a
total order $<_{\xx}$ on the alphabet $\xx$, let $<_{\mathrm{lex}}$ be the
lexicographic order on $\ncmonoid$ induced by $<_{\xx}$. Thus
$v<_{\mathrm{lex}}w$ if, at the first position where $v$ and $w$ differ, the
letter of $v$ is smaller than the letter of $w$ according to $<_{\xx}$. Since
all representatives of a fixed pc-monomial have the same length, this
lexicographic comparison is unambiguous within each commutation class and yields a unique minimal element.
We can thus define the corresponding normal form $\nf_{<_\mathrm{lex}}(w)$.

Given a monomial order $<$ on $\ncmonoid$, let $<_{\xx}$ denote the order induced by $<$ on the indiviual letters and $<_{\mathrm{lex}}$ the corresponding induced lexicographic order. 
We say that
$<$ is \emph{classwise lexicographic} for $\ncmonoid_\graph$ if, on each
commutation class, it coincides with the lexicographic order: for all
$v,w\in\ncmonoid$ such that $[v]=[w]$, $v< w \Longleftrightarrow v<_{\mathrm{lex}} w$.
Many natural monomial orders have this property. For instance, the
degree-lexicographic order is classwise lexicographic, because all words in a
given commutation class have the same length, and hence the same total degree.
Thus, within a commutation class, degree-lexicographic comparison reduces to
lexicographic comparison. Lemma~\ref{lemma:gnf_lex}
therefore gives the following corollary.

\begin{corollary}
Let $<$ be a monomial order that is classwise lexicographic for
$\ncmonoid_\graph$. Then
\begin{equation}
  \gnf_<(w)
  =
  \nf_{<_\mathrm{lex}}(w)
\end{equation}
for every $w\in\ncmonoid$.
\end{corollary}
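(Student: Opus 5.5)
The corollary follows from Lemma~\ref{lemma:gnf_lex} once we check that the minimum of $<$ and the minimum of $<_{\mathrm{lex}}$ agree on every commutation class. Fix $w\in\ncmonoid$ and let $C=\{v\in\ncmonoid:[v]=[w]\}$ be its class. Lemma~\ref{lemma:gnf_lex} applies to the monomial order $<$ and gives $\gnf_<(w)=\nf_<(w)=\min_< C$. It therefore suffices to show $\min_< C=\min_{<_{\mathrm{lex}}} C$.

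First, both minima exist. The class $C$ is finite, since all its elements are rearrangements of the letters of $w$ and have length $|w|$. Both $<$ and $<_{\mathrm{lex}}$ restricted to $C$ are total orders; for $<_{\mathrm{lex}}$ this holds because words of equal length are compared at the first differing position. Hence each order has a unique minimal element on $C$.

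Next, let $m=\min_< C$. For any $v\in C$ with $v\neq m$ we have $m<v$. Both words lie in the same class, so the classwise lexicographic hypothesis gives $m<_{\mathrm{lex}} v$. Thus $m$ is also the $<_{\mathrm{lex}}$-minimum of $C$, which is $\nf_{<_{\mathrm{lex}}}(w)$. Combining this with the first paragraph yields $\gnf_<(w)=\nf_{<_{\mathrm{lex}}}(w)$.

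There is no real obstacle. The only point needing care is that the hypothesis is stated classwise, and that is exactly the case we use, since $m$ and $v$ lie in the same class $C$. We should also note that $<_{\mathrm{lex}}$ is built from the letter order $<_{\xx}$ induced by $<$ itself, so the two orders being compared are the ones named in the statement.
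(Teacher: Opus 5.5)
Your proof is correct and takes the same route as the paper. The paper gets the corollary directly from Lemma~\ref{lemma:gnf_lex}, noting that within a commutation class a classwise lexicographic order coincides with $<_{\mathrm{lex}}$, so $\min_<$ and $\min_{<_{\mathrm{lex}}}$ agree; you additionally spell out that each class is finite and that both orders are total on it, which the paper leaves implicit.
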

In the following, we assume that a classwise lexicographic monomial order $<$ is fixed and we simply write $\gnf(w) = \lex(w)$ for the common Gr\"obner and lexicographic normal form.

The Gr\"obner bases for monomial orders that are classwise lexicographic on $\ncmonoid_\graph$ have a particularly simple form. The following result already appears in  \cite[Theorem 7.1]{Bokut*31072001}. 

\begin{proposition}\label{prop:gb}
  Let $<$ be a monomial order on $\ncmonoid$ that is classwise lexicographic on $\ncmonoid_\graph$. Then $\idealgen{R_\graph}$ has a Gr\"obner basis (relative to $<$) consisting of all binomials 
  \begin{equation}\label{eq:gb-pc-general}
  yux-xyu,
  \end{equation}
  where $(x,y)\in E$ with $x< y$, and where $u\in\ncmonoid$ is a possibly empty
  word such that every letter $z$ appearing in $u$ satisfies
  \begin{equation}
  (z,x)\in E
  \qquad\text{and}\qquad
  z< x.
  \end{equation}
\end{proposition}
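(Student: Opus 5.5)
To prove Proposition~\ref{prop:gb}, let $G$ be the set of binomials $yux-xyu$ described in the statement. I will establish two things: that $G$ generates $\idealgen{R_\graph}$, and that the irreducible words with respect to $G$ are exactly the lexicographic normal forms $\lex(w)$. Confluence then follows. Each class $[w]$ contains a unique irreducible element, and the division algorithm relative to $G$ always outputs an irreducible word in the same class. Hence its output is independent of the choices made.

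\emph{Step 1: $G\subseteq\idealgen{R_\graph}$ and $R_\graph\subseteq\idealgen{G}$.} Taking $u$ empty gives $yx-xy$ for every edge with $x<y$, so $R_\graph\subseteq G$ up to sign. Conversely, $yux\sim xyu$ in $\ncmonoid_\graph$, since $x$ commutes with every letter of $u$ and with $y$. So $x$ can be moved leftward by adjacent swaps, and by Subsection~\ref{sec:binomial_relations} the difference lies in $\idealgen{R_\graph}$. The leading monomial is $yux$. This holds because the order is classwise lexicographic, and $yux$ and $xyu$ differ first at position one, where $x<y$.

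\emph{Step 2: reducible words are not lex-minimal.} If $w$ contains a factor $yux$, replacing it by $xyu$ gives a lexicographically smaller word in the same class. Thus $\lex(w)$ is irreducible.

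\emph{Step 3, the main obstacle: non-minimal words are reducible.} Suppose $w=w_1\cdots w_n$ is not minimal in its class. I will use the standard exchange property of trace monoids: there is a position $j$ and a letter $x=w_k$ with $k>j$ such that $x<w_j$, $x$ commutes with every letter $w_j,\dots,w_{k-1}$, and the prefix $w_1\cdots w_{j-1}$ is common to $w$ and $\lex(w)$, with $x$ the $j$-th letter of $\lex(w)$. This follows from the fact that a letter can be brought to position $j$ exactly when it commutes with all earlier letters of the suffix. I would prove it via the projection lemma, which states that $[u]=[v]$ iff the projections onto every non-commuting pair agree, or directly by induction on the number of swaps.

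Choose $k$ minimal among such occurrences of letters smaller than $w_j$, and set $y=w_j$, $u=w_{j+1}\cdots w_{k-1}$. Every $z$ in $u$ commutes with $x$, and by the minimality of $k$ we would like $z<x$. Here is the difficulty: a letter $z\in u$ with $z>x$ commuting with $x$ is possible. To handle this, I will instead pick $j'$ as the position in $w_j\cdots w_{k-1}$ of the last letter $y'$ with $y'>x$, and set $u$ to be the remaining tail. Every letter of this tail is $<x$ (letters equal to $x$ cannot occur, since $x$ does not commute with itself) and commutes with $x$. Moreover $y'$ commutes with $x$, so $(x,y')\in E$ with $x<y'$. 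The factor $y'ux$ is then in the leading monomials of $G$, and $w$ is reducible.

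So the delicate part is the exchange property together with this choice of the last large letter. Once both are in place, irreducible words coincide with $\lex$-normal forms, and $G$ is a Gröbner basis.
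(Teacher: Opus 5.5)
Your proposal is correct and follows the paper's proof essentially step for step. Both first show $\idealgen{G}=\idealgen{R_\graph}$, then prove that the irreducible words are exactly the lexicographically minimal representatives, and your choice of the last letter $y'>x$ in the block is the same device as the paper's choice of a factor $yux$ with $u$ as short as possible. Your explicit exchange property (justified via projections and cancellation) usefully fills in the step the paper only sketches, namely that ``a smaller letter $x$ from a later position has moved left across a larger letter $y$.''
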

\begin{proof}
Let $G$ be the family of binomials in~\eqref{eq:gb-pc-general}. Since $x$
commutes with $y$ and with every letter of $u$, the words $yux$ and $xyu$
represent the same pc-monomial. Hence every element of $G$ belongs to
$\idealgen{R_\graph}$. Conversely, the oriented degree-two commutation
relations $yx-xy$, for $(x,y)\in E$ and $x<y$, are obtained by
taking $u=1$. These relations generate the same ideal as $R_\graph$, so
$\idealgen{G}=\idealgen{R_\graph}$.

To prove that $G$ forms a Gr\"obner basis, it remains to prove that the rewriting system associated to $G$ is confluent. 
We claim that a word is irreducible with respect to $G$ if and only if it is
the lexicographically minimal representative of its commutation class. One
direction is immediate: if a word contains a factor $yux$ of the form
\eqref{eq:gb-pc-general}, replacing it by $xyu$ gives an equivalent and
lexicographically smaller word.

Conversely, suppose that a word $w$ is not lexicographically minimal in its
class. Let $v$ be a lexicographically smaller representative. Passing from
$w$ to $v$ is achieved by swaps of adjacent commuting letters. At the first
position where $v$ differs from $w$, a smaller letter $x$ from a later
position of $w$ has moved left across a larger letter $y$. Thus $w$ contains
a factor $yux$ such that $x< y$ and $x$ commutes with $y$ and with
every letter of $u$. Choose such a move with $u$ as short as possible. If a
letter $z$ in $u$ satisfied $x< z$, then the same occurrence of
$x$ could be moved left across $z$ with a shorter intervening block, again
producing a lexicographically smaller word. This contradicts the choice of
$u$. Since $x$ cannot commute with another occurrence of itself, every letter
$z$ in $u$ satisfies $z< x$. Hence $w$ contains a leading monomial
of an element of $G$ and is reducible.

Therefore the irreducible words are exactly the lexicographically minimal
representatives. Each commutation class has a unique such representative, so
the terminating rewriting system has unique normal forms. Hence $G$ is a
Gr\"obner basis.
\end{proof}

\begin{example}[name=Running example,continues=example:abcd]
Consider a classwise lexicographic monomial order on $\ncmonoid$ which induces the alphabet order $a<b<c<d$. Then, according to Proposition~\ref{prop:gb}, a Gr\"obner basis of
$\idealgen{R_{\graph}}$ is
\begin{equation}\label{eq:ex:gb}
  G
  =
  \{ba-ab\}
  \cup
  \{cb-bc\}
  \cup
  \{ca^k b-bca^k:\ k\geq 1\}.
\end{equation}
For the word $cbabdb$, reduction gives
\begin{align} \label{eq:cbabd_lex}
cbabdb
=
(cb)abdb
&\mapsto
(bc)abdb
=
b(cab)db
\mapsto
b(bca)db
=
bbcadb.
\end{align}
Thus $\gnf(cbabdb)=bbcadb=\lex(cbabdb)$.

If instead we choose the alphabet order $a<c<b<d$, the Gr\"obner basis
reduces to the oriented degree-two relations
\begin{equation}\label{eq:ex:gb2}
  G'
  =
  \{ba-ab\}
  \cup
  \{bc-cb\}.
\end{equation}
The same word now reduces as
\begin{align}
cbabdb
=
c(ba)bdb
&\mapsto
c(ab)bdb
=
cabbdb.
\end{align}
Hence $\gnf(cbabdb)=cabbdb = \lex(cbabdb)$, which is the lexicographic normal form for this
alphabet order.
\end{example}

The Gr\"obner basis in Proposition~\ref{prop:gb} always contains the oriented degree-two relations $yx-xy$ for $(x,y) \in E$ with $x<y$, which correspond to the case $u=1$. As the above example illustrates, the higher-degree relations in~\eqref{eq:gb-pc-general} corresponding to $|u|>1$ may or may not be needed, depending on the chosen alphabet order.

There is a simple graph-theoretic condition that determines when the
higher-degree rules in~\eqref{eq:gb-pc-general} are necessary. An order
$<$ on the alphabet is \emph{transitive relative to} $\graph$ if
whenever $(x,y)\in E$ with $x<y$, and $(y,z)\in E$ with
$y<z$, then $(x,z)\in E$ and $x<z$
\cite[Definition~2.3]{Diekert1997}. Equivalently, orienting each edge from
the smaller endpoint to the larger endpoint gives a transitive orientation of
$\graph$. Conversely, any transitive orientation can be extended to a total
order with this property.

\begin{lemma}\label{lem:transitive_groebner}
If the alphabet order induced by $<$ is transitive relative to $\graph$,
then the original commutation relations $R_\graph$ constitute a Gröbner basis. If the
alphabet order is not transitive relative to $\graph$, then the Gr\"obner
basis contains infinitely many binomials of the form~\eqref{eq:gb-pc-general}.
\end{lemma}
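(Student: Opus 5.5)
Both parts follow from Proposition~\ref{prop:gb}, which already gives an explicit Gröbner basis $G$ of $\idealgen{R_\graph}$ for a classwise lexicographic order. For the first claim we show every element of $G$ with $|u|\ge 1$ is redundant under a transitive order. For the second claim we exhibit infinitely many elements of $G$ that cannot be dropped.

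\emph{Transitive case.} Take an element $yux-xyu$ of $G$ with $u\neq \id$, so $(x,y)\in E$ with $x<y$. Every letter $z$ of $u$ satisfies $(z,x)\in E$ and $z<x$. Transitivity applied to $z<x<y$, with $(z,x),(x,y)\in E$, gives $(z,y)\in E$ and $z<y$. Hence the word $yu$ contains the factor $yz_1$, where $z_1$ is the first letter of $u$, and $yz_1$ is the leading monomial of the degree-two relation $yz_1-z_1y\in R_\graph$.

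So the leading monomial $yux$ of every higher-degree rule is already reducible by $R_\graph$. It follows that a word is irreducible for $R_\graph$ iff it is irreducible for $G$. The proof of Proposition~\ref{prop:gb} shows the latter are exactly the lexicographic minima, one per class. The rewriting system of $R_\graph$ therefore terminates with a unique irreducible word in each commutation class, so it is confluent and $R_\graph$ is a Gröbner basis.

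\emph{Non-transitive case.} We first turn the failure of transitivity into a local configuration.
\begin{itemize}
\item Failure means there exist $(a,b),(b,c)\in E$ with $a<b<c$ such that either $(a,c)\notin E$, or $(a,c)\in E$ but $a<c$ fails. The second option is impossible, since $<$ is a total order and $a<b<c$.
\item So $(a,c)\notin E$, and in particular $a\neq c$.
\item Set $x=b$, $y=c$, $z=a$. Then $(x,y)\in E$ with $x<y$, and $z$ commutes with $x$ with $z<x$.
\end{itemize}
Proposition~\ref{prop:gb} then puts $c\,a^k\,b-b\,c\,a^k$ in $G$ for every $k\ge1$, exactly as in Example~\ref{example:abc}.

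\emph{Main obstacle.} One must show that infinitely many of these are genuinely needed, so that no finite sub-basis suffices. Here "the Gröbner basis" is understood as the reduced one, or equivalently its set of minimal leading monomials. The plan is to show that $ca^kb$ is reducible but each of its proper factors is irreducible, i.e.\ lex-minimal:
\begin{itemize}
\item $ca^k$ and $a^k$ are irreducible because $(a,c)\notin E$, so no letter can move left past $c$.
\item $a^kb$ is irreducible because $a<b$ and $b$ can only move left past letters larger than itself.
\item In $ca^{k-1}b$ and shorter factors starting with $c$, the only candidate move is $b$ jumping across the $a$'s to the front; this requires $b$ to pass $c$, which is legitimate only in the full word, where it yields the $G$-rule of length $k+2$. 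The same holds for factors $ca^{j}b$ with $j<k$: these are leading monomials of other rules, not proper factors of $ca^kb$.
\end{itemize}
Hence each $ca^kb$ is a minimal non-normal word, and distinct $k$ give distinct minimal obstructions. Any Gröbner basis must contain an element with each of these leading monomials, so it is infinite.
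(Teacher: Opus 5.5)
Your proof is correct and is essentially the paper's. The non-transitive case is the same $yz^kx$ argument, and your minimal-obstruction reasoning makes ``infinitely many are needed'' precise. In the transitive case you use the same transitivity step, but you conclude from the fact that $yux$ already contains the degree-two leading monomial $yz_1$, so the irreducible words for $R_\graph$ and for $G$ coincide; the paper instead uses the joinability $yux\to uxy\leftarrow xyu$. Your finish is slightly cleaner and explicitly covers $|u|=1$, which the paper's ``$|u|>1$'' literally omits.

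Your third bullet should be deleted. Since $(b,c)\in E$, $b$ may cross $c$ in any word (indeed $ca^{k-1}b$ is itself reducible), so that bullet's claim is false. It is also unnecessary: any factor of $ca^kb$ containing both $c$ and $b$ is the whole word, so the proper factors are only $ca^j$, $a^j$ and $a^jb$, which your first two bullets already handle.
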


\begin{proof}
Suppose first that the alphabet order is transitive relative to $\graph$. Let $G$ be the family of binomials in~\eqref{eq:gb-pc-general} and assume that $G$ contains a binomial of the form $yux-xyu$ with $|u|>1$. We will show that this binomial is redundant and can be removed from the Gr\"obner basis.

By assumption $(x,y)\in E$ with $x<y$, and every letter $z$ in $u$ satisfies
$(z,x)\in E$ and $z<x$. By transitivity, $(z,y)\in E$ and
$z<y$. Hence the oriented degree-two relations allow us to move $y$ to
the right across all letters of $u$, and then to swap $yx$ to $xy$:
\begin{equation}
  yux \longrightarrow uyx \longrightarrow uxy.
\end{equation}
They also allow us to move the initial $x$ in $xyu$ to the right across all
letters of $u$, after first moving $y$ to the right across $u$:
\begin{equation}
  xyu \longrightarrow xuy \longrightarrow uxy.
\end{equation}
Thus $yux$ and $xyu$ reduce to the same word using only degree-two
commutation relations, so the higher-degree rule is redundant in the rewriting system and can be removed from the Gr\"obner basis. Repeating this argument for all higher-degree rules shows that the original commutation relations $R_\graph$ suffice to form a Gröbner basis.

Conversely, suppose that the alphabet order is not transitive relative to
$\graph$. Then there exist letters $z<x<y$ such that
$(z,x)\in E$, $(x,y)\in E$, but $(z,y)\notin E$. For every
$k\geq 1$, the binomial
\begin{equation}
  yz^k x - xyz^k
\end{equation}
belongs to the family~\eqref{eq:gb-pc-general}. The leading monomial
$yz^k x$ has no proper reducible subword of the form required by
\eqref{eq:gb-pc-general}: $y$ cannot cross any occurrence of $z$, since
$(y,z)\notin E$, and the final $x$ has no smaller commuting letter to its
right. Hence these leading monomials are not consequences of shorter leading
monomials in the family. Since the words $yz^k x$ are distinct for different
values of $k$, infinitely many such binomials are needed.
\end{proof}

The fact that the Gröbner basis reduces to the original commutation relations
$R_\graph$ when the order is transitive relative to $\graph$ was already
established in~\cite[Theorem~2.2]{Diekert1997}.

As shown in the above Example, it is sometimes possible to simplify the Gröbner basis by choosing a suitable alphabet order on $\xx$ that is transitive relative to $\graph$. However,not every graph admits a transitive orientation, so this simplification is
not always available.

\begin{lemma} Let $X=\{a,b,c,d,e\}$ and let $\graph$ be the pentagon
  \begin{equation}
    \graph =
  \begin{tikzpicture}[
    baseline={([yshift=-2ex]current bounding box.center)},
    dot/.style={circle, fill, inner sep=1pt},
    lbl/.style={font=\small},
    scale=0.7
  ]
    \node[dot,label=above:{$a$}] (a) at (90:1.5)  {};
    \node[dot,label=left:{$b$}]  (b) at (162:1.5) {};
    \node[dot,label=below:{$c$}] (c) at (234:1.5) {};
    \node[dot,label=below:{$d$}] (d) at (306:1.5) {};
    \node[dot,label=right:{$e$}] (e) at (18:1.5)  {};
    \draw[gray] (a) -- (b) -- (c) -- (d) -- (e) -- (a);
  \end{tikzpicture}
\end{equation}
For every alphabet order, the Gr\"obner basis of $\idealgen{R_\graph}$
contains infinitely many binomials of the form~\eqref{eq:gb-pc-general}.
This is the smallest example, by number of letters, with this property.
\end{lemma}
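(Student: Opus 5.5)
The plan rests on Lemma~\ref{lem:transitive_groebner}: for a given alphabet order, the Gr\"obner basis from Proposition~\ref{prop:gb} is infinite exactly when the order is not transitive relative to $\graph$. So the first claim amounts to showing that no total order on $\{a,b,c,d,e\}$ is transitive relative to the pentagon. Since, as the paper remarks, transitive orders relative to $\graph$ correspond to transitive orientations of $\graph$, it is enough to show that the 5-cycle $C_5$ has no transitive orientation.

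\textbf{Step 1: $C_5$ has no transitive orientation.} Suppose the edges are oriented transitively.
\begin{itemize}
\item If two consecutive edges were oriented in the same direction along the cycle, say $a\to b\to c$, transitivity would force the chord $(a,c)$ to be an edge. It is not, since $C_5$ has no chords.
\item So the orientation must alternate around the cycle. An alternating orientation of a cycle needs an even number of edges, and $C_5$ has 5.
\end{itemize}
This contradiction shows no transitive orientation exists, so every alphabet order is non-transitive relative to the pentagon. The second half of Lemma~\ref{lem:transitive_groebner} then gives infinitely many binomials $yz^kx-xyz^k$ in the Gr\"obner basis. To be safe, I would check directly that non-transitivity of the order yields letters $z<x<y$ with $(z,x),(x,y)\in E$ and $(z,y)\notin E$, which is exactly the hypothesis used in that proof. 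In a chordless cycle the edge $(z,y)$ is automatically missing, so this holds.

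\textbf{Step 2: minimality.} We must show that every graph on at most 4 vertices admits a transitive orientation, i.e., is a comparability graph. The cleanest route is the Gallai characterization of comparability graphs by forbidden induced subgraphs: every minimal non-comparability graph has at least 5 vertices, the smallest being odd cycles $C_{2k+1}$ with $k\ge 2$, together with a few others on at least 6 vertices. To keep the argument self-contained, I would instead do a finite check of the 11 isomorphism classes of graphs on 4 vertices, since any induced subgraph of a comparability graph is again a comparability graph. These are the empty graph, $K_2$ plus isolated vertices, two disjoint edges, $P_3$ plus a vertex, $K_3$ plus a vertex, $P_4$, the star $K_{1,3}$, $C_4$, the paw, the diamond, and $K_4$. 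For each I would exhibit a transitive orientation:
\begin{itemize}
\item bipartite graphs ($P_4$, $C_4$, the star, matchings): orient all edges from one side to the other;
\item complete graphs: use a linear order;
\item paw and diamond: take a linear order on the triangle(s) and orient the pendant edge or the remaining vertex compatibly.
\end{itemize}
Any such orientation extends to a total alphabet order that is transitive relative to $\graph$. The first half of Lemma~\ref{lem:transitive_groebner} then makes $R_\graph$ itself a finite Gr\"obner basis, so no graph on fewer than 5 letters has the stated property.

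\textbf{Main obstacle.} I expect the hardest part to be the rigor of invoking Lemma~\ref{lem:transitive_groebner} for all orders. Its converse direction presents the infinite family without fully proving that no finite subset suffices. Here I would rely on Proposition~\ref{prop:gb}: the irreducible words are exactly the lex-minimal representatives. The word $yz^kx$ is not lex-minimal, but all of its proper subwords are, because $y$ cannot cross $z$. Hence any Gr\"obner basis must contain an element whose leading monomial is exactly $yz^kx$, for every $k$. The 4-vertex case check is routine.
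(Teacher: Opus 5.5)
Your proposal is correct and follows the same route as the paper: it uses Lemma~\ref{lem:transitive_groebner} to reduce the claim to the graph-theoretic fact that the pentagon has no transitive orientation, while every graph on at most four vertices has one. The paper only asserts that fact, so your odd-cycle alternation argument, your four-vertex case check, and your standard-monomial argument (every Gr\"obner basis must contain an element with leading monomial $yz^kx$ for each $k$) fill in details of the same proof rather than giving a different one.
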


\begin{proof}
By Lemma~\ref{lem:transitive_groebner}, the Gr\"obner basis is infinite for
every alphabet order if and only if $\graph$ admits no transitive
orientation. The pentagon is the smallest graph that does not admit a
transitive orientation.
\end{proof}

While Gröbner bases provide a systematic way to compute normal forms,
rewriting a word $w$ into its normal form $\gnf(w)$ through repeated
applications of the Gröbner basis substitution rules can be computationally
expensive. For
classwise lexicographic monomial orders, the normal form is just the
lexicographically minimal representative, and this representative can be
computed directly by a greedy procedure that does not require the Gröbner basis to be known. The procedure iteratively builds the normal form from left to right, at each step choosing the smallest letter that can be commuted to the current position.

Let $w=y_1\cdots y_r$. We say that position $j$ is \emph{reachable from
position $i$} if $y_j$ commutes with every intervening letter:
\begin{equation}
  (y_j,y_l)\in E
  \qquad
  \text{for all } i\leq l<j.
\end{equation}
Equivalently, the occurrence $y_j$ can be moved to position $i$ by swapping
it leftward across commuting letters. Note that reachability of position $j$ from $i$ depends
only on the letter $y_j$ and the letters $y_i, \ldots, y_{j-1}$; in
particular, a position $j' > j$ may be reachable from $i$ even if $j$ is not.

\begin{algorithm}[H]
\label{algo:lex_nf}
\caption{Lexicographic normal form}
\begin{algorithmic}[1]
  \REQUIRE A word $w = y_1 \cdots y_r$.
  \STATE $i \gets 1$.
  \WHILE{$i \le r$}
    \STATE Scan all positions $j = i, i+1, \ldots, r$.
    \STATE For each $j$, check reachability from $i$, i.e., $(y_j, y_l) \in E$ for all $i \le l < j$.
    \STATE Among all reachable positions, let $j^*$ be the leftmost one
    achieving the smallest letter $y_{j^*}$.
    \STATE Move in $w$ letter $y_{j^*}$ to position $i$:    remove it from position $j^*$ and reinsert it at position $i$, shifting $y_i, \ldots, y_{j^*-1}$ one step to the right.
    \STATE $i \gets i + 1$.
  \ENDWHILE
  \RETURN $w=\lex(w)$.
\end{algorithmic}
\end{algorithm}

The correctness of Algorithm~\ref{algo:lex_nf} follows from the same
prefix argument as the proof of Proposition~\ref{prop:gb}. Once the prefix
$y_1\cdots y_{i-1}$ has been fixed, the next letter of any equivalent word
must be an occurrence that is reachable from position $i$; otherwise it would
have to cross a non-commuting occurrence, which is impossible. The algorithm
therefore chooses the smallest feasible next letter at each step, producing
the lexicographically minimal representative.

A naive implementation of the reachability test would take $O(r^3)$ time,
because checking whether a position $j$ is reachable from $i$ requires testing
commutation with all letters $y_i,\ldots,y_{j-1}$. However, the reachability
tests for a fixed $i$ can be performed incrementally. While scanning
$j=i,i+1,\ldots,r$, maintain the set $S$ of letters that commute with all
previously scanned letters $y_i,\ldots,y_{j-1}$. Then $j$ is reachable from
$i$ precisely when $y_j\in S$, and after testing $j$ one updates
$S\leftarrow S\cap N(y_j)$, where $N(y_j)$ is the set of letters commuting
with $y_j$. With constant-time membership and update operations, for instance
for fixed alphabet size or using bitsets, this scan takes $O(r-i+1)$ time.
Summing over $i$ gives $O(r^2)$ time. The insertion step shifts at most $r$
letters at each iteration, hence also contributes $O(r^2)$ time.

Algorithm~\ref{algo:lex_nf} has several practical advantages over Gröbner
basis reduction. It does not require constructing or storing the Gröbner basis,
which may be infinite. It avoids matching higher-degree rules of the form $yux \to xyu$, placing each letter directly at its target position in a single move. And it is uniform: the same $O(r^2)$ procedure applies to every PC monoid, regardless of the structure of the commutation relations.

The above normal forms can be used to implement quotient-based SDP relaxations of PCPO problems. 
To build the moment variables and moment matrices one forms products such as $u^\inv v$ then replaces each product by its normal form before using it as a moment index. Thus the construction works in the free monoid and repeatedly 
projects back to the quotient monoid by normalization.

This is conceptually the same mechanism used in implementations such as
QuantumNPA.jl \cite{QuantumNPA.jl}: monomials are stored in a canonical word
form in which commuting blocks are moved into lexicographic order as far as
the commutation relations allow. In practice, normalization can often be
performed incrementally during multiplication, by inserting new letters or
blocks into an already normalized word and re-normalizing only the affected
part.

A different approach, developed in the next subsection, is to represent
pc-monomials by objects that encode the quotient element directly rather than
by a distinguished word representative. In such representations, operations
such as multiplication, involution, and equality testing can be performed on
the representing objects themselves, avoiding repeated word-normalization
steps when constructing the moment and localizing matrices.

\subsection{Circuit representation of pc-monomials}\label{sec:circuit_representation}

\subsubsection{Structural representations of pc-monomials}\label{sec:structural_representations}
The idea of representing elements of a partially commutative monoid by objects other than a chosen word representative has a long history in the theory of PC monoids. Already in the foundational works of Cartier and Foata \cite{cartier1969applications}, and later in the concurrency-oriented developments of Mazurkiewicz \cite{mazurkiewicz1977concurrent} pc-monomials were understood not merely as equivalence classes of words but as combinatorial objects encoding  events that may be permuted (when the letters commute) and relative orders that are fixed (when the letters do not commute). This includes Cartier--Foata normal forms~\cite{cartier1969applications}, dependency or occurrence graphs~\cite{duboc1986some}, and Viennot's heaps of pieces~\cite{viennot1986heaps}.

In this subsection, we reformulate these classical representations in a
circuit language closer to quantum information science. In that language,
commutation corresponds to independent local operations, while
non-commutation records pairs of operations whose order may matter because
their supports overlap.

We start by describing a general connection between the commutation graph
$\graph$ and the local operator structure of a family of operators
$\{A_x:\,x\in\xx\}$.

\subsubsection{Commutation graphs and local operator structures}\label{subsec:local_operators}

Consider a quantum system composed of $m$ subsystems $1,\ldots,m$, and let
$\{A_x:\,x\in\xx\}$ be a family of observables or operators acting on this
system. Each operator $A_x$ may act on a single subsystem, or more generally
on a subset $S_x\subseteq [m]$ of subsystems, called its support. If two
operators have disjoint supports, then locality forces them to commute. If
their supports overlap, locality no longer forces commutation; in an abstract
commutation model, such pairs are precisely the pairs that may be
non-commuting. Thus a support assignment induces a commutation graph
$\graph=(\xx,E)$ by
\begin{equation}\label{eq:commutation_graph_subsets}
  (x,y)\in E
  \qquad\Longleftrightarrow\qquad
  S_x\cap S_y=\emptyset .
\end{equation}

Conversely, every commutation graph can be realized in this way by a suitable
choice of supports. To see this, let $\complem\graph$ be the complement of the commutation graph: its vertices are the letters in $\xx$, and its edges
are the pairs $(x,y)$ with $(x,y)\notin E$. Thus, the edges of $\complem\graph$ connect the pairs of letters that may fail to commute. We refer to $\complem\graph$ as the \emph{non-commutation} graph.
In the PC monoid literature, it is often called the dependence graph. Remember that the commutation graph $\graph$ is assumed not to contain the diagonal pairs $(x,x)$, so $\complem\graph$ contains them. It will become clear later why considering these self-loops is useful. When drawing $\complem\graph$ we will omit these self-loops for visual clarity.

A clique of $\complem\graph$ is a subset of letters that are all connected by an edge, i.e., a subset of variables that are pairwise non-commuting. An edge clique cover of $\complem\graph$ is a family of cliques that covers every edge (including self-loops). 
Equivalently, it is a
family of cliques $\clique_1,\ldots,\clique_m$ such that every non-commuting pair of distinct letters lies together in at least one $\clique_k$, and every letter belongs to at least one $\clique_k$.
The minimum possible value of $m$ is the edge clique cover number of $\complem\graph$.  The set of all maximal cliques of $\complem\graph$ is one such cover, although not necessarily a minimum one.

Given such a cover $\clique_1,\ldots,\clique_m$, associate one subsystem with
each clique. Then assign to every letter $x\in\xx$ the support
\begin{equation}
  S_x := \{\,k\in[m] : x\in\clique_k\,\}.
\end{equation}
If $(x,y)\in E$, then $x$ and $y$ are not adjacent in
$\complem\graph$, so no clique of $\complem\graph$ contains them both. Hence
$S_x\cap S_y=\emptyset$. Conversely, if $x\neq y$ and $(x,y)\notin E$,
then $x$ and $y$ are adjacent in $\complem\graph$, and the cover places them
together in some clique $\clique_k$. Hence $S_x\cap S_y\neq\emptyset$. Thus
the support assignment reproduces exactly the commutation graph $\graph$.

Remember that we assumed that the commutation graph $\graph$ is compatible with the involution, namely $(x,y)\in E
  \Longleftrightarrow
(x^\inv,y^\inv)\in E$. Equivalently, the involution is an automorphism of the commutation graph, and hence also of the non-commutation graph $\complem\graph$. This compatibility should also be reflected by any edge clique cover that is interpreted as a decomposition into local systems.
% Indeed, if a clique $\clique_i$ represents one local system, then the adjoints of the operators associated with $\clique_i$ should again belong to one of the chosen local systems.
Given a clique $\clique_i$ of $\complem\graph$, define
\begin{equation}
  \clique_i^\inv
  :=
  \{x^\inv : x\in\clique_i\}
\end{equation}
as the set of involuted letters in $\clique_i$.
Since the involution preserves non-commutation, $\clique_i^\inv$ is again a clique of $\complem\graph$. We say that the cover $\clique_1,\ldots,\clique_m$ is \emph{closed under involution} if, for every $i$, there exists an index $i^\ast$ such that
\begin{equation}
\clique_i^\inv=\clique_{i^\ast}.
\end{equation}
The map $i\mapsto i^\ast$ is then an involution on the set of clique indices. Indeed, applying the letter involution twice gives $(\clique_i^\inv)^\inv=\clique_i$, and therefore $(i^\ast)^\ast=i$.

Choosing an edge clique cover that is closed under involution is not restrictive. Starting from any edge clique cover, one may always add the involuted cliques $\clique_i^\inv$ and then delete duplicates. Moreover, the cover consisting of all maximal cliques of $\complem\graph$ is automatically closed under involution, because graph automorphisms map maximal cliques to maximal cliques. Finally, if all variables are Hermitian, so that $x^\inv=x$ for every $x\in\xx$, then every edge clique cover is automatically closed under involution and one simply has $i^\ast=i$ for every clique index. In the sequel, whenever an edge clique cover is used to define local systems, we assume that it is closed under involution.

\begin{example}[Tripartite Bell scenario]
\label{ex:three_systems}
Consider six variables $\xx=\{a_0,a_1,b_0,b_1,c_0,c_1\}$ and the commutation graph $\graph=K_{2,2,2}$, the complete tripartite graph with vertex classes $\{a_0,a_1\}$, $\{b_0,b_1\}$, and $\{c_0,c_1\}$:
\begin{equation}
\mathcal{G} = \quad
\begin{tikzpicture}[
  baseline={([yshift=-2ex]current bounding box.center)}, % Centers the graph vertically relative to the = sign
  dot/.style={circle, fill, inner sep=1pt},
  lbl/.style={font=\small},
  scale=0.7 % Slightly reduced scale to fit better in an equation line
]
  \node[dot,label=left:{$a_0$}] (a0) at (0,1.5) {};
  \node[dot,label=left:{$a_1$}] (a1) at (0,0.5) {};
  \node[dot,label=above:{$b_0$}] (b0) at (1.6,2) {};
  \node[dot,label=below:{$b_1$}] (b1) at (1.6,1) {};
  \node[dot,label=right:{$c_0$}] (c0) at (3.2,1.5) {};
  \node[dot,label=right:{$c_1$}] (c1) at (3.2,0.5) {};
  \foreach \u in {a0,a1} \foreach \v in {b0,b1,c0,c1} \draw[gray] (\u) -- (\v);
  \foreach \u in {b0,b1} \foreach \v in {c0,c1} \draw[gray] (\u) -- (\v);
\end{tikzpicture}
\end{equation}
Apart from the diagonal dependence pairs, the non-commutation graph
$\complem\graph$ consists of three disjoint edges
$\{a_0,a_1\}$, $\{b_0,b_1\}$, and $\{c_0,c_1\}$:
\begin{equation}
\bar{\graph} = \quad
\begin{tikzpicture}[  baseline={([yshift=-2ex]current bounding box.center)}, % Centers the graph vertically
  dot/.style={circle, fill, inner sep=1pt},
  lbl/.style={font=\small},
  scale=0.8
]
  \node[dot,label=left:{$a_0$}]  (a0c) at (0,1.5) {};
  \node[dot,label=left:{$a_1$}]  (a1c) at (0,0.5) {};
  \node[dot,label=above:{$b_0$}] (b0c) at (1.6,2) {};
  \node[dot,label=below:{$b_1$}] (b1c) at (1.6,1) {};
  \node[dot,label=right:{$c_0$}] (c0c) at (3.2,1.5) {};
  \node[dot,label=right:{$c_1$}] (c1c) at (3.2,0.5) {};
  \draw (a0c) -- (a1c);
  \draw (b0c) -- (b1c);
  \draw (c0c) -- (c1c);
\end{tikzpicture}
\end{equation}
An edge clique cover, consisting of the maximal cliques, is $\clique_A=\{a_0,a_1\}$, $\clique_B=\{b_0,b_1\}$, and $\clique_C=\{c_0,c_1\}$.
The associated supports are $S_{a_i}=\{A\}$, $S_{b_j}=\{B\}$, and $S_{c_k}=\{C\}$. Thus the variables
$a_i$ act on a subsystem $A$, the variables $b_j$ on a subsystem $B$, and the
variables $c_k$ on a subsystem $C$.

This is the standard tripartite Bell commutation structure. It generalizes to
arbitrary $m$ parties with $k_i$ local operators on party $i$, corresponding
to the complete $m$-partite graph $K_{k_1,\dots,k_m}$. The Bell--CHSH
commutation structure in Example~\ref{example:chsh} is the special case of
two parties with two local operators each.
\end{example}

\begin{example}[Routed Bell scenario]
\label{ex:routed_bell}
Consider six variables $\xx = \{a_0, a_1, t_0, t_1, b_0, b_1\}$ and the commutation graph $\graph$ where $a_i$ commutes with $t_j$ and $b_k$ for all $i,j,k$, and $b_0$ commutes with $b_1$:
\begin{equation}
\mathcal{G} = \quad
\begin{tikzpicture}[
  baseline={([yshift=-2ex]current bounding box.center)},
  dot/.style={circle, fill, inner sep=1pt},
  lbl/.style={font=\small},
  scale=0.7
]
  \node[dot,label=left:{$a_0$}]  (a0) at (0,1.5) {};
  \node[dot,label=left:{$a_1$}]  (a1) at (0,0.5) {};
  \node[dot,label=above:{$t_0$}] (t0) at (1.6,2) {};
  \node[dot,label=below:{$t_1$}] (t1) at (1.6,1) {};
  \node[dot,label=right:{$b_0$}] (b0) at (3.2,1.5) {};
  \node[dot,label=right:{$b_1$}] (b1) at (3.2,0.5) {};
  \foreach \u in {a0,a1} \foreach \v in {t0,t1,b0,b1} \draw[gray] (\u) -- (\v);
  \draw[gray] (b0) -- (b1);
\end{tikzpicture}
\end{equation}
The complement graph $\bar{\graph}$ has edges $\{a_0,a_1\}$, $\{t_0,t_1\}$, and $\{t_j,b_k\}$ for all $j,k$:
\begin{equation}
\bar{\graph} = \quad
\begin{tikzpicture}[
  baseline={([yshift=-2ex]current bounding box.center)},
  dot/.style={circle, fill, inner sep=1pt},
  lbl/.style={font=\small},
  scale=0.7
]
  \node[dot,label=left:{$a_0$}]  (a0c) at (0,1.5) {};
  \node[dot,label=left:{$a_1$}]  (a1c) at (0,0.5) {};
  \node[dot,label=above:{$t_0$}] (t0c) at (1.6,2) {};
  \node[dot,label=below:{$t_1$}] (t1c) at (1.6,1) {};
  \node[dot,label=right:{$b_0$}] (b0c) at (3.2,1.5) {};
  \node[dot,label=right:{$b_1$}] (b1c) at (3.2,0.5) {};
  \draw (a0c) -- (a1c);
  \draw (t0c) -- (t1c);
  \foreach \u in {t0c,t1c} \foreach \v in {b0c,b1c} \draw (\u) -- (\v);
\end{tikzpicture}
\end{equation}
An edge clique cover is provided by the three maximal cliques $\clique_A=\{a_0,a_1\}$, $\clique_{B_0}=\{t_0,t_1,b_0\}$, and $\clique_{B_1}=\{t_0,t_1,b_1\}$.
The associated supports are $S_{a_i}=\{A\}$, $S_{t_j}=\{B_0,B_1\}$, and $S_{b_k}=\{B_k\}$. Thus the
variables $a_i$ act on subsystem $A$, the variables $t_j$ act jointly on
subsystems $B_0$ and $B_1$, and the variables $b_k$ act on subsystem $B_k$.

Note that \cite{Lobo2024certifyinglongrange} describes this scenario in coarser terms as a bipartite system, with Alice holding subsystem $A$ and Bob holding the composite subsystem $B=(B_0,B_1)$. In that coarser model,
$t_0,t_1,b_0,b_1$ all act on Bob's side, and the commutativity
$[b_0,b_1]=0$ appears as an additional algebraic constraint on Bob's
measurement operators. In the refined support model above, the same
commutativity is encoded directly as the edge $\{b_0,b_1\}\in E$, because
$b_0$ and $b_1$ act on disjoint subsystems $B_0$ and $B_1$.
\end{example}

\subsubsection{Circuits}
Fix an edge clique cover $\clique_1,\ldots,\clique_m$ of $\complem\graph$, as above. We interpret each letter $x\in\xx$ as an operator acting on its support $S_x$, i.e., on the subsystems corresponding to the cliques that contain $x$. We represent each of these cliques as a wire and each letter as a gate acting on the wires corresponding to its support.
A word $w=y_1y_2\cdots y_r$
can then be represented as a circuit $C(w)$ by placing the gates
$y_1,y_2,\ldots,y_r$ from left to right.

\begin{example}[name=Running example,continues=example:abcd]\label{example:abcd_circuit}
The non-commutation graph associated to the commutation graph \eqref{ex:graph} is
\begin{equation}
  \complem\graph = 
  \tikz[baseline=-0.5ex]{
    \node (a) at (0,0.5) {$a$};
    \node (b) at (1,0.5) {$b$};
    \node (c) at (1,-0.5) {$c$};
    \node (d) at (0,-0.5) {$d$};
    \draw (a) -- (c);
    \draw (a) -- (d);
    \draw (b) -- (d);
    \draw (c) -- (d);
  }
\end{equation}
An edge clique cover is provided by the following maximal cliques $\clique_1=\{a,c,d\}$, $\clique_{2}=\{b,d\}$.
We can thus represent each word as a circuit consisting of two wires, one for $\clique_1$ and one for $\clique_2$, with $a$ and $c$ acting on the first wire, $b$ acting on the second wire, and $d$ acting on both wires. The word $cbabdb$ is represented as the circuit
\begin{equation}\label{eq:cbabd_circuit}
  \begin{tikzpicture}[baseline=(current bounding box.center)]
    % Wires/Subsystems
    \draw[thick] (0,2) node[left] {$\clique_1$} -- (7.75,2);
  \draw[thick] (0,1) node[left] {$\clique_2$} -- (7.75,1);
  \filldraw[fill=green!20, draw=green!60, thick] (0.25, 1.7) rectangle (1.25, 2.3) node[midway] {$c$};
   \filldraw[fill=red!20, draw=red!60, thick] (1.5, 0.7) rectangle (2.5, 1.3) node[midway] {$b$};
  \filldraw[fill=violet!20, draw=violet!60, thick] (2.75, 1.7) rectangle (3.75, 2.3) node[midway] {$a$};
  \filldraw[fill=red!20, draw=red!60, thick] (4, 0.7) rectangle (5, 1.3) node[midway] {$b$};
  \filldraw[fill=blue!20, draw=blue!60, thick] (5.25, 0.7) rectangle (6.25, 2.3) node[midway] {$d$};
  \filldraw[fill=red!20, draw=red!60, thick] (6.5, 0.7) rectangle (7.5, 1.3) node[midway] {$b$};
  \end{tikzpicture}
\end{equation}
\end{example}

Since the support $S_b$ and $S_c$ of two commuting letters $(b,c)\in E$ are disjoint, the corresponding gates act on disjoint wires. If they are adjacent in the circuit, they can be slid past each other, representing the fact that we can implement them in the order $bc$ or $cb$. But they can also be executed in parallel, i.e., simultaneously. This has no analog in the word representation, and can be represented in the circuit representation by stacking the two gates vertically instead of placing them side by side\footnote{This explains why the non-commuting graph $\complem\graph$ has a self-loop at each individual letter. Two identical gates, acting on the same wires, cannot be executed simultaneously.}. The following three circuit fragments are thus equivalent and represent the same pc-subword:
\begin{equation}
  \begin{tikzpicture}[baseline=(current bounding box.center)]
    % Wires/Subsystems
  \draw[thick] (0,2) node[left] {} -- (2.75,2);
  \draw[thick] (0,1) node[left] {} -- (2.75,1);
  \filldraw[fill=green!20, draw=green!60, thick] (0.25, 1.7) rectangle (1.25, 2.3) node[midway] {$c$};
   \filldraw[fill=red!20, draw=red!60, thick] (1.5, 0.7) rectangle (2.5, 1.3) node[midway] {$b$};
  \end{tikzpicture}
  \qquad \simeq \qquad
   \begin{tikzpicture}[baseline=(current bounding box.center)]
    % Wires/Subsystems
  \draw[thick] (0,2) node[left] {} -- (1.5,2);
  \draw[thick] (0,1) node[left] {} -- (1.5,1);
  \filldraw[fill=green!20, draw=green!60, thick] (0.25, 1.7) rectangle (1.25, 2.3) node[midway] {$c$};
    \filldraw[fill=red!20, draw=red!60, thick] (0.25, 0.7) rectangle (1.25, 1.3) node[midway] {$b$};
  \end{tikzpicture}  
    \qquad \simeq \qquad
    \begin{tikzpicture}[baseline=(current bounding box.center)]
    % Wires/Subsystems
  \draw[thick] (0,2) node[left] {} -- (2.75,2);
  \draw[thick] (0,1) node[left] {} -- (2.75,1);
  \filldraw[fill=red!20, draw=red!60, thick] (0.25, 0.7) rectangle (1.25, 1.3) node[midway] {$b$};
   \filldraw[fill=green!20, draw=green!60, thick] (1.5, 1.7) rectangle (2.5, 2.3) node[midway] {$c$};
  \end{tikzpicture} 
  \end{equation}

  In contrast, two gates corresponding to non-commuting letters share at least one wire, so they cannot be slid freely past each other. For instance, 
  \begin{equation}
  \begin{tikzpicture}[baseline=(current bounding box.center)]
    % Wires/Subsystems
  \draw[thick] (0,2) node[left] {} -- (2.75,2);
  \draw[thick] (0,1) node[left] {} -- (2.75,1);
  \filldraw[fill=red!20, draw=red!60, thick] (0.25, 0.7) rectangle (1.25, 1.3) node[midway] {$b$};
   \filldraw[fill=blue!20, draw=blue!60, thick] (1.5, 0.7) rectangle (2.5, 2.3) node[midway] {$d$};
  \end{tikzpicture}
  \qquad \not\simeq \qquad
    \begin{tikzpicture}[baseline=(current bounding box.center)]
    % Wires/Subsystems
  \draw[thick] (0,2) node[left] {} -- (2.75,2);
  \draw[thick] (0,1) node[left] {} -- (2.75,1);
  \filldraw[fill=blue!20, draw=blue!60, thick] (0.25, 0.7) rectangle (1.25, 2.3) node[midway] {$d$};
   \filldraw[fill=red!20, draw=red!60, thick] (1.5, 0.7) rectangle (2.5, 1.3) node[midway] {$b$};
  \end{tikzpicture}
\end{equation}

Mathematically, a circuit thus corresponds to a sequence $C=(K_1,K_2,\ldots,K_k)$ of vertical layers. Each layer $K_i$ represents a time-step and consists of a subset of gates in $\xx$ that commute with each other, i.e., $K_i$ is a clique in $\graph$. Gates in the same layer are executed in parallel, while gates in different layers are executed sequentially. When representing a circuit $C$ graphically, each layer $K_i$ consists of gates that act on disjoint sets of wires (otherwise they would not all commute).

The elementary commutation relations in the PC monoid correspond to allowed gate-slide moves in the circuit representation. Two words $v$ and $w$ represent the same pc-monomial $[v]=[w]$ if and only if their circuits $C(v)$ and $C(w)$ are related by a finite
sequence of such slides of adjacent disjoint gates. In this sense, a pc-monomial can be regarded as a circuit modulo local reorderings of
independent gates.

\subsubsection{Left-justified circuits}
A canonical circuit representative is obtained by pushing every gate as far left as the wire-order constraints allow. In the example above, we can rearrange the circuit \eqref{eq:cbabd_circuit} in the following four layers:
\begin{equation}\label{eq:abcd_ljc}
  \begin{tikzpicture}[baseline=(current bounding box.center)]
    % Wires/Subsystems
  \draw[thick] (0,2) node[left] {$\clique_1$} -- (6,2);
  \draw[thick] (0,1) node[left] {$\clique_2$} -- (6,1);
  \filldraw[fill=green!20, draw=green!60, thick] (0.25, 1.7) rectangle (1.25, 2.3) node[midway] {$c$};
   \filldraw[fill=red!20, draw=red!60, thick] (0.25, 0.7) rectangle (1.25, 1.3) node[midway] {$b$};
  \filldraw[fill=violet!20, draw=violet!60, thick] (1.75, 1.7) rectangle (2.75, 2.3) node[midway] {$a$};
  \filldraw[fill=red!20, draw=red!60, thick] (1.75, 0.7) rectangle (2.75, 1.3) node[midway] {$b$};
  \filldraw[fill=blue!20, draw=blue!60, thick] (3.25, 0.7) rectangle (4.25, 2.3) node[midway] {$d$};
  \filldraw[fill=red!20, draw=red!60, thick] (4.75, 0.7) rectangle (5.75, 1.3) node[midway] {$b$};
    \draw[dashed, gray] (1.5, 0.5) -- (1.5, 2.5);
    \draw[dashed, gray] (3, 0.5) -- (3, 2.5);
    \draw[dashed, gray] (4.5, 0.5) -- (4.5, 2.5);
    \node at (0.75, 0) {$\layer_1$};
    \node at (2.25, 0) {$\layer_2$};
    \node at (3.75, 0) {$\layer_3$};
    \node at (5.25, 0) {$\layer_4$};
  \end{tikzpicture}
\end{equation}
We say that such a circuit is left-justified.

Formally, a \emph{left-justified circuit} is a sequence of layers ${L} = (\layer_1, \layer_2, \ldots, \layer_k)$, such that 

\begin{equation}
  \forall i\in\{2,\ldots,k\},\ \forall x\in\layer_i,\ 
  \exists y\in\layer_{i-1}
  \text{ such that } x \text{ and } y \text{ do not commute}.
\end{equation}

This says that the circuit has no avoidable delay: every letter is pushed as far to the left as possible. If a letter $x \in \layer_i$ commuted with all letters in $\layer_{i-1}$, it could have been moved into $\layer_{i-1}$, contradicting the left-justified property.
When representing a left-justified circuit $\layer$ graphically, every gate in layer $\layer_i$ must share a wire with at least one gate in the previous layer $\layer_{i-1}$, ensuring that it cannot be moved to the left of $\layer_{i-1}$.

Given a word $w$, its associated left-justified circuit $\layer(w)$ can be computed in one left-to-right pass through
the word. For a letter $x\in\xx$, let
\begin{equation}
  \operatorname{NC}(x)
  :=
  \{\,z\in\xx:\ z=x\text{ or } (z,x) \notin E\,\}
\end{equation}
be the set of letters that do not commute with $x$ and therefore cannot be moved past it. These sets can be precomputed and then accessed in constant time during the pass through the word.

\begin{algorithm}[H]
\caption{Left-justified circuit of a word}
\label{algo:left_justified_circuit}
\begin{algorithmic}[1]
  \REQUIRE A word $w=y_1\cdots y_r$.
  \STATE $T_x\gets 0$ for all
  $x\in\xx$.
  \FOR{$j=1,\ldots,r$}
    \STATE $t\gets 1+\max\{\,T_z:\ z\in\operatorname{NC}(y_j)\,\}$.
    \STATE Insert $y_j$ into the layer $L_t$ if it exists, or create a new layer $L_t=\{y_j\}$ if it does not exist.
    \STATE $T_{y_j}\gets t$.
  \ENDFOR
  \RETURN the sequence of all layers $L_t$, in increasing order of
  $t$.
\end{algorithmic}
\end{algorithm}

The value $T_x$ stores the latest time (largest layer position) already assigned to an occurrence of
the letter $x$. Therefore the layer position $t$ chosen in Step~3 is exactly one more than the latest time of a previous occurrence that does not commute with
$y_j$. This ensures that $y_j$ is placed in the earliest possible layer, i.e., that the resulting circuit is left-justified. The algorithm runs in time $O(r)$, i.e., linear in the word length $r$, for a fixed alphabet $\xx$ and commutation graph $\graph$.

The map $w\mapsto \layer(w)$ is constant on each class in $\ncmonoid_\graph$, i.e., $L(v) = L(w)$ if $[v]=[w]$. To show this, it suffices to show it for two words $w=u(ab)v$ and $w'=u(ba)v$ that differ by a single application of the commutation relation $(a,b)\in E$ (since any two equivalent words can be connected by a sequence of such relations). Up to the prefix $u$, the two words are identical, so they produce the same layer values and the same $T_x$ values for all letters in $u$. Then in Step~3, when we process the letter $a$ in $w$, we compute $t(a)=1+\max\{T_z : z\in\operatorname{NC}(a)\}$, and when we process the letter $b$, we compute $t(b)=1+\max\{T_z : z\in\operatorname{NC}(b)\}$. Now since $a$ and $b$ commute, we have $a\notin\operatorname{NC}(b)$ and $b\notin\operatorname{NC}(a)$, so the two layer values $t(a)$ and $t(b)$ are independent of each other. Hence the two orders $ab$ and $ba$ produce the same two layer values and the same updated state $(T_x)_{x\in\xx}$. The remaining suffix $v$ is then processed identically, so the final $T_x$ values, and therefore all layer indices, coincide. Thus equivalent words yield the same left-justified circuit.

Conversely, any left-justified circuit $L=(L_1,\ldots,L_k)$ defines a valid class in $\ncmonoid_\graph$. Indeed, reading the layers from left to right and the letters inside each layer in any fixed order, yields a word $w$ for which $\layer(w)=L$. And, since two different left-justified circuits yield two such different words, $w$ is a representative of a unique class in $\ncmonoid_\graph$.

Therefore, left-justified circuits are in bijection with pc-monomials: each class $[w]$ has a unique circuit representative $L(w)$, and two distinct left-justified circuits correspond to two distinct classes.

\subsubsection{Basic operations on left-justified circuits}
We now describe how to perform the basic operations of the PC monoid directly
on left-justified circuits, i.e, how to implement a product between two left-justified circuits satisfying the monoid relation
\begin{eqnarray}
  L(v)\cdot L(w) & = & L(vw), \\
\end{eqnarray}
and an involution of a left-justified circuit satisfying
\begin{equation}
  L(v)^\inv = L(v^\inv).
\end{equation}
Since left-justified circuits are in bijection with pc-monomials, these
procedures implement multiplication and involution directly in the quotient
monoid.

The guiding principle is that both multiplication
and involution can be processed layer by layer and can be implemented by repeated application of a single primitive operation that multiply a left-justified circuit by a single layer of commuting gates.

For efficient layer updates, it is useful to store a left-justified circuit
together with its \emph{tail vector}. Thus, an augmented circuit is a pair
$(L,e)$, where $L=(L_1,\ldots,L_k)$ is the sequence of layers and
$e=(e_1,\ldots,e_r)$ is the tail vector. This vector records, for each wire $i$, the layer index of the last gate
touching that wire:
\begin{equation}
  e_i 
  =
  \max\{\,t:\ \exists x\in L_t \text{ with } i\in S_x\,\},
\end{equation}
with maximum $0$ if the wire has not yet been touched. The vector $e$ is fully
determined by $L$, so it is only cached metadata; keeping it avoids scanning
the whole circuit during repeated products. If only $L$ is given, the tail
vector is obtained by scanning the layers of $L$ from right to left, stopping as
soon as all entries of $e$ have been determined.

We first define the basic primitive operation of multiplying a left-justified circuit by a single layer $M$ of commuting gates. Since the gates in $M$ commute, their
supports are disjoint. Hence their new layer positions can all be computed from
the same tail vector $e$.

\begin{algorithm}[H]
\caption{Product of a left-justified circuit with a single layer}
\label{algo:left_circuit_product}
\begin{algorithmic}[1]
  \REQUIRE Augmented left-justified circuit $(L,e)$ with
  $L=(L_1,\ldots,L_k)$ and $e=(e_1,\ldots,e_r)$, and a layer $M$ of commuting
  gates.
  \FOR {every gate $x\in M$}
  \STATE Compute
  $t_x\gets 1+\max\{\,e_i:\ i\in S_x\,\}$.
  \STATE Insert $x$ into layer $L_{t_x}$, creating a new
  empty layer $L_{k+1}$ if needed.
  \STATE For every $i\in S_x$, set $e_i\gets t_x$.
  \ENDFOR 
  \RETURN The updated pair $(L,e)$.
\end{algorithmic}
\end{algorithm}

The updated circuit is again left-justified, because each
gate $x\in M$ is placed in the first layer strictly after the last gate touching
one of the wires in $S_x$. Note that the for loop can be executed in parallel, since the gates in $M$ commute and therefore have disjoint supports. 
Given an input augment circuit $(L,e)$, let $\Phi_M(L,e)$ denote the output of Algorithm~\ref{algo:left_circuit_product} when multiplying by the layer $M$. 

Let now $K=(K_1,\ldots,K_\ell)$ be another left-justified circuit.
Multiplication by $K$ is obtained by applying the single-layer update to the
layers of $K$ from left to right:
\begin{equation}
  (L,e)\cdot K
  =
  \Phi_{K_\ell}\circ\cdots\circ\Phi_{K_1}(L,e).
\end{equation}
For a fixed alphabet and a fixed edge clique cover, its
cost is linear in the number of gates of $K$ (and even linear in the number $\ell$ of layers of $K$ if the for loop in Algorithm \ref{algo:left_circuit_product} is executed in parallel).

Note that the same primitive also provides a way to construct, alternatively to Algorithm \ref{algo:left_circuit_product}, the augmented left-justified circuit of a word
$w=y_1\cdots y_r$: start from the empty circuit with zero tail vector $(L,e)=(\emptyset,0)$ and
multiply successively by the single-letter layers
$\{y_1\},\ldots,\{y_r\}$.

For the involution operation, let us first define the involution of a single layer $M$:
\begin{equation}
  M^\inv:=\{x^\inv:x\in M\}.
\end{equation}
This is again a layer, because any two letters in $M$ commute if and only if
their involutions commute. Graphically, the operation replaces each gate label
$x$ by $x^\inv$ and moves the gate from the support $S_x$ to the support
$S_{x^\inv}$. Since the edge clique cover is closed under involution, this is
equivalently the support obtained by sending each wire $i$ to its involuted
wire $i^\ast$:
\begin{equation}
  S_{x^\inv}=\{\,i^\ast:\ i\in S_x\,\}.
\end{equation}

Let $L=(L_1,\ldots,L_k)$ be a non-empty left-justified circuit. Since $L$
represents the product $L_1\cdots L_k$, its involution is the reverse product of the involuted layers:
\begin{equation}
  (L_1\cdots L_k)^\inv
  =
  L_k^\inv L_{k-1}^\inv\cdots L_1^\inv .
\end{equation}
Thus the augmented left-justified circuit representing the involution is
obtained by first forming the augmented one-layer circuit
$(L_k^\inv,e^{L_k^\inv})$, where $e^{L_k^\inv}$ denote the tail vector of the one-layer circuit, that is
\begin{equation}
  e_i^{L_k^\inv}
  =
  \begin{cases}
    1, & \text{if there exists } x\in L_k^\inv\text{ such that } i\in S_x,\\
    0, & \text{otherwise.}
  \end{cases}
\end{equation}
One then successively applies the single-layer multiplications
associated with the remaining involuted layers:
\begin{equation}
  (L^\inv,e^\inv)
  =
  \Phi_{L_1^\inv}\circ\cdots\circ\Phi_{L_{k-1}^\inv}
  \bigl(L_k^\inv,e^{L_k^\inv}\bigr).
\end{equation}
The cost is linear in the total size of the supports scanned
during the single-layer updates. In particular, for a fixed alphabet and a fixed
edge clique cover, it is linear in the number of gates (or linear in the number of layers if the for loop in Algorithm \ref{algo:left_circuit_product} is executed in parallel).

\begin{example}[name=Running example,continues=example:abcd_circuit]
The multiplication $cbabdb\cdot ab$ gives in circuit form
\begin{equation}
  \begin{tikzpicture}[baseline=(current bounding box.center),scale=.9,transform shape]
    \draw[thick] (0,2) node[left] {$\clique_1$} -- (4.3,2);
    \draw[thick] (0,1) node[left] {$\clique_2$} -- (4.3,1);

    \filldraw[fill=green!20, draw=green!60, thick]
      (0.25,1.72) rectangle (0.85,2.28) node[midway] {$c$};
    \filldraw[fill=red!20, draw=red!60, thick]
      (0.25,0.72) rectangle (0.85,1.28) node[midway] {$b$};

    \filldraw[fill=violet!20, draw=violet!60, thick]
      (1.30,1.72) rectangle (1.90,2.28) node[midway] {$a$};
    \filldraw[fill=red!20, draw=red!60, thick]
      (1.30,0.72) rectangle (1.90,1.28) node[midway] {$b$};

    \filldraw[fill=blue!20, draw=blue!60, thick]
      (2.35,0.72) rectangle (2.95,2.28) node[midway] {$d$};

    \filldraw[fill=red!20, draw=red!60, thick]
      (3.40,0.72) rectangle (4.00,1.28) node[midway] {$b$};

    \foreach \x in {1.075,2.125,3.175}
      \draw[dashed, gray] (\x,0.5) -- (\x,2.5);

    \node at (0.55,0.15) {$L_1$};
    \node at (1.60,0.15) {$L_2$};
    \node at (2.65,0.15) {$L_3$};
    \node at (3.70,0.15) {$L_4$};
  \end{tikzpicture}
  \quad\cdot\quad
  \begin{tikzpicture}[baseline=(current bounding box.center),scale=.9,transform shape]
    \draw[thick] (0,2) -- (1.15,2);
    \draw[thick] (0,1) -- (1.15,1);

    \filldraw[fill=violet!20, draw=violet!60, thick]
      (0.25,1.72) rectangle (0.85,2.28) node[midway] {$a$};
    \filldraw[fill=red!20, draw=red!60, thick]
      (0.25,0.72) rectangle (0.85,1.28) node[midway] {$b$};

    \node at (0.55,0.15) {$K_1$};
  \end{tikzpicture}
  \quad=\quad
  \begin{tikzpicture}[baseline=(current bounding box.center),scale=.9,transform shape]
    \draw[thick] (0,2) -- (5.35,2);
    \draw[thick] (0,1) -- (5.35,1);

    \filldraw[fill=green!20, draw=green!60, thick]
      (0.25,1.72) rectangle (0.85,2.28) node[midway] {$c$};
    \filldraw[fill=red!20, draw=red!60, thick]
      (0.25,0.72) rectangle (0.85,1.28) node[midway] {$b$};

    \filldraw[fill=violet!20, draw=violet!60, thick]
      (1.30,1.72) rectangle (1.90,2.28) node[midway] {$a$};
    \filldraw[fill=red!20, draw=red!60, thick]
      (1.30,0.72) rectangle (1.90,1.28) node[midway] {$b$};

    \filldraw[fill=blue!20, draw=blue!60, thick]
      (2.35,0.72) rectangle (2.95,2.28) node[midway] {$d$};

    \filldraw[fill=violet!20, draw=violet!60, thick]
      (3.40,1.72) rectangle (4.00,2.28) node[midway] {$a$};
    \filldraw[fill=red!20, draw=red!60, thick]
      (3.40,0.72) rectangle (4.00,1.28) node[midway] {$b$};

    \filldraw[fill=red!20, draw=red!60, thick]
      (4.45,0.72) rectangle (5.05,1.28) node[midway] {$b$};

    \foreach \x in {1.075,2.125,3.175,4.225}
      \draw[dashed, gray] (\x,0.5) -- (\x,2.5);

    \node at (0.55,0.15) {$L_1$};
    \node at (1.60,0.15) {$L_2$};
    \node at (2.65,0.15) {$L_3$};
    \node at (3.70,0.15) {$L_4$};
    \node at (4.75,0.15) {$L_5$};
  \end{tikzpicture}.
\end{equation}
Assuming all letters are Hermitian, here is an example of involution:
\begin{equation}
  \left(\begin{tikzpicture}[baseline=(current bounding box.center),scale=.9,transform shape]
    \draw[thick] (0,2) node[left] {$\clique_1$} -- (2.2,2);
    \draw[thick] (0,1) node[left] {$\clique_2$} -- (2.2,1);

    \filldraw[fill=violet!20, draw=violet!60, thick]
      (0.25,1.72) rectangle (0.85,2.28) node[midway] {$a$};
    \filldraw[fill=red!20, draw=red!60, thick]
      (0.25,0.72) rectangle (0.85,1.28) node[midway] {$b$};

    \filldraw[fill=green!20, draw=green!60, thick]
      (1.30,1.72) rectangle (1.90,2.28) node[midway] {$c$};

    \draw[dashed, gray] (1.075,0.5) -- (1.075,2.5);

    \node at (0.55,0.15) {$L_1$};
    \node at (1.60,0.15) {$L_2$};
  \end{tikzpicture}\right)^\inv
 =
  \begin{tikzpicture}[baseline=(current bounding box.center),scale=.9,transform shape]
    \draw[thick] (0,2) -- (2.2,2);
    \draw[thick] (0,1) -- (2.2,1);

    \filldraw[fill=green!20, draw=green!60, thick]
      (0.25,1.72) rectangle (0.85,2.28) node[midway] {$c$};
    \filldraw[fill=red!20, draw=red!60, thick]
      (0.25,0.72) rectangle (0.85,1.28) node[midway] {$b$};

    \filldraw[fill=violet!20, draw=violet!60, thick]
      (1.30,1.72) rectangle (1.90,2.28) node[midway] {$a$};

    \draw[dashed, gray] (1.075,0.5) -- (1.075,2.5);

    \node at (0.55,0.15) {$L_1$};
    \node at (1.60,0.15) {$L_2$};
  \end{tikzpicture}.
\end{equation}
\end{example}

\begin{example} We provide a simple example where the index involution
$i\mapsto i^\ast$ is non-trivial. Let
\begin{equation}
  \xx=\{x,\bar{x},y\},
  \qquad
  x^\inv=\bar{x},
  \qquad
  \bar{x}^\inv=x,
  \qquad
  y^\inv=y .
\end{equation}
Assume that $x$ and $\bar{x}$ commute, but $y$ does not commute with either of them. The non-commutation graph has two maximal cliques
\begin{equation}
  \clique_1=\{x,y\},
  \qquad
  \clique_2=\{\bar{x},y\}.
\end{equation}
Then
\begin{equation}
  \clique_1^\inv=\clique_2,
  \qquad
  \clique_2^\inv=\clique_1,
\end{equation}
so $1^\ast=2$ and $2^\ast=1$. The supports are
\begin{equation}
  S_x=\{1\},
  \qquad
  S_{\bar{x}}=\{2\},
  \qquad
  S_y=\{1,2\}.
\end{equation}
Thus the involution does not merely relabel gates; it also swaps the supports.
For instance,
\begin{equation}
  \left(\begin{tikzpicture}[baseline=(current bounding box.center),scale=.95,transform shape]
    \draw[thick] (0,2) node[left] {$\clique_1$} -- (2.2,2);
    \draw[thick] (0,1) node[left] {$\clique_2$} -- (2.2,1);

    \filldraw[fill=orange!20, draw=orange!60, thick]
      (0.25,1.72) rectangle (0.85,2.28) node[midway] {$x$};

    \filldraw[fill=blue!20, draw=blue!60, thick]
      (1.30,0.72) rectangle (1.90,2.28) node[midway] {$y$};

    \draw[dashed, gray] (1.075,0.5) -- (1.075,2.5);
  \end{tikzpicture}\right)^\inv = 
  \begin{tikzpicture}[baseline=(current bounding box.center),scale=.95,transform shape]
    \draw[thick] (0,2) node[left] {$\clique_1$} -- (2.2,2);
    \draw[thick] (0,1) node[left] {$\clique_2$} -- (2.2,1);

    \filldraw[fill=blue!20, draw=blue!60, thick]
      (0.25,0.72) rectangle (0.85,2.28) node[midway] {$y$};

    \filldraw[fill=orange!20, draw=orange!60, thick]
      (1.30,0.72) rectangle (1.90,1.28) node[midway] {$\bar{x}$};

    \draw[dashed, gray] (1.075,0.5) -- (1.075,2.5);
  \end{tikzpicture}.
\end{equation}
\end{example}

\subsubsection{Relation to other representations of pc-monomials}\label{sec:other_representations}
The above representation of the elements of $\ncmonoid_\graph$ as left-justified circuits is closely related to other representations of pc-monomials that have appeared in the literature.

\paragraph{Foata normal form.}
Given a word $w$, its left-justified circuit representation $L(w) = (L_1,\ldots,L_k)$ can be represented back as a word $v = v_{L_1} v_{L_2} \cdots v_{L_k}$, where $v_{L_i}$ is the subword obtained by concatenating the letters in $L_i$ in lexicographic order. This word $v$ is a representative of the same class as $w$, and is known as the Foata, or Cartier-Foata, normal form $\foa(w):=v$ \cite{cartier1969applications}. The left-justified circuit representation and Foata normal forms are also closely related to the \emph{heap} representation of pc-monomials by Viennot \cite{viennot1986heaps}.

\begin{example}[name=Running example,continues=example:abcd_circuit]
  The Foata normal form obtained from the left-justified circuit representation \eqref{eq:abcd_ljc} of the word $cbabdb$ is $\foa(cbabdb) = bcabdb$, or $bc|ab|d|b$ where we use $|$ to separate the layers. Note that the Foata normal form is different from the lexicographic normal form $\lex(cbabdb)=bbcadb$ obtained in \eqref{eq:cbabd_lex}.
\end{example}

\paragraph{Occurrence graph and dependency graph.}
A circuit is a particular time-ordered representation of a pc-monomial, where different gate arrangements compatible with the gate-slide moves lead to different time ordering. It is also possible to represent a pc-monomial by a more abstract graph representation that captures the fixed relative orders between gates without explicitly encoding a time ordering. Such a representation can be obtained by taking the ``immediate wire-successor graph" of any circuit representation, where one interprets each gate occurrence as a vertex and draws a directed edge from one occurrence to another if the first occurrence is an immediate predecessor of the second along some wire in the circuit. The resulting graph is the \emph{occurrence graph} introduced by Duboc \cite{duboc1986some}.

Explicitely, let $C$ be an arbitrary circuit representation of some word $w$. Let $|x|_w$ denote the number of occurrences of the gate $x$ in $C$ (equivalently of the letter $x$ in $w$). For each gate $x\in\xx$, we can label the occurrences of $x$ in $C$ from left to right as $(x,1), (x,2), \ldots, (x,|x|_w)$. This labelling is well-defined and independent of the choice of circuit representation $C$ of $w$ (i.e., left-jusfitifed, right-justified or other), since two gates of the same type $x$ must necessarily be consecutive and their relative order cannot be changed by any of the allowed gate-slide moves.

Then the occurrence graph $\Gamma$ is the directed acyclic graph (DAG) constructed as follows.
\begin{enumerate}
  \item Vertices: the vertices are the gate occurrences, i.e., the set 
  \begin{equation}\Omega = \{(x,k): x\in\xx, k=1,\ldots,|x|_w\}
  \end{equation} 
    \item Edges: there is a directed edge
    \begin{equation}
    (x,k)\to(y,\ell)
    \end{equation}
    if there exists a wire $i$ such that the two occurrences $(x,k)$ and $(y,\ell)$ are immediately consecutive along wire $i$. Equivalently, $i\in S_x\cap S_y$, the gate $(x,k)$ stands on the left of gate $(y,\ell)$, and there is no other gate between the gate $(x,k)$ and $(y,\ell)$ on wire $i$.
\end{enumerate}
The DAG $\Gamma$ captures the immediate dependencies between gate occurrences in the circuit. Note that any two circuit representations $C$ and $C'$ of two words $w$ and $w'$ in the same class $[w]=[w']$ yield the same occurrence graph $\Gamma=\Gamma'$ since the allowed gate-slide moves do not change the relative order of gates and the immediate successor relations on any given wire. 
Thus, the occurrence graph provides a canonical representation of pc-monomials $[w]$.

From the occurrence graph $\Gamma$, on can read off the left-justified circuit representation $L$ of $[w]$ by performing a topological sort of the vertices of $\Gamma$ and grouping together in the same layer all vertices that have the same longest path length from the source vertices (i.e., vertices with no incoming edges).  Hence, the occurrence graph and left-justified circuit representations are equivalent representations of pc-monomials.

One can also define related graphs that carry the same information. In particular, the dependency graph $\bar \Gamma$  captures all dependencies between gate occurrences, not just the immediate ones \cite{duboc1986some}. It is obtained by adding a directed edge from $(x,k)$ to $(y,\ell)$ whenever the two gates share a wire $i$ and $(x,k)$ occurs before $(y,\ell)$, regardless of whether they are immediate successors along that wire. 

\begin{example}[name=Running example,continues=example:abcd_circuit]
  The occurrence graph associated to the circuit \eqref{eq:abcd_ljc} is
  \begin{equation}
    \Gamma = 
    \begin{tikzpicture}[baseline=(current bounding box.center)]
      \node (c) at (0,2) {};
      \node (b1) at (0,0) {};
      \node (a) at (2,2) {};
      \node (b2) at (2,0) {};
      \node (d) at (3,1) {};
      \node (b3) at (4.7,1) {};
      \fill (c) circle (1.2pt);
      \fill (b1) circle (1.2pt);
      \fill (a) circle (1.2pt);
      \fill (b2) circle (1.2pt);
      \fill (d) circle (1.2pt);
      \fill (b3) circle (1.2pt);
      \node[above] at (c) {$(c,1)$};
      \node[below] at (b1) {$(b,1)$};
      \node[above] at (a) {$(a,1)$};
      \node[below] at (b2) {$(b,2)$};
      \node[above right] at (d) {$(d,1)$};
      \node[right] at (b3) {$(b,3)$};
      \draw[-{Stealth[length=2mm]}] (c) -- (a);
      \draw[-{Stealth[length=2mm]}] (b1) -- (b2);
      \draw[-{Stealth[length=2mm]}] (a) -- (d);
      \draw[-{Stealth[length=2mm]}] (b2) -- (d);
      \draw[-{Stealth[length=2mm]}] (d) -- (b3);
    \end{tikzpicture}
  \end{equation}

The dependency graph is obtained by adding the edges $(c,1)\to(d,1)$, $(b,1)\to(d,1)$, $(b,1)\to(b,3)$, and $(b,2)\to(b,3)$ to the above occurrence graph:
\begin{equation}
\bar \Gamma = 
  \begin{tikzpicture}[baseline=(current bounding box.center)]
    \node (c) at (0,2) {};
    \node (b1) at (0,0) {};
    \node (a) at (2,2) {};
    \node (b2) at (2,0) {};
    \node (d) at (3,1) {};
    \node (b3) at (4.7  ,1) {};
    \fill (c) circle (1.2pt);
    \fill (b1) circle (1.2pt);
    \fill (a) circle (1.2pt);
    \fill (b2) circle (1.2pt);
    \fill (d) circle (1.2pt);
    \fill (b3) circle (1.2pt);
    \node[above] at (c) {$(c,1)$};
    \node[below] at (b1) {$(b,1)$};
    \node[above] at (a) {$(a,1)$};
    \node[below] at (b2) {$(b,2)$};
    \node[above right] at (d) {$(d,1)$};
    \node[right] at (b3) {$(b,3)$};
    \draw[-{Stealth[length=2mm]}] (c) -- (a);
    \draw[-{Stealth[length=2mm]}] (b1) -- (b2);
    \draw[-{Stealth[length=2mm]}] (a) -- (d);
    \draw[-{Stealth[length=2mm]}] (b2) -- (d);
    \draw[-{Stealth[length=2mm]}] (d) -- (b3);
    % Additional edges for the dependency graph
    \draw[->,] (c) -- (d);
    \draw[->,] (b1) -- (d);
    \draw[->,] (b1) -- (b3);
    \draw[->,] (b2) -- (b3);
  \end{tikzpicture}
\end{equation}
\end{example}

\subsection{Wire representation and clique projections}
\label{subsec:wire_projections}

The circuit representation described above reads a pc-monomial vertically: a
left-justified circuit is a sequence of layers $L=(L_1,\ldots,L_k)$,
where each layer consists of pairwise commuting gates. There is a dual way of
reading the same circuit. Instead of reading it layer by layer, one may read it
wire by wire and represent it by a tuple $W=(w_1,\ldots,w_m)$ where each $w_i=(w_{i_1},\ldots,w_{i_{r_i}})$ is the sequence of gates encountered along the wire $\clique_i$ from left to right. This is the \emph{wire representation} of a pc-monomial, and the words $w_i$ are its \emph{wire words}.

\begin{example}[name=Running example,continues=example:abcd_circuit]
  The word $cbabdb$, corresponding to the circuit \eqref{eq:abcd_ljc}, has the layer representation $L = cb|ab|d|b$ (where we use $|$ to separate layers), and the wire representation $W=(cad,bbdb)$.  
\end{example}
Note that the wire representation is independent of the circuit representation (i.e., left-justified, right-justified, or any other), because the sequences of gates along a wire are not affected by swaps of commuting gates, since they have disjoint supports.

Given a word $v=y_1\cdots y_r$, its wire representation can be immediately computed, without building first the circuit representation. For this, define the projection of $v$ on the wire $\clique_i$ by deleting all letters that do not act on that wire:
\begin{equation}\label{eq:wire_projection}
  \pi_i(v)
  :=
  y_{j_1}\cdots y_{j_s},
  \qquad
  j_1<\cdots<j_s,
  \qquad
  y_{j_q}\in\clique_i .
\end{equation}
Notice that $\pi_i(v)$ is a fully non-commutative word in $\ncmonoid_i$, the free monoid generated by the subalphabet corresponding to the clique $\clique_i \subset \xx$. The wire words $\pi_i(v)$ are usually referred in the PC monoid litterature as the \emph{clique projections} of $v$. The wire (or clique) representation $W$ of $v$ is then the tuple\begin{equation}
 \Pi(v)
  :=
  \bigl(\pi_1(v),\ldots,\pi_m(v)\bigr)\in \ncmonoid_1\times\cdots\times\ncmonoid_m.
\end{equation}
It is immediate to implement the projections \eqref{eq:wire_projection} to compute the wire representation $\Pi(v)$ from a word $v$. For a
fixed alphabet and a fixed edge clique cover, the cost is linear in the number of
gate occurrences.

The clique projection construction is standard in the theory of partially commutative monoids and provides a convenient way to represent pc-monomials \cite[Proposition 1.2]{duboc1986some}.

\begin{proposition}[Projection theorem]
\label{prop:wire_projection_theorem}
For words $u,v\in\ncmonoid$,
\begin{equation}
  [u]=[v]
  \qquad\Longleftrightarrow\qquad
  \Pi(u)=\Pi(v).
\end{equation}
\end{proposition}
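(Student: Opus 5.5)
The forward direction ($\Rightarrow$) is the easy one. Since $[u]=[v]$ means $u$ and $v$ are connected by a finite sequence of swaps of adjacent commuting letters, it suffices to check that a single swap $w=s\,xy\,t\mapsto w'=s\,yx\,t$ with $(x,y)\in E$ leaves every projection $\pi_i$ unchanged. By the construction of the supports from the edge clique cover, $(x,y)\in E$ means that no clique $\clique_i$ contains both $x$ and $y$. So for each $i$, at most one of the two letters survives in $\pi_i$. If neither survives, both words project to $\pi_i(s)\pi_i(t)$. If exactly one survives, both project to $\pi_i(s)\,x\,\pi_i(t)$ or to $\pi_i(s)\,y\,\pi_i(t)$. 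Hence $\Pi(w)=\Pi(w')$, and induction on the number of swaps gives the claim.

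For the converse ($\Leftarrow$) I would argue by induction on $|u|$. First, the projections determine the letter multiplicities: every letter $x$ lies in some clique $\clique_i$ (the cover covers the self-loops), so $|u|_x=|\pi_i(u)|_x$. Hence $\Pi(u)=\Pi(v)$ forces $|u|=|v|$, and the base case $|u|=0$ is trivial. For $|u|>0$, write $u=x u'$ and locate the first occurrence of $x$ in $v$, say $v=v_1 x v_2$ with $x\notin v_1$.

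The key step is to show that every letter $z$ of $v_1$ commutes with $x$. Suppose some $z$ in $v_1$ did not commute with $x$. If $z\neq x$, then $(x,z)\notin E$, so some clique $\clique_i$ contains both $x$ and $z$. On this wire, $\pi_i(u)$ begins with $x$, while $\pi_i(v)$ has $z$ (or an earlier letter of $v_1$ in $\clique_i$) strictly before any $x$. The case $z=x$ is excluded because $x\notin v_1$. Thus the first letter of $\pi_i(v)$ is not $x$, contradicting $\pi_i(u)=\pi_i(v)$. So $x$ commutes with all of $v_1$, and we can slide it to the front: $[v]=[x v_1 v_2]$.

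By the forward direction, $\Pi(x v_1v_2)=\Pi(v)=\Pi(u)$. For each $i$, both $\pi_i(u)$ and $\pi_i(xv_1v_2)$ start with $x$ exactly when $x\in\clique_i$. Cancelling this common first letter wire by wire gives $\Pi(u')=\Pi(v_1v_2)$. The induction hypothesis then yields $[u']=[v_1v_2]$, hence $[u]=[x][u']=[x v_1 v_2]=[v]$.

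The main obstacle is the key step of the converse. It is where the covering hypothesis is genuinely used: every non-commuting pair must share a wire, and every letter must lie on some wire, so that the letter $x$ itself is visible in the projections. The remaining steps are bookkeeping.
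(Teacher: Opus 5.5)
Your proof is correct and takes the same route as the paper. For the forward direction you check that a single swap of commuting letters leaves every wire word unchanged, and for the converse you argue by induction on $|u|$; the paper only asserts that this induction works, and your argument (the first letter $x$ of $u$ commutes with the prefix of $v$ before its first occurrence, then you cancel $x$ wire by wire) supplies the details it leaves implicit.
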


\begin{proof}
The direct implication is immediate. If $[u]=[v]$, then $u$ and $v$ are related by swaps of adjacent commuting letters. As we noticed above, such swaps do not change any wire word, hence $\Pi(u)=\Pi(v)$. The reverse implication can be easily obtained by induction on the length of $u$. 
\end{proof}

\subsubsection{Reconstructible wire words}
By the above propostion, the map $\Pi\,:\, \ncmonoid\to\ncmonoid_1\times\cdots\times\ncmonoid_m$ is injective. However, it is not a bijection. 
Indeed, an arbitrary tuple of wire words $W=(w_1,\ldots,w_m) \in \ncmonoid_1 \times \cdots \times \ncmonoid_m$ need not arise from a global word $v$ such that $\Pi(v)=W$. In other words, not every tuple of wire words is \emph{reconstructible}.

A first necessary condition is that occurrences of a letter that appears on
several wires must be consistently matched across those wires.  For a tuple $W=(w_1,\ldots,w_m)$, let $|x|_{w_i} $ denote the number of occurrences of a letter $x$ in the wire word $w_i$.
The tuple $W$
is called \emph{quasi-reconstructible} if \emph{(i)} the letter $x$ appears only on the wire in its support $S_x$, i.e., $|x|_{w_i}=0$ for all $i\notin S_x$, and \emph{(ii)} the number of occurrences of $x$ is the same in all wire words $w_i$ in its support $S_x$, i.e.,
\begin{equation}
  |x|_{w_{i_1}} = \ldots = |x|_{w_{i_k}}:=|x|_{W}\qquad \text{for all } i_j \in S_x.
\end{equation}
In that case the occurrences of $x$ can be matched across the wires in $S_x$ and we can label the occurences of $x$ in $W$ as $(x,1),(x,2),\ldots,(x,|x|_W)$.

A quasi-reconstructible tuple of wire words defines a \emph{pre-circuit}: the
global gate occurrences are known, and each wire carries a prescribed horizontal
order of the gates acting on it. But a circuit comes with an extra global time ordering that assigns a layer coordinate to each gate occurrence. The existence of such a global time ordering is not guaranteed for an arbitrary pre-circuit, and is the extra condition that distinguishes reconstructible tuples of wire words from quasi-reconstructible ones.

As in subsection~\ref{sec:other_representations}, we can assign to a pre-circuit, or equivalently to a quasi-reconstructible tuple of wire words, an occurrence graph $\Gamma$ that captures the immediate successor relations along the wires. As before, its vertices are the gate occurrences $\Omega = \{(x,k)\,:\, x\in \xx, k\in \{1,\ldots,|x|_W\}\}$ and there is a directed edge from $(x,k)$ to $(y,\ell)$ if $(y,\ell)$ is the immediate successor of $(x,k)$ along some wire word $w_i$.

\begin{proposition} \cite[Proposition 1.4]{duboc1986some} \label{prop:reconstructibility}
Let $W=(w_1,\ldots,w_m)$ be a tuple of wire words. Then it is reconstructible if and only if it is quasi-reconstructible and the associated occurrence graph $\Gamma$ is acyclic.
\end{proposition}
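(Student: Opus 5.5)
The plan is to prove the two directions separately. Necessity comes essentially for free. For sufficiency, I would construct a global word from a topological order of the occurrence graph.

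\emph{Necessity.} Suppose $W=\Pi(v)$ for some word $v=y_1\cdots y_r$. Each $\pi_i(v)$ contains exactly the occurrences of the letters of $\clique_i$ in $v$, in the order they appear in $v$. Hence $|x|_{w_i}=0$ for $i\notin S_x$. For $i\in S_x$, $|x|_{w_i}$ equals the number of occurrences of $x$ in $v$, which gives quasi-reconstructibility. Moreover, the $k$-th occurrence of $x$ on any wire in $S_x$ is the $k$-th occurrence of $x$ in $v$, so the labelling $(x,k)$ is a labelling of positions of $v$. Every edge $(x,k)\to(y,\ell)$ of $\Gamma$ joins two consecutive letters of some $\pi_i(v)$. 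Since $\pi_i$ preserves the order of positions, the position of $(x,k)$ in $v$ is strictly smaller than that of $(y,\ell)$. Positions therefore strictly increase along every edge, and $\Gamma$ is acyclic.

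\emph{Sufficiency.} Assume $W$ is quasi-reconstructible and $\Gamma$ is acyclic. Choose a topological order $o_1,\dots,o_r$ of the vertex set $\Omega$ and let $v=y_1\cdots y_r$, where $y_j$ is the letter of $o_j$. I then claim $\Pi(v)=W$. Fix a wire $i$ and write $w_i=z_1\cdots z_s$. By quasi-reconstructibility, each position $q$ of $w_i$ corresponds to a unique occurrence vertex $(z_q,k_q)$, where $k_q$ counts the earlier occurrences of $z_q$ in $w_i$. These vertices are distinct, and they are exactly the vertices of $\Omega$ whose letter lies in $\clique_i$. That no other vertices appear uses condition (i) together with the equal counts in (ii).

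The edges $(z_q,k_q)\to(z_{q+1},k_{q+1})$ belong to $\Gamma$, so the topological order lists these vertices in the order $q=1,\dots,s$. The letters of $v$ lying in $\clique_i$ are precisely these occurrences, hence $\pi_i(v)=z_1\cdots z_s=w_i$. One point needs care here: the $k$-th occurrence of $x$ in $v$ must be the vertex $(x,k)$, so that the labels of $v$ and of $W$ agree. This holds because the occurrences of $x$ form a chain $(x,1)\to(x,2)\to\cdots$ along any wire of $S_x$, which is nonempty since every letter lies in some clique.

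\emph{Main obstacle.} I expect the delicate step to be this identification of vertex labels: the occurrence vertex $(x,k)$ must be the same object on every wire in $S_x$, and the topological order must respect each wire order, not only the immediate-successor edges. The wire order follows by transitivity along the chain of immediate-successor edges. Once these bookkeeping points are settled, both directions are short.
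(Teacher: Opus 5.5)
Your proof is correct and follows the same route as the paper: necessity holds because positions in $v$ strictly increase along every edge of $\Gamma$, and sufficiency comes from reading off a global word $v$ from a topological order (linear extension) of the acyclic occurrence graph. You also fill in details the paper only sketches, namely that reconstructibility forces quasi-reconstructibility, and the wire-by-wire check that $\pi_i(v)=w_i$ using transitivity along each wire.
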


Indeed, it is clear that if $W$ is reconstructible, then the occurrence graph $\Gamma$ is acyclic. Conversely, if $\Gamma$ is acyclic, then it defines a partial order on the gate occurrences $\Omega$. Any total order compatible with this partial order defines a valid time ordering of the gate occurrences, yielding a global word $v$ such that $\Pi(v)=W$.

Acyclicity implies, in particular, consistency on intersections. Namely, if
$W=(w_1,\ldots,w_m)$ is reconstructible, then for all $i,j$,
\begin{equation}
  \pi_{ij}(w_i)=\pi_{ij}(w_j),
\end{equation}
where $\pi_{ij}$ denotes the projection onto
$\clique_i\cap\clique_j$. Indeed, if $W$ is reconstructible, then there exists a global word $w$ such that $\pi_{k}(w)=w_k$. Then $\pi_{ij}(w_i) = \pi_{ij}(w) = \pi_{ij}(w_j)$. This condition is
necessary but not sufficient: the tuple may agree on all intersections and
still impose a cyclic order obstruction, as in the example below.

\begin{example} Let $\xx = \{a,b,c,d\}$ and $\complem\graph$ have edges $\{\{a,b\},\{b,c\},\{c,d\},\{d,a\}\}$, so an edge clique cover is given by $\clique_1=\{a,b\}$, $\clique_2=\{b,c\}$, $\clique_3=\{c,d\}$, $\clique_4=\{d,a\}$.
  
Consider the tuple of clique words  $v = (ab, bc, cd, da)$. It is quasi-reconstructible, because each letter $a,b,c,d$ occurs exactly once on each wire on which it appears.

The associated occurrence graph is
\begin{equation}
  \begin{tikzpicture}[baseline=(current bounding box.center),scale=0.9]
    \node[label=left:{$(a,1)$}] (a) at (0,2) {};
    \node[label=right:{$(b,1)$}] (b) at (2,2) {};
    \node[label=right:{$(c,1)$}] (c) at (2,0) {};
    \node[label=left:{$(d,1)$}] (d) at (0,0) {};
    \fill (a) circle (1.2pt);
    \fill (b) circle (1.2pt);
    \fill (c) circle (1.2pt);
    \fill (d) circle (1.2pt);
    \draw[-{Stealth[length=2mm]}] (a) -- (b);
    \draw[-{Stealth[length=2mm]}] (b) -- (c);
    \draw[-{Stealth[length=2mm]}] (c) -- (d);
    \draw[-{Stealth[length=2mm]}] (d) -- (a);
  \end{tikzpicture}
\end{equation}
Since this graph is cyclic, there exist no global word $w$ such that $\Pi(w)=v$. 

Consider, instead the tuple of clique words  $X = (ab, bc, cd, ad)$. It is also quasi-reconstructible and its occurrence graph is
\begin{equation}
  \begin{tikzpicture}[baseline=(current bounding box.center), scale=0.9]
    \node[label=left:{$(a,1)$}] (a) at (0,2) {};
    \node[label=right:{$(b,1)$}] (b) at (2,2) {};
    \node[label=right:{$(c,1)$}] (c) at (2,0) {};
    \node[label=left:{$(d,1)$}] (d) at (0,0) {};
    \fill (a) circle (1.2pt);
    \fill (b) circle (1.2pt);
    \fill (c) circle (1.2pt);
    \fill (d) circle (1.2pt);  
    \draw[-{Stealth[length=2mm]}] (a) -- (b);
    \draw[-{Stealth[length=2mm]}] (b) -- (c);
    \draw[-{Stealth[length=2mm]}] (c) -- (d);
    \draw[-{Stealth[length=2mm]}] (a) -- (d);
  \end{tikzpicture}
\end{equation}
Since this graph is acyclic, there exists a global word $w$ such that $\Pi(w)=X$. For instance, $w=abcd$ works. 
\end{example}

While Proposition~\ref{prop:reconstructibility} provides a criterion for reconstructibility, it is not directly algorithmic. We present below a simple algorithm from \cite{liu1990efficient} that tests whether a quasi-reconstructible tuple of wire words is reconstructible and, if yes, produces a left-justified circuit representing it.

The procedure scans each wire word from left to right and repeatedly extracts the next feasible layer. For a tuple $W=(w_1,\ldots,w_m)$, let $h_i$ be the current head position in $w_i$, i.e., the first not-yet-consumed letter. Equivalently, $(w_i)_1\cdots (w_i)_{h_i-1}$ has already been assigned to previous layers, while $(w_i)_{h_i}\cdots (w_i)_{|w_i|}$ remains to be placed. We then define
\begin{equation}
  F(W;h)
  :=
  \{\,x\in\xx:\ \forall i\in S_x, (w_i)_{h_i}=x\,\}.
\end{equation}
Thus $F(W;h)$ is the set of letters that are simultaneously visible at the current head position of every wire on which they act; these are exactly the letters that can be placed next.

\begin{algorithm}[H]
\caption{Reconstruction from wire projections}
\label{algo:wire_projection_reconstruction}
\begin{algorithmic}[1]
  \REQUIRE A quasi-reconstructible tuple $W=(w_1,\ldots,w_m)$.
  \ENSURE Either \textsc{False}, or \textsc{True} together with an augmented
  left-justified circuit $(L,e)$ representing $W$.
  \STATE Initialize $h_i\gets 1$ and $e_i\gets 0$ for $i=1,\ldots,m$.
  \STATE Initialize $L\gets()$.
  \WHILE{there exists $i$ such that $h_i\le |w_i|$}
    \STATE $A\gets F(W;h)$.
    \IF{$A=\emptyset$}
      \RETURN \textsc{False}.
    \ELSE
      \STATE Append the layer $A$ to $L$.
      \STATE Let $t$ be the index of the newly appended layer.
      \STATE For each $x\in A$ and each $i\in S_x$, increment $h_i$ by $1$
      and set $e_i\gets t$.
    \ENDIF
  \ENDWHILE
  \RETURN \textsc{True}, together with $(L,e)$.
\end{algorithmic}
\end{algorithm}

At any stage, the head positions $h_i$ describe the part of the pre-circuit that has
not yet been placed. If the remaining occurrence graph is acyclic, it has at
least one source, i.e., one remaining gate occurrence with no predecessor. Such
a source must be visible at the front of every wire in its support, and hence
belongs to $F(W;h)$. Conversely, every letter in $F(W;h)$ corresponds to a
remaining gate occurrence with no predecessor. Thus $F(W;h)$ is precisely the
set of gates that can be placed in the next layer.

After appending this layer, the algorithm advances one step along every wire
touched by the gates in $F(W;h)$. This removes the corresponding source
vertices from the remaining occurrence graph. Repeating this procedure is
therefore the same as peeling off the acyclic graph layer by layer. If the
algorithm exhausts all wire words, it has constructed the left-justified circuit
associated with the earliest possible time assignment. If it gets stuck with
$F(W;h)=\emptyset$ while some wire word is not exhausted, then the remaining
graph has no source, hence contains a directed cycle. Therefore the algorithm
succeeds exactly when the occurrence graph is acyclic, and in that case the
returned pair $(L,e)$ is the augmented left-justified circuit representing $W$.

Let  $N:=\sum_{i=1}^m |w_i|$ 
be the total size of the tuple of wire words. With a queue/counter
implementation, Algorithm~\ref{algo:wire_projection_reconstruction} runs in
linear time in $N$, up to constants depending only on the alphabet and the
chosen clique cover.
Indeed, for each letter $x$, one can maintain how many wires in $S_x$ currently
display $x$ at their front. A letter $x$ is ready precisely when this counter is
equal to $|S_x|$. The ready letters are stored in a queue. When a layer $A$ is
removed, the algorithm advances the heads only on the wires touched by the
letters in $A$, and updates the counters of the newly exposed front letters.
Each occurrence in each wire word is exposed once and consumed once. Hence the
total number of counter updates is $O(N)$ for a fixed alphabet and fixed clique
cover.

\subsubsection{Basic operations}

Multiplication and involution of pc-monomials can be implemented directly on
wire projections, without reconstructing either a global word or a circuit.
They are especially simple in this representation.

Let
\begin{equation}
  \Pi(u)=(u'_1,\ldots,u'_r),
  \qquad
  \Pi(v)=(v'_1,\ldots,v'_r)
\end{equation}
be two wire representations. Then multiplication is componentwise concatenation:
\begin{equation}
  \Pi(uv)
  =
  \Pi(u)\Pi(v)
  =
  (u'_1v'_1,\ldots,u'_rv'_r).
\end{equation}

For the involution, recall that since the edge clique cover is closed under involution, each wire $i$ has an involuted wire $i^\ast$ satisfying $\clique_i^\inv=\clique_{i^\inv}$.
If $v=y_1\cdots y_s$ is a word, we write
\begin{equation}
  v^\inv:=y_s^\inv\cdots y_1^\inv .
\end{equation}
Then, for every wire $i$,
\begin{equation}
  \pi_i(v^\inv)
  =
  \bigl(\pi_{i^\inv}(v)\bigr)^\inv .
\end{equation}
Indeed, the wire $i$ in $v^\inv$ corresponds to the involuted wire $i^\inv$ in
$v$, and the order of the local word is reversed.

Therefore, if $\Pi(v)=(v'_1,\ldots,v'_r)$, then
\begin{equation}
  \Pi(v^\inv)
  =
  \bigl((v'_{1^\inv})^\inv,\ldots,(v'_{r^\inv})^\inv\bigr).
\end{equation}
If all variables are Hermitian, this simply
reverses each wire word.

\begin{example}[name=Running example,continues=example:abcd_circuit]
In the wire representation, the product $cbabdb\cdot ab$ is $(cad,bbdb)(a,b) = (cada,bbdbb)$. The involution of $cbabdb$ is $(cad,bbdb)^\inv = (dac,bdbb)$.
\end{example}

Note that when wire words are computed from an actual word, or equivalently from an
actual circuit, quasi-reconstructibility and reconstructibility are automatic.
Thus, if wire projections are used as the data structure and are initialized
from a valid pc-monomial, one never needs to test reconstructibility after
each multiplication and involution operation since they preserve validity.

\subsection{Implementation of PCPO}
\label{subsec:pcpo_implementation}
The preceding representations can be used in practice to build
SDP relaxations for PCPO following the hybrid approach discussed in subsection~\ref{sec:hybrid_approaches}, where the partial commutation relations are handled by working in the quotient algebra $\kk[\ncmonoid_\graph]$. The main
implementation choice is how to represent pc-monomials. The
most convenient representation is the wire representation
\begin{equation}
  \Pi(w)=(\pi_1(w),\ldots,\pi_m(w)).  
\end{equation}
This representation is canonical in the quotient monoid, supports direct
multiplication and involution, and avoids repeated normalization of word
representatives.

At a given relaxation level, the first task is to build a monomial basis. For
the degree-truncated relaxation,  the set of retained moments are indexed by the pc-monomials of degree $2d$: 
\begin{equation}
  B_{2d}=\ncmonoid_{\graph,{2d}}
  =
  \{\, [w] : |w|\leq 2d\,\}.
\end{equation}
Using the wire representation, we identify this set with its image under
$\Pi$:
\begin{equation}
  B_{2d}
  \simeq
  \Pi(B_{2d})
  =
  \{\, W:\ W=\Pi(w)\text{ for some }w,\ \deg(w)\leq 2d\,\}.
\end{equation}
In an implementation, each element of $B_{2d}$ is thus stored as a tuple $W=(w_1,\ldots,w_m)$ of wire words.
Since the wire representation is injective, this tuple can be used as a
canonical key. The basis can be stored as a hash table, mapping each tuple of wire words to its index in the basis. This allows for efficient lookup and insertion of new basis elements.

The basis $B_{2d}$ can be generated incrementally. Starting from the empty tuple
$\Pi(\id)$, one multiplies on the right by the single-letter wire tuples
$\Pi(x)$, for $x\in\xx$. This multiplication simply amounts to appending the letter $x$ to the wires in $S_x$. 
Different insertion orders may produce the same pc-monomial, i.e., the same wire tuple, but then they are automatically merged by the hash table.

Some bookeeping is useful for the degree. Since a letter may appear on several wires, the degree of a basis element is not simply the sum of the lengths of its wire words. One has to count the number of matched occurrences of each letter in the wire words. In
practice, it is simpler to cache the degree during basis generation: the degree
is initialized to $0$ for $\Pi(\id)$ and is incremented by $1$ whenever one
multiplies by a letter.

Every input polynomial can then be represented as a sparse map from wire tuples to
coefficients. If the problem is already given in the quotient algebra
$\kk[\ncmonoid_\graph]$, no further normalization is needed. If instead the
input is given in the full free algebra, each word monomial is first projected
to the quotient by computing its wire representation, and coefficients of words
with the same wire tuple are collected. Thus, for instance, an objective
polynomial
\begin{equation}
  p=\sum_w p_w w
\end{equation}
is converted into the quotient polynomial
\begin{equation}
  [p]
  =
  \sum_W
  \left(
    \sum_{w:\,\Pi(w)=W} p_w
  \right) W = \sum_W p_W W,
\end{equation}
where $W$ ranges over valid wire tuples in the basis. The same projection is applied to the
constraint polynomials $q$ and $r$. 

The moment and localizing matrices are then built directly in the wire representation, using the multiplication and involution rules described above.
\begin{equation}
  M_d(q,y)_{V,W}
  =
 \sum_U q_U y_{V^\inv U W},
  \qquad
  V,W\in B_{d-\lceil \deg(q)/2\rceil}.
\end{equation}
Here $V^\inv U W$ is computed directly in the wire representation, by applying
the wire-level involution to $V$ and then componentwise concatenating with
$U$ and $W$. 

The additional equality constraints $R$ are imposed using the same principle. One adds the linear constraint
\begin{equation}
  \sum_U r_U\, y_{V^\inv U W}=0,\qquad V,W\in B_{d-\lceil \deg(r)/2\rceil},
\end{equation}
where $V^\inv U W$ is computed by wire-level involution and multiplication.

In this representation, equality testing of pc-monomials is just equality of
keys. There is no need to compare word representatives or to run a normal-form
algorithm. Moreover, if all monomials are initialized from valid wire tuples,
one does not need to test reconstructibility during SDP construction:
multiplication and involution preserve valid wire representations.

Note that the moment sequence $y$ satisfies the involutive symmetry
\begin{equation}
  y_{W^\inv}=\overline{y_W},
\end{equation}
where $\overline{y_W}$ denotes the complex conjugate of $y_W$. This immediately follows from the definition of the moment variables as $y_W=L(W)$ for a positive linear functional $L$ on $\kk[\ncmonoid_\graph]$, or equivalently from the fact that the moment matrix $M_d(y)$ is Hermitian.

In practice, one can thus store each moment variables modulo the involution
$W\mapsto W^\inv$. For each wire tuple $W$, choose a canonical representative $\tilde W$ of the orbit $\{W,W^\inv\}$ under the involution, for instance by taking the minimum of $W$ and $W^\inv$ with respect to any fixed total order on wire tuples:
\begin{equation}
 \tilde W
  :=
  \min\{W,W^\inv\}\,.
\end{equation}
Then store one complex moment variable
\begin{equation}
  y_{\tilde W}=a_{\tilde W}+i b_{\tilde W},
\end{equation}
represented by the two real scalar variables $a_{\tilde W}$ and $b_{\tilde W}$. A lookup of
$y_W$ returns $a_{\tilde W}+i b_{\tilde W}$ if $W=\tilde W$, and $a_{\tilde W}-i b_{\tilde W}$ if
$W=(\tilde W)^\inv$. If $W=W^\inv$, or if the underlying field $\kk=\R$ then only the real
variable $a_{\tilde W}$ is stored. In this way, the conjugation symmetry is built into the
parametrization, resulting in a more compact representation.

 An implementation of PCPO based on the wire representation has been developed by two of the authors and is available at \url{https://github.com/abh1mishra/PCPOP.jl}.

\subsubsection{Special cases}
If the alphabet is fully non-commutative, then the non-commutation graph has a single maximal clique.
There is only one wire, and the wire representation is just the original
non-commutative word.

At the opposite extreme, if the alphabet is fully
commutative, then one may take one wire per letter. The wire representation then
records only the number of occurrences of each letter, and is exactly the usual
exponent-vector representation of commutative monomials presented in subsection~\ref{subsec:cpo_special_case}.

More generally, if the non-commutation graph $\complem\graph$ decomposes as a disjoint union of
connected components $\complem\graph_1,\ldots,\complem\graph_t$  then the PC monoid decomposes as a direct product of the
PC monoids associated with these components. Letters belonging to
different components commute with one another, and the wire representation
splits into independent blocks. In this situation, basis generation, multiplication, and involution can be performed component by component.

The $m$-partite standard Bell scenario is an important example of this decomposition. The
non-commutation graph is a disjoint union of $m$ complete graphs, one for each
party. Hence the corresponding PC monoid is a direct product of local free
monoids: within each party the measurement operators are fully
non-commutative, while operators belonging to different parties commute. The
wire representation therefore stores one local non-commutative word per party,
which is precisely the natural indexing structure for Bell-type moment
matrices. 

% The expected reduction in SDP size can be read directly from the number of
% pc-monomials at each degree. Let $a_k(\graph)$ denote the number of
% pc-monomials of degree exactly $k$, or equivalently the number of commutation
% classes of words of length $k$. The trace-monoid growth series is
% \begin{equation}
%   A_\graph(t)
%   :=
%   \sum_{k\ge 0} a_k(\graph)t^k
%   =
%   \left(
%   \sum_{\substack{C\subseteq \xx\\ C\text{ clique in }\graph}}
%   (-1)^{|C|}t^{|C|}
%   \right)^{-1},
% \end{equation}
% where the sum in the denominator runs over all complete subgraphs of the
% commutation graph, including the empty clique. Thus the row-and-column size of
% the degree-$d$ moment matrix is
% \begin{equation}
%   N_d(\graph)
%   =
%   \sum_{k=0}^d a_k(\graph),
% \end{equation}
% while the retained moment variables are indexed, before exploiting involution
% symmetry and the additional constraints $R$, by at most
% $N_{2d}(\graph)$ pc-monomials. Adding commutation edges to $\graph$ can only
% merge word classes, hence can only decrease these numbers. The two extreme
% cases are
% \begin{equation}
%   N_d(\emptyset)
%   =
%   \sum_{k=0}^d |\xx|^k,
%   \qquad
%   N_d(K_{\xx})
%   =
%   \binom{|\xx|+d}{d},
% \end{equation}
% corresponding respectively to the fully non-commutative and fully commutative
% hierarchies. Partial commutation therefore interpolates between the exponential
% word growth of the free monoid and the polynomial growth of the commutative
% monoid, with the actual saving governed by the clique structure of $\graph$.

  \section{Tracial partially commutative polynomial optimization}
  \label{sec:tracial_pcpo}

  One can consider a tracial version of PCPO, in which the objective is
  evaluated with respect to a trace rather than a vector state \cite{burgdorf2013tracial, klep2016constrained}.
  This is useful in several quantum information applications; see, for instance,
  \cite{navascues2015bounding,gribling2018bounds, Tavakoli2022informationally,roch2025prepare} for examples. % Tracial moment matrix implicitly appear in NV15 Equation 9.

  We write $\operatorname{tr}$ for a normalized trace, so that
  $\operatorname{tr}(\mathds 1)=1$. Note that in general, a trace is not well-defined for any bounded operators in $\mathcal B(H)$ when $H$ is infinite dimensional. So in the tracial version of PCPO, we restrict the optimization to operator representations $A_x$ in tracial von Neumann algebras. The tracial PCPO problem is then the following. 
  \begin{tcolorbox}[
  title=TPCPO problem,
  colframe=gray!60,
  colback=white,
  coltitle=black,
  colbacktitle=gray!60]
  \begin{align}
  p_{\operatorname{tr}}^{\mathrm{opt}}
  =
  \inf_A\quad
  & \operatorname{tr}(p(A))\nonumber \\
  \text{s.t.}\quad
  & A_xA_y=A_yA_x
  \qquad \forall \{x,y\}\in E, \\
  & q(A)\ge 0
  \qquad \forall q\in Q,\nonumber\\
  & r(A)=0
  \qquad \forall r\in R,\nonumber
  \end{align}
  where the optimization is over all tracial von Neumann algebra operator representations $A$ of the variables in $\xx$.
  \end{tcolorbox}

  Any feasible solution of the tracial PCPO problem defines a linear functional $L:\ncalgebra\to\kk$ by $L(f)=\operatorname{tr}(f(A))$. This functional satisfies the same normalization, positivity, and linear constraints as in the non-tracial case, but it also satisfies the additional cyclicity constraint
\begin{equation}
  L(fg)=L(gf)
  \qquad
  \forall f,g\in\ncalgebra,
\end{equation}
which originates from the cyclic property of the trace. 
Similarly to the non-tracial case, one can instead optimize over such linear
functionals.

By linearity, it is enough to impose the cyclicity constraint on monomials:
\begin{equation}\label{eq:cyclic_constraints}
  L(vw)=L(wv)
  \qquad
  \forall v,w\in\ncmonoid.
\end{equation}
One can then build SDP relaxations for TPCPO using the hybrid approach of subsection~\ref{sec:hybrid_approaches}, where the partial commutation relations are handled by working in the quotient algebra $\kk[\ncmonoid_\graph]$ and where both the additional constraints $R$ and the cyclicity constraints \eqref{eq:cyclic_constraints} are imposed as linear constraints on the moment variables.

Note that in this hybrid approach, it is sufficient to impose the cyclicity constraints only on the pc-monomials, i.e., to replace the constraints \eqref{eq:cyclic_constraints} by
\begin{equation}
  L([v][w])=L([w][v])
  \qquad
  \forall [v],[w]\in\ncmonoid_\graph.
\end{equation}
Indeed, when quotienting by the partial commutation relations, one projects the linear functional $L$ to a linear functional on the quotient algebra $\kk[\ncmonoid_\graph]$, i.e, $L(w)=L([w])$ for all $w\in\ncmonoid$. Hence, $L(vw) = L([vw]) = L([v][w])$ since the projection is a monoid morphism. Similarly, $L(wv) = L([wv]) = L([w][v])$. The constraint $L(vw)=L(wv)$ is therefore equivalent to $L([v][w])=L([w][v])$.

One can go further and quotient not only by the partial commutation relations, but also by the cyclicity constraints.
As before, let $I_{R_\graph}=\idealgen{R_\graph}$ be the ideal generated by the partial commutation relations, which defines a vector subspace of $\ncalgebra$. Let $I_{\mathrm{cyc}}$ be the vector subspace generated by the cyclicity constraints:
\begin{equation}
  I_{\mathrm{cyc}} = \operatorname{span}_{\kk}\{\,vw-wv:\ v,w\in\ncmonoid\,\}.
\end{equation}
Then any feasible linear functional $L$ vanishes on the vector space $I_{R_\graph} + I_{\mathrm{cyc}}$ generated by the union of the two sets of constraints. It is therefore sufficient to consider linear functionals
\begin{equation}
  L:\ncalgebra/(I_{R_\graph}+I_{\mathrm{cyc}})\to\kk.
\end{equation}
restricted to the quotient space $\ncalgebra/(I_{R_\graph}+I_{\mathrm{cyc}})$. Denote $\llbracket p \rrbracket$ the elements of this quotient space, corresponding to equivalent classes of polynomials modulo the partial commutation relations and the cyclicity constraints. Since these constraints are binomial, these equivalent classes are linear combinations $\llbracket p \rrbracket = \sum_{\llbracket w \rrbracket} p_{\llbracket w \rrbracket} \llbracket w \rrbracket$ of equivalent classes $\llbracket w \rrbracket$ of monomials.
That is, 
\begin{equation}
  \ncalgebra/(I_{R_\graph}+I_{\mathrm{cyc}})
  \simeq
  \kk\left[\ncmonoid_\graph^{\mathrm{cyc}}\right]\,
\end{equation}
where $\ncmonoid_\graph^{\mathrm{cyc}}$ is the set of \emph{cyclic pc-monomials}, defined as follows.

To define equivalent classes of monomials modulo the partial commutation relations and the cyclicity constraints, we can proceed in two steps: first quotient by the partial commutation relations, leading to the pc-monomials $[w]\in \ncmonoid_\graph$, and then quotient by the cyclicity constraints. Indeed the quotient map $\pi_\graph:\ncmonoid\to\ncmonoid_\graph$ is a monoid morphism, so a cyclic move $vw\leftrightarrow wv$ in $\ncmonoid$ descends to the cyclic move $[v] [w] \leftrightarrow [w] [v]$ in $\ncmonoid_\graph$. Conversely, any cyclic move in $\ncmonoid_\graph$ can
be lifted to such a move after choosing representatives in $\ncmonoid$.

Define therefore in $\ncmonoid_\graph$, the equivalence relation $\sim_{\mathrm{cyc}}$ that is generated by the cyclicity constraints. That is, two pc-monomials $[v],[w]\in\ncmonoid_\graph$ are cyclically equivalent $[v]\sim_{\mathrm{cyc}}[w]$ if one can be obtained from the other by a finite sequence of cyclic moves: $[v]=[u_1][u_2] \leftrightarrow [u_2][u_1] = [u_3][u_4] \leftrightarrow [u_4][u_3] = \cdots = [u_{n-1}][u_n] \leftrightarrow [u_n][u_{n-1}] = [w]$ for some $[u_1],\ldots,[u_n]\in\ncmonoid_\graph$.

For $[w]\in\ncmonoid_\graph$, we denote its cyclic class by
\begin{equation}
  \llbracket w \rrbracket := \{\, [v]\in\ncmonoid_\graph:\ [v]\sim_{\mathrm{cyc}}[w]\,\}
\end{equation}
and let
\begin{equation}
  \ncmonoid_\graph^{\mathrm{cyc}} := \ncmonoid_\graph/\!\sim_{\mathrm{cyc}}
\end{equation}
be the set of cyclic pc-monomials $\llbracket w \rrbracket$.
\begin{example} Let $X=\{a,b,c\}$ and assume only $a$ and $c$ commute. Then in $\ncmonoid_\graph$ we have
  \begin{equation}
  [abc]=[ab][c]
  \sim_{\mathrm{cyc}}
  [c][ab]=[cab]=[acb],
\end{equation}
and then
\begin{equation}
  [acb]=[a][cb]
  \sim_{\mathrm{cyc}}
  [cb][a]=[cba].
\end{equation}
Hence $\llbracket abc \rrbracket=\llbracket cba \rrbracket$.

Note that as words in $\ncmonoid$, $abc$ and $cba$ are not cyclically equivalent. However, they are indeed equivalent when taking into account both the partial commutation relations and the cyclicity constraints:  $abc \sim_{\mathrm{cyc}} cab \sim_{\mathrm{pc}} acb \sim_{\mathrm{cyc}} cba$.
\end{example}

After imposing both the partial commutation relations and the cyclicity
constraint, a feasible tracial functional may thus be viewed as a linear functional
\begin{equation}
  L:
  \kk[\ncmonoid_\graph^{\mathrm{cyc}}]
  \longrightarrow
  \kk .
\end{equation}
Such a linear functional is fully specified by the moment variables
\begin{equation}
  y_{\llbracket w \rrbracket} = L(\llbracket w \rrbracket)
\end{equation}
indexed by the cyclic pc-monomials $\llbracket w \rrbracket\in\ncmonoid_\graph^{\mathrm{cyc}}$.
As before, we can thus build SDP relaxations for TPCPO by optimizing over such moment variables, subject to the positivity and linear constraints inherited from the original problem, while the partial commutation and cyclicity constraints are automatically taken into account by the restriction to the quotient space $\kk[\ncmonoid_\graph^{\mathrm{cyc}}]$.

However, there is an important caveat. 
The identification
\begin{equation}
  \ncalgebra/(I_{R_\graph}+I_{\mathrm{cyc}})
  \simeq
  \kk\left[\ncmonoid_\graph^{\mathrm{cyc}}\right]
\end{equation}
is an identification of vector spaces. Here
$\kk\left[\ncmonoid_\graph^{\mathrm{cyc}}\right]$ denotes the vector space
freely spanned by the set $\ncmonoid_\graph^{\mathrm{cyc}}$ of cyclic
pc-monomials. It should not be interpreted as a monoid algebra, because
$\ncmonoid_\graph^{\mathrm{cyc}}$ is not a monoid in general.

Indeed, the quotient $\ncmonoid_\graph$ by the partial commutation relations is a monoid quotient:
the product of two pc-monomials $[v],[w]\in\ncmonoid_\graph$ is well-defined as
the pc-monomial $[v][w]=[vw]$. By contrast, the cyclic quotient $\ncmonoid_\graph^{\mathrm{cyc}}$ is not compatible
with multiplication. In general, from
\begin{equation}
  [v]\sim_{\mathrm{cyc}}[w]
\end{equation}
one cannot conclude that
\begin{equation}
  [s][v][t]\sim_{\mathrm{cyc}}[s][w][t]
\end{equation}
for arbitrary pc-monomials $[s],[t]\in\ncmonoid_\graph$. Thus products of
cyclic classes are not well-defined.

This point is important for the SDP construction since the moment and localizing matrices are defined in terms of products of monomials. 
For instance, in the original formulation before taking any quotient, the moment matrix has entries $M(\mathsf{L})(v,w) = \mathsf{L}(v^*w)$ indexed by free monomials in $\ncmonoid$. 
After quotienting by the partial commutation relations, these entries become $\mathsf{L}(v^*w) \mapsto \mathrm{L}([v^*w])$, where $[v^*w]$ is the pc-monomial obtained by first computing the product $v^*w$ in the free monoid $\ncmonoid$ and then projecting to the quotient $\ncmonoid_\graph$. Since this projection is a $\inv$-monoid morphism, we have $[v^*w]=[v]^*[w]$. Thus the moment matrix entries can be written as $M(\mathrm{L})([v],[w]) = \mathrm{L}([v]^*[w])$, where the product is computed in the PC monoid $\ncmonoid_\graph$. This gives a reduced moment matrix $M(\mathrm{L})$ indexed by pc-monomials in $\ncmonoid_\graph$.

In the tracial case, one performs one further projection $\mathrm{L}([v]^*[w])\mapsto L(\llbracket [v]^*[w] \rrbracket)$. However, the product $[v]^*[w]$ in $\ncmonoid_\graph$ cannot be replaced by a product of cyclic classes. 

Thus the row and column indices of the moment and localizing matrices are still indexed by ordinary pc-monomials $[u],[v]\in\ncmonoid_\graph$, while their entries are defined as
\begin{equation}
  M(q,L)_{[v],[w]}
  =
  L\left(\llbracket [v]^\inv[w] \rrbracket\right).
\end{equation}
The equality constraints in $R$ are imposed in the same way as
\begin{equation}
    L\left(\llbracket [s][r][t] \rrbracket\right) = 0
\end{equation}
for $[s],[t]\in\ncmonoid_\graph$.

In summary, the SDP relaxations of TPCPO are defined exactly in the same way as in the non-tracial case, with the same indexing for the moment, localizing matrices and linear constraints in term of pc-monomials in $\ncmonoid_\graph$. Except that once products in $\ncmonoid_\graph$ are computed, one further applies the cyclic equivalence relation to the resulting pc-monomial before associating it with a corresponding moment variable $L(\llbracket w \rrbracket) = y_{\llbracket w \rrbracket}$.

Finally, note that the involution does descend to the cyclic quotient. Indeed, define
\begin{equation}
  \llbracket w \rrbracket^\inv := \llbracket [w]^\inv \rrbracket.
\end{equation}
Then 
\begin{equation}
\llbracket v \rrbracket = \llbracket w \rrbracket \iff  \llbracket v \rrbracket^\inv = \llbracket w \rrbracket^\inv.
\end{equation}
Indeed, if $[v]$ and $[w]$ are related by a sequence of cyclic moves of the form $[u_i][u_{i+1}]\sim_{\mathrm{cyc}}[u_{i+1}][u_i]$, then $[v]^\inv$ and $[w]^\inv$ are related by the sequence of cyclic moves
$\left([u_i][u_{i+1}]\right)^\inv = [u_{i+1}]^\inv[u_i]^\inv \sim_{\mathrm{cyc}} [u_i]^\inv[u_{i+1}]^\inv = \left([u_{i+1}][u_i]\right)^\inv$. Hence cyclic equivalence is stable under involution.

As in the non-tracial case, one then needs to store only one moment variable $y_{\llbracket w \rrbracket}$ per pair of cyclic classes $\{\llbracket w \rrbracket, \llbracket w \rrbracket^\inv\}$, since by the Hermitian symmetry of the moments
\begin{equation}
  y_{{\llbracket w \rrbracket}^\inv} = y_{\llbracket [w]^\inv \rrbracket}
  =
  \overline{y_{\llbracket w \rrbracket}} .
\end{equation}

\subsection{Cyclic equivalence of pc-monomials}
To implement the TPCPO SDP relaxation, one needs
to project a pc-monomial $[w]\in\ncmonoid_\graph$ to its cyclic class
$\llbracket w \rrbracket$, or at least to decide whether two pc-monomials
$[v],[w]\in\ncmonoid_\graph$ are cyclically equivalent. The latter is sufficient,
since one can then build the cyclic pc-monomials incrementally during the SDP
construction by merging cyclically equivalent ones. We now discuss how to do this.

\subsubsection{Cyclic equivalence of circuits}
As in the previous section, consider a circuit representation $C=(K_1,\ldots,K_k)$ of a monomial $w$. To define a cyclic equivalence of circuits, we imagine closing the circuit by connecting the output of the circuit back to its input. As before, gates are allowed to slide along wires, as long as they do not cross any gate with a common wire. However,  now they can also slide across the boundaries of the circuit. This corresponds to the cyclic move $xv \leftrightarrow vx$ where $x\in\xx$ is a gate at the front or back of the circuit and $v\in\ncmonoid$ is the rest of the circuit.

Two circuits are said to be cyclically equivalent if one can be obtained from the other by a finite sequence of such slide-gate moves, inside the circuit (as in the non-cyclic case) or across the boundaries of the circuit.

\begin{example}[name=Running example,continues=example:abcd_circuit]
  Consider the following circuits associated to the monomials $cbabdb$ and $ab^2dbc$, where the output to input wires are dashed. 
\begin{align}\label{eq:cyclic_circuit_example}
  C(cbabdb) = \quad&{\begin{tikzpicture}[baseline=(current bounding box.center),scale=0.9]
    % Wires/Subsystemss
  \draw[thick] (0,2) node[left] {$\clique_1$} -- (7.75,2);
  \draw[thick, dashed, rounded corners=12pt] (7.75,2) -- (7.75,2.85) -- (0,2.85) -- (0,2);
  \draw[thick] (0,1) node[left] {$\clique_2$} -- (7.75,1);
  \draw[thick, dashed, rounded corners=12pt] (7.75,1) -- (7.75,0.15) -- (0,0.15) -- (0,1);
  \filldraw[fill=green!20, draw=green!60, thick] (0.25, 1.7) rectangle (1.25, 2.3) node[midway] {$c$};
   \filldraw[fill=red!20, draw=red!60, thick] (1.5, 0.7) rectangle (2.5, 1.3) node[midway] {$b$};
  \filldraw[fill=violet!20, draw=violet!60, thick] (2.75, 1.7) rectangle (3.75, 2.3) node[midway] {$a$};
  \filldraw[fill=red!20, draw=red!60, thick] (4, 0.7) rectangle (5, 1.3) node[midway] {$b$};
  \filldraw[fill=blue!20, draw=blue!60, thick] (5.25, 0.7) rectangle (6.25, 2.3) node[midway] {$d$};
  \filldraw[fill=red!20, draw=red!60, thick] (6.5, 0.7) rectangle (7.5, 1.3) node[midway] {$b$};
  \end{tikzpicture}}\nonumber \\[4ex]
  C(ab^2dbc)= \quad &
  {\begin{tikzpicture}[baseline=(current bounding box.center),scale=0.9]
    % Wires/Subsystems
  \draw[thick] (0,2) node[left] {$\clique_1$} -- (7.75,2);
  \draw[thick, dashed, rounded corners=12pt] (7.75,2) -- (7.75,2.85) -- (0,2.85) -- (0,2);
  \draw[thick] (0,1) node[left] {$\clique_2$} -- (7.75,1);
  \draw[thick, dashed, rounded corners=12pt] (7.75,1) -- (7.75,0.15) -- (0,0.15) -- (0,1);
  \filldraw[fill=violet!20, draw=violet!60, thick] (0.25, 1.7) rectangle (1.25, 2.3) node[midway] {$a$};
   \filldraw[fill=red!20, draw=red!60, thick] (1.5, 0.7) rectangle (2.5, 1.3) node[midway] {$b$};
  \filldraw[fill=red!20, draw=red!60, thick] (2.75, 0.7) rectangle (3.75, 1.3) node[midway] {$b$};
   \filldraw[fill=blue!20, draw=blue!60, thick] (4, 0.7) rectangle (5, 2.3) node[midway] {$d$};
  \filldraw[fill=red!20, draw=red!60, thick] (5.25, 0.7) rectangle (6.25, 1.3) node[midway] {$b$};
  \filldraw[fill=green!20, draw=green!60, thick] (6.5, 1.7) rectangle (7.5, 2.3) node[midway] {$c$};
  \end{tikzpicture}}
\end{align}
These two circuits are cyclically equivalent, since one can obtain the second one by sliding the gates of the first one across the boundary. This implies that $\llbracket cbabdb \rrbracket = \llbracket ab^2dbc \rrbracket$
\end{example}

If two circuits are cyclically equivalent, then their wire words are cyclically
equivalent. More precisely, let $W(C)=(w_1,\ldots,w_m)$, and $W(C')=(w'_1,\ldots,w'_m)$ be the wire words of two cyclically equivalent circuits. Then, for each wire
\(i\), the word \(w'_i\) is a cyclic rotation of \(w_i\). Indeed, writing $w_i=y_{i,1}\cdots y_{i,k_i}$, there exists an integer \(t_i\) such that $w'_i  =  y_{i,1+t_i}\cdots y_{i,k_i+t_i}$  with indices taken modulo \(k_i\).

The converse is false. Wirewise cyclic equivalence allows the different wire
words to be rotated independently, whereas an actual cyclic circuit admits only
rotations that are compatible with the common gate occurrences shared by
several wires.

\begin{example} Let $X=\{a,b,c\}$ and assume only $a$ and $c$ commute, i.e., the wires in a circuit representation are the cliques $\clique_1=\{a,b\}$ and $\clique_2=\{b,c\}$. Consider the two circuits for the words $bacb$ and $babc$:
\begin{equation}
\begin{aligned}
C(bacb) =\quad
&
{\begin{tikzpicture}[baseline=(current bounding box.center),scale=0.9]
  \draw[thick] (0,2) node[left] {$\clique_1$} -- (5.25,2);
  \draw[thick, dashed, rounded corners=12pt] (5.25,2) -- (5.25,2.85) -- (0,2.85) -- (0,2);
  \draw[thick] (0,1) node[left] {$\clique_2$} -- (5.25,1);
  \draw[thick, dashed, rounded corners=12pt] (5.25,1) -- (5.25,0.15) -- (0,0.15) -- (0,1);
  \filldraw[fill=red!20, draw=red!60, thick] (0.25,0.7) rectangle (1.25,2.3) node[midway] {$b$};
  \filldraw[fill=violet!20, draw=violet!60, thick] (1.5,1.7) rectangle (2.5,2.3) node[midway] {$a$};
  \filldraw[fill=green!20, draw=green!60, thick] (2.75,0.7) rectangle (3.75,1.3) node[midway] {$c$};
  \filldraw[fill=red!20, draw=red!60, thick] (4,0.7) rectangle (5,2.3) node[midway] {$b$};
\end{tikzpicture}}
\\[3ex]
 C(babc)\quad =
&
{\begin{tikzpicture}[baseline=(current bounding box.center),scale=0.9]
  \draw[thick] (0,2) node[left] {$\clique_1$} -- (5.25,2);
  \draw[thick, dashed, rounded corners=12pt] (5.25,2) -- (5.25,2.85) -- (0,2.85) -- (0,2);
  \draw[thick] (0,1) node[left] {$\clique_2$} -- (5.25,1);
  \draw[thick, dashed, rounded corners=12pt] (5.25,1) -- (5.25,0.15) -- (0,0.15) -- (0,1);
  \filldraw[fill=red!20, draw=red!60, thick] (0.25,0.7) rectangle (1.25,2.3) node[midway] {$b$};
  \filldraw[fill=violet!20, draw=violet!60, thick] (1.5,1.7) rectangle (2.5,2.3) node[midway] {$a$};
  \filldraw[fill=red!20, draw=red!60, thick] (2.75,0.7) rectangle (3.75,2.3) node[midway] {$b$};
  \filldraw[fill=green!20, draw=green!60, thick] (4,0.7) rectangle (5,1.3) node[midway] {$c$};
\end{tikzpicture}}.
\end{aligned}
\end{equation}
Then the wire representation of $bacb$ is $W=(bab,bcb)$ and the one of $babc$ is $W'=(bab,bbc)$. So $w_1=w_1'$ and $w_2$ and $w_2'$ are equivalent by rotation of their letters. However, the two circuits $C(bacb)$ and $ C(babc)$ are not cyclically equivalent, since there is no way to slide the gates in one circuit to obtain the other. Therefore $\llbracket bacb \rrbracket \neq \llbracket babc \rrbracket$.
\end{example}

\subsubsection{Cyclic occurrence graphs}

We have seen in subsection~\ref{sec:other_representations} that the occurrence graph $\Gamma$ and the dependency graph $\bar \Gamma$ of a circuit representation $C$ gives a canonical representative of the pc-monomial $[w]$ associated to $C$. In particular, two circuits are equivalent if and only if they have the same occurrence graph and the same dependency graph.
This is so because these graphs encode all the information about the relative order of the gate occurrences in the circuit, up to partial commutations. It therefore eliminates the arbitrariness in the circuit representation coming from the freedom of sliding gates along wires. 

Two circuits $C$ and $C'$ that are cyclically equivalent do not have, however, the same occurrence graphs $\Gamma$ and $\Gamma'$, nor the same dependency graphs $\bar{\Gamma}$ and $\bar{\Gamma}'$, since the relative order of the gates can change when sliding gates across the boundaries of the circuit.

Their dependency graphs $\bar{\Gamma}$ and $\bar{\Gamma}'$, though, are related in a simple way. We first remind that the dependency graph $\bar{\Gamma}$ of a circuit $C$ is obtained by viewing each gate occurrence $(x,k)$ has a vertex and adding a directed edge from $(x,k)$ to $(y,\ell)$ whenever the two gates share a wire $i$ and $(x,k)$ occurs before $(y,\ell)$, regardless of whether they are immediate successors along that wire. If a gate occurrence $(x,k)$ is at the front of the circuit $C$, then it is a source vertex in the dependency graph $\bar{\Gamma}$, with an edge $(x,k)\rightarrow (y,l)$  for every occurrence $(y,l)$ where $y$ does not commute with $x$.
If $(x,k)$ is instead at the back of the circuit $C$, then it is a sink vertex, with an edge $(y,l)\rightarrow (x,k)$ for every occurrence $(y,l)$ where $y$ does not commute with $x$. Thus moving a gate occurrence from the front to the back of the circuit changes it from a source to a sink in the dependency graph and simply amounts to flipping the direction of all edges leaving that vertex in the dependency graph. Conversely, moving a gate occurrence from the back to the front of the circuit changes it from a sink to a source in the dependency graph and simply amounts to flipping the direction of all edges incident to that vertex in the dependency graph. 
Two circuits that are cyclically equivalent have thus full dependency graphs that are related by a sequence of such source-to-sink flips.

\begin{example} Consider the two cyclically equivalent circuits $C(cbabdb)$ and $C(ab^2dbc)$ in \eqref{eq:cyclic_circuit_example}. Their occurrence graphs are
\begin{align} 
  \Gamma(cbabdb) =\quad &
    {\begin{tikzpicture}[baseline=(current bounding box.center)]
      \node (c) at (0,2) {};
      \node (b1) at (0,0) {};
      \node (a) at (2,2) {};
      \node (b2) at (2,0) {};
      \node (d) at (3,1) {};
      \node (b3) at (4.7,1) {};
      \fill (c) circle (1.2pt);
      \fill (b1) circle (1.2pt);
      \fill (a) circle (1.2pt);
      \fill (b2) circle (1.2pt);
      \fill (d) circle (1.2pt);
      \fill (b3) circle (1.2pt);
      \node[above] at (c) {$(c,1)$};
      \node[below] at (b1) {$(b,1)$};
      \node[above] at (a) {$(a,1)$};
      \node[below] at (b2) {$(b,2)$};
      \node[above right] at (d) {$(d,1)$};
      \node[right] at (b3) {$(b,3)$};
      \draw[-{Stealth[length=2mm]}] (c) -- (a);
      \draw[-{Stealth[length=2mm]}] (b1) -- (b2);
      \draw[-{Stealth[length=2mm]}] (a) -- (d);
      \draw[-{Stealth[length=2mm]}] (b2) -- (d);
      \draw[-{Stealth[length=2mm]}] (d) -- (b3);
    \end{tikzpicture}}\nonumber \\
    & \nonumber \\
  \Gamma(ab^2dbc) =\quad &
    {\begin{tikzpicture}[baseline=(current bounding box.center)]
      \node (c) at (0,2) {};
      \node (b1) at (0,0) {};
      \node (a) at (2,2) {};
      \node (b2) at (2,0) {};
      \node (d) at (3,1) {};
      \node (b3) at (4.7,1) {};
      \fill (c) circle (1.2pt);
      \fill (b1) circle (1.2pt);
      \fill (a) circle (1.2pt);
      \fill (b2) circle (1.2pt);
      \fill (d) circle (1.2pt);
      \fill (b3) circle (1.2pt);
      \node[above] at (c) {$(c,1)$};
      \node[below] at (b1) {$(b,1)$};
      \node[above] at (a) {$(a,1)$};
      \node[below] at (b2) {$(b,2)$};
      \node[above right] at (d) {$(d,1)$};
      \node[right] at (b3) {$(b,3)$};
      \draw[-{Stealth[length=2mm]}] (d) -- (c);
      \draw[-{Stealth[length=2mm]}] (b1) -- (b2);
      \draw[-{Stealth[length=2mm]}] (a) -- (d);
      \draw[-{Stealth[length=2mm]}] (b2) -- (d);
      \draw[-{Stealth[length=2mm]}] (d) -- (b3);
    \end{tikzpicture}}
  \end{align}
Their dependency graphs are 
\begin{align}
  \bar\Gamma(cbabdb) =\quad &{
  \begin{tikzpicture}[baseline=(current bounding box.center)]
    \node (c) at (0,2) {};
    \node (b1) at (0,0) {};
    \node (a) at (2,2) {};
    \node (b2) at (2,0) {};
    \node (d) at (3,1) {};
    \node (b3) at (4.7  ,1) {};
    \fill (c) circle (1.2pt);
    \fill (b1) circle (1.2pt);
    \fill (a) circle (1.2pt);
    \fill (b2) circle (1.2pt);
    \fill (d) circle (1.2pt);
    \fill (b3) circle (1.2pt);
    \node[above] at (c) {$(c,1)$};
    \node[below] at (b1) {$(b,1)$};
    \node[above] at (a) {$(a,1)$};
    \node[below] at (b2) {$(b,2)$};
    \node[above right] at (d) {$(d,1)$};
    \node[right] at (b3) {$(b,3)$};
    \draw[-{Stealth[length=2mm]}] (c) -- (a);
    \draw[-{Stealth[length=2mm]}] (b1) -- (b2);
    \draw[-{Stealth[length=2mm]}] (a) -- (d);
    \draw[-{Stealth[length=2mm]}] (b2) -- (d);
    \draw[-{Stealth[length=2mm]}] (d) -- (b3);
    % Additional edges for the dependency graph
    \draw[-{Stealth[length=2mm]}] (c) -- (d);
    \draw[-{Stealth[length=2mm]}] (b1) -- (d);
    \draw[-{Stealth[length=2mm]}] (b1) -- (b3);
    \draw[-{Stealth[length=2mm]}] (b2) -- (b3);
  \end{tikzpicture}}\nonumber \\
  & \nonumber \\
  \bar\Gamma(ab^2dbc) =\quad &{
  \begin{tikzpicture}[baseline=(current bounding box.center)]
    \node (c) at (0,2) {};
    \node (b1) at (0,0) {};
    \node (a) at (2,2) {};
    \node (b2) at (2,0) {};       
    \node (d) at (3,1) {};
    \node (b3) at (4.7  ,1) {};
    \fill (c) circle (1.2pt);
    \fill (b1) circle (1.2pt);
    \fill (a) circle (1.2pt);
    \fill (b2) circle (1.2pt);
    \fill (d) circle (1.2pt);
    \fill (b3) circle (1.2pt);
    \node[above] at (c) {$(c,1)$};  
    \node[below] at (b1) {$(b,1)$};
    \node[above] at (a) {$(a,1)$};
    \node[below] at (b2) {$(b,2)$};
    \node[above right] at (d) {$(d,1)$};
    \node[right] at (b3) {$(b,3)$};
    \draw[-{Stealth[length=2mm]}] (d) -- (c);
    \draw[-{Stealth[length=2mm]}] (b1) -- (b2);
    \draw[-{Stealth[length=2mm]}] (a) -- (d);
    \draw[-{Stealth[length=2mm]}] (b2) -- (d);
    \draw[-{Stealth[length=2mm]}] (d) -- (b3);
    % Additional edges for the dependency graph
    \draw[-{Stealth[length=2mm]},] (b1) -- (d);
    \draw[-{Stealth[length=2mm]},] (b1) -- (b3);
    \draw[-{Stealth[length=2mm]},] (b2) -- (b3);
    \draw[-{Stealth[length=2mm]},] (a) -- (c);
  \end{tikzpicture}}
\end{align}
They just differ by a source-to-sink flip at the vertex $(c,1)$, which is a source in $\bar\Gamma(cbabdb)$ and a sink in $\bar\Gamma(ab^2dbc)$.
\end{example}

These source-to-sink flips, and the equivalence relation they generate on
acyclic orientations of a graph, were studied in\cite{Pretzel}. Their
relation with cyclic versions of partial orders was developed in \cite{develin} under the name of
\emph{toric posets}. Related ideas for partially commutative monoids and
Coxeter groups appear in the work \cite{chao} on toric
heaps, which may be viewed as cyclic analogues of Viennot heaps.

We now give an alternative, but closely related, description of cyclic equivalence at the level of occurrence graphs, where instead of keeping all dependencies, we keep only immediate successor relations on the closed wires. Given a circuit $C$, we introduce its \emph{cyclic occurrence graph} $\mathring \Gamma$ as the graph obtained from the usual occurrence graph $\Gamma$ by adding boundary edges that connect the last occurrence of a gate on a wire to the first occurrence of a gate on that same wire.

Specifically, we define the \emph{cyclic occurrence graph} $\mathring \Gamma$ associated to a circuit $C$ as follows. 
\begin{enumerate}
  \item The vertices of $\mathring \Gamma$ are the gate occurrences $\Omega = \{(x,k): x\in\xx, 1\le k \le |x|_C\}$, where $|x|_C$ is the number of occurrences of $x$ in the circuit $C$.
  \item There is a directed, labelled edge 
  \begin{equation}
    (x,k) \xrightarrow{0} (y,l) 
  \end{equation}
  if the $k$-th occurrence of $x$ is immediately followed by the $l$-th occurrence of $y$ on some wire inside the circuit.
  \item There is a directed, labelled edge 
  \begin{equation}
    (x,k) \xrightarrow{1} (y,l) 
  \end{equation}
  if $(x,k)$ is the last occurrence of $x$ on some wire and $(y,l)$ is the first occurrence of $y$ on that same wire.
\end{enumerate}
In the following, we call the labels $\varepsilon\in\{0,1\}$ of the directed
edges the \emph{gains}. A gain $\varepsilon=0$ means that the edge is a regular
edge, while a gain $\varepsilon=1$ means that the edge is
a wrap edge crossing the cyclic boundary.  The subgraph of $\mathring\Gamma$
consisting of the edges of gain $0$ is the ordinary occurrence graph $\Gamma$ of
the circuit $C$; the additional edges of gain $1$ record the wrap edges.

Ordinary gate-slides inside the circuit $C$ do not change the occurrence graph $\Gamma$, hence they do not change
$\mathring\Gamma$. 
The only elementary moves that can change $\mathring\Gamma$ are the moves through the boundary. The following proposition characterizes how two circuit that are cyclically equivalent are related in terms of their cyclic occurrence graphs.
\begin{proposition}
  Two circuits $C$ and $D$ are cyclically equivalent,
  and hence represent the same cyclic pc-monomial, if and only if their cyclic
  occurrence graphs $\mathring\Gamma_C$ and $\mathring\Gamma_D$ are related as
  follows. There exists a bijection
  \begin{equation}\label{eq:phi}
    \phi:\Omega_C\to\Omega_D
  \end{equation}
  preserving gate labels, $\phi(x,k)=(x,\phi_x(k))$, and a function
  \begin{equation}\label{eq:h}
    h:\Omega_C\to\mathbb Z
  \end{equation}
  such that, for every edge
  \begin{equation}
    p\xrightarrow{\varepsilon}q
  \end{equation}
  in $\mathring\Gamma_C$, there is an edge
  \begin{equation}
    \phi(p)\xrightarrow{\varepsilon+h(p)-h(q)}\phi(q)
  \end{equation}
  in $\mathring\Gamma_D$, and conversely.
\end{proposition}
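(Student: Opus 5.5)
Both directions will be handled through the ``unrolled'' or periodic picture of a cyclic circuit. Given $C$, consider the bi-infinite periodic circuit $\tilde C=\cdots CCC\cdots$. Its vertices are the pairs $(p,n)\in\Omega_C\times\mathbb Z$, where $n$ indexes the copy of $C$, and its occurrence graph $\tilde\Gamma_C$ has an edge $(p,n)\to(q,n+\varepsilon)$ for every labelled edge $p\xrightarrow{\varepsilon}q$ of $\mathring\Gamma_C$. This is exactly the occurrence graph of the infinite word $\cdots www\cdots$ read on the wires. The condition in the statement, with the relabelling $(p,n)\mapsto(\phi(p),n+h(p))$, then says precisely that $\tilde\Gamma_C$ and $\tilde\Gamma_D$ are isomorphic by a label-preserving isomorphism commuting with the shift $n\mapsto n+1$. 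So I will first prove this reformulation. It only requires unpacking the gain formula: the edge $(p,n)\to(q,n+\varepsilon)$ is sent to $(\phi(p),n+h(p))\to(\phi(q),n+\varepsilon+h(p))$. This is an edge of $\tilde\Gamma_D$ iff $\phi(p)\to\phi(q)$ has gain $\varepsilon+h(p)-h(q)$.

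For the forward direction, it suffices by transitivity to treat a single elementary move. Composition of moves composes the $\phi$'s and adds the $h$'s, since the gain rule is a coboundary twist. Inner gate slides leave $\mathring\Gamma$ unchanged, so take $\phi=\mathrm{id}$ and $h=0$. For a boundary move $xv\to vx$, where the gate $x$ at the front is moved to the back, the periodic picture is literally the same circuit; only the fundamental domain changes. I will therefore take $\phi$ to be the relabelling of occurrences: the moved occurrence $(x,1)$ becomes $(x,|x|)$ in $D$, and the other occurrences of $x$ shift down by one. I will set $h=-1$ on the moved occurrence and $0$ elsewhere, or $+1$ depending on the orientation convention, and check directly that its wire edges flip gain exactly as prescribed. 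The edges into it change from gain $1$ to gain $0$, and the edges out of it change from $0$ to $1$; both are consistent with $\varepsilon+h(p)-h(q)$. The reverse move is the inverse.

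For the converse, assume $\phi,h$ are given. Replacing $h$ by $h-c$ changes nothing, so I will normalise $\min h=0$ and induct on $\sum_p h(p)$. If $h\equiv0$, then $\Gamma_C\cong\Gamma_D$ via $\phi$ with matching gains, so the ordinary occurrence graphs coincide. By the earlier canonicity of occurrence graphs, together with Proposition~\ref{prop:wire_projection_theorem}, this gives $[C]=[D]$. Otherwise, let $M=\max h>0$ and $S=\{p: h(p)=M\}$. The key claim is that some $p\in S$ can be rotated: it is a sink in the ordinary graph $\Gamma_C$ restricted appropriately, that is, it can be slid to the back of $C$. Moving it to the front lowers $h(p)$ by one, by the forward computation, and decreases the potential.

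The main obstacle is justifying this claim. Every gain-$0$ edge $p\to q$ in $\Gamma_C$ with $p\in S$ maps to an edge of gain $M-h(q)\ge0$ in $D$, and gains in $D$ lie in $\{0,1\}$; so $h(q)\in\{M-1,M\}$. I will argue that the subgraph of $\Gamma_C$ induced on $S$ is acyclic, since $\Gamma_C$ is, and that an element of $S$ which is maximal in $\Gamma_C$ restricted to $S$ has all its gain-$0$ out-edges leading out of $S$. Such an edge would need $h(q)=M-1$ and gain $1$ in $D$, and I must rule out that this blocks the move. I expect to handle this with a counting argument on each wire. Each closed wire carries total gain exactly $1$, both in $C$ and in $D$, because exactly one wrap edge exists per nonempty wire. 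Summing the gain rule around a wire cycle shows the telescoping $h$ terms cancel, so along each wire at most one edge can go from $M$ to $M-1$ with compensating gain. From this it follows that some maximal element of $S$ in $\Gamma_C$ has no gain-$0$ successors at all, i.e.\ it is at the back on all its wires. If this wire-by-wire bookkeeping proves delicate, I would fall back on the periodic picture: build an explicit bi-infinite word for $\tilde D$ and read off $D$ as a different fundamental domain of $\tilde C$.
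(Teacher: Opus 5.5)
Your periodic reformulation and your forward direction are fine. With the statement's convention the moved occurrence must get $h=+1$, since a front-to-back slide raises the gains of its out-edges and lowers those of its in-edges.

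The gap is in the converse: both the key claim and the direction of the move are wrong. You correctly compute that a gain-$0$ edge $p\to q$ of $\Gamma_C$ with $p\in S$ becomes an edge of gain $M-h(q)$ in $D$. Under that same convention, sliding a gate $p$ from the back of $C$ to the front gives a circuit $C'$ whose height function relative to $D$ is $h$ \emph{increased} by one at $p$. So the potential goes up, not down.

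Worse, the claim that some maximal element of $S$ has no gain-$0$ successor is false. Take a single wire carrying two non-commuting letters, with $C=xy$ and $D=yx$.
\begin{itemize}
\item $\mathring\Gamma_C$ has the edges $(x,1)\xrightarrow{0}(y,1)$ and $(y,1)\xrightarrow{1}(x,1)$.
\item The relation holds with $\phi=\mathrm{id}$, $h(x,1)=1$ and $h(y,1)=0$, so $M=1$ and $S=\{(x,1)\}$.
\item The only element of $S$ has the gain-$0$ successor $(y,1)$, at level $M-1$. This is exactly the configuration you hoped to exclude.
\end{itemize}
Your telescoping identity around a closed wire is true, but it permits one edge per wire from level $M$ to level $M-1$, and here that edge occurs. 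So it cannot produce the element you need.

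What is missing is choosing the right occurrence in the right circuit. The paper works in $D$ rather than in $C$. It considers the circuit $C_\eta$ with gains $\varepsilon(p,q)+\eta(p)-\eta(q)$, starting from $\eta=h\ge 0$. It picks $p$ with $\eta(p)>0$ that is maximal for directed paths in the ordinary occurrence graph of $C_\eta$. Then:
\begin{itemize}
\item a gain-$0$ out-edge $p\to q$ would force $\eta(q)\ge\eta(p)>0$, contradicting maximality;
\item a gain-$1$ in-edge $r\to p$ would force $\eta(r)>0$, while the remaining edges of that closed wire form a gain-$0$ path from $p$ to $r$, again contradicting maximality.
\end{itemize}
So $p$ is at the back of $C_\eta$. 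Sliding it to the front lowers $\eta(p)$ by one, and you repeat until $\eta=0$, where $C_\eta=C$.

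You could instead stay in $C$ and mirror your choice: take $p\in S$ \emph{minimal} for directed paths of $\Gamma_C$.
\begin{itemize}
\item A gain-$0$ in-edge $r\to p$ has $D$-gain $h(r)-M$, which is nonnegative only if $r\in S$, contradicting minimality.
\item A wrap out-edge $p\to q$ with $q\neq p$ has $D$-gain $1+M-h(q)$, which is at most $1$ only if $q\in S$. The gain-$0$ path from $q$ back to $p$ along that wire then contradicts minimality.
\end{itemize}
Hence $p$ is at the front of $C$, and sliding it to the back lowers $h(p)$ by one. The wire-closure path argument is the ingredient your proposal lacks. Your fallback via fundamental domains of $\tilde C$ needs exactly this one-occurrence-at-a-time step, so it does not close the gap either.
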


\begin{proof}
We first prove the forward implication. By the assumptions, there is a sequence of gate slides that transforms $C$ into $D$. Ordinary gate-slides inside the
 circuit do not change the cyclic occurrence graph, so it is enough to
track what happens when a gate is slid through the cyclic boundary.

Suppose that an occurrence \(p=(x,1)\in\Omega_C\) is at the front of the
current circuit. In the cyclic occurrence graph this means that every
edge leaving \(p\) has gain \(0\), while every edge entering \(p\) has gain
\(1\). Sliding \(p\) through the boundary to the back changes only the gains of
the edges incident to \(p\): outgoing edges become wrap edges, and incoming
edges become regular edges. Thus, for every edge \(p\to q\),
\begin{equation}
  \varepsilon'(p,q)=\varepsilon(p,q)+1,
\end{equation}
whereas, for every edge \(q\to p\),
\begin{equation}
  \varepsilon'(q,p)=\varepsilon(q,p)-1.
\end{equation}
All other gains are unchanged. The new gains still belong to \(\{0,1\}\)
precisely because \(p\) was at the front of the circuit.

There is also a relabelling of repeated occurrences. If the occurrence moved
through the boundary is \(p=(x,1)\), then after the move it becomes the last
occurrence of \(x\). Thus the occurrences of \(x\) are cyclically renumbered,
\begin{equation}
  (x,k)\longmapsto (x,k-1),
\end{equation}
with indices taken modulo \(|x|_C\). Occurrences of all other letters keep their
indices.

Iterating such boundary slides to go from $C$ to $D$ produces a relabelling \(\phi\) as in
\eqref{eq:phi}, preserving gate labels, together with a function \(h\) as in
\eqref{eq:h}. Here \(h(p)\) records the net number of times the occurrence
\(p\) has been moved through the boundary from front to back. Hence an edge
\begin{equation}
  p\xrightarrow{\varepsilon}q
\end{equation}
is transformed into
\begin{equation}
  \phi(p)\xrightarrow{\varepsilon+h(p)-h(q)}\phi(q).
\end{equation}

Conversely, suppose that such \(\phi\) and \(h\) are given. Pulling
\(\mathring\Gamma_D\) back along \(\phi\), we identify \(\Omega_D\) with
\(\Omega_C\). After this identification, the two cyclic occurrence graphs have
the same directed edge set. They differ only in their gain assignments.

Let \(\varepsilon\) denote the gain assignment of \(\mathring\Gamma_C\). For any
function \(\eta:\Omega_C\to\mathbb Z\), define
\begin{equation}\label{eq:eta_u_def}
  \varepsilon_\eta(p,q)
  :=
  \varepsilon(p,q)+\eta(p)-\eta(q)
\end{equation}
on each directed edge \(p\to q\). Thus the pulled-back graph
\(\mathring\Gamma_D\) has gain assignment \(\varepsilon_h\).

Adding a constant to \(h\) on each connected component of the underlying
undirected occurrence graph does not change the gains \(\varepsilon_h\). Hence, after
doing this component by component, we may assume that
\begin{equation}
  \min h=0.
\end{equation}
In particular, \(h\ge 0\).

For \(\eta:\Omega_C\to\mathbb Z_{\ge 0}\), write
\begin{equation}
  \mathring\Gamma(\eta):=(\Omega_C,E,\varepsilon_\eta),
\end{equation}
where \(E\) is the common directed edge set. We call \(\eta\)
\emph{admissible} if \(\mathring\Gamma(\eta)\) is the cyclic occurrence graph of
some circuit \(C_\eta\).
With this terminology, \(0\) is admissible, with \(C_0=C\), and \(h\) is
admissible, with \(C_h\) equal to the pulled-back copy of \(D\), up to ordinary
gate slides. We shall prove that \(C_h\) is cyclically equivalent to \(C\).

We argue by descending induction on
\begin{equation}
  |\eta|:=\sum_{p\in\Omega_C}\eta(p).
\end{equation}
More precisely, we prove that if \(\eta\) is admissible, then either \(\eta=0\), or
there exists an admissible function \(\eta^-\) with
\begin{equation}
  |\eta^-|=|\eta|-1
\end{equation}
such that \(C_\eta\) is obtained from \(C_{\eta^-}\) by one boundary slide. Iterating
this reduction from \(\eta=h\) to \(\eta=0\) then proves that \(C_h\) is cyclically
equivalent to \(C_0=C\).

Let \(\eta\) be admissible with \(|\eta|>0\), and set
\begin{equation}
  P_\eta:=\{p\in\Omega_C:\eta(p)>0\}.
\end{equation}
Let \(\Gamma_\eta\) be the directed graph formed by the zero-gain edges of
\(\mathring\Gamma(\eta)\), i.e., $\Gamma_\eta$ is the ordinary occurrence graph of \(C_\eta\). Hence \(\Gamma_\eta\) is acyclic.

Choose \(p\in P_\eta\) maximal in \(P_\eta\) with respect to the partial order
generated by directed paths in \(\Gamma_\eta\). Thus there is no directed path in
\(\Gamma_\eta\) from \(p\) to any other vertex of \(P_\eta\).

We claim that \(p\) is at the back of the circuit \(C_\eta\). Equivalently, every
outgoing edge of \(p\) has \(\varepsilon_\eta\)-gain \(1\), and every incoming edge of
\(p\) has \(\varepsilon_\eta\)-gain \(0\).

First suppose that an outgoing edge \(p\to q\) had \(\varepsilon_\eta\)-gain \(0\). Then
\begin{equation}
  0=\varepsilon(p,q)+\eta(p)-\eta(q),
\end{equation}
hence
\begin{equation}
  \eta(q)=\eta(p)+\varepsilon(p,q)\ge \eta(p)>0.
\end{equation}
Thus \(q\in P_\eta\). But \(p\to q\) is an edge of \(Z_\eta\), contradicting the
maximality of \(p\). Hence every outgoing edge of \(p\) has \(\varepsilon_\eta\)-gain
\(1\).

Next suppose that an incoming edge \(r\to p\) had \(\varepsilon_\eta\)-gain \(1\). Then
\begin{equation}
  1=\varepsilon(r,p)+\eta(r)-\eta(p),
\end{equation}
hence
\begin{equation}
  \eta(r)=\eta(p)+1-\varepsilon(r,p)\ge \eta(p)>0.
\end{equation}
Thus \(r\in P_\eta\).

Choose a closed wire whose immediate-successor relation contributes the edge
\(r\to p\). Since \(\mathring\Gamma(\eta)\) is the cyclic occurrence graph of the
 circuit \(C_\eta\), each closed wire has exactly one wrap edge, that is,
one edge of gain \(1\). Therefore \(r\to p\) is the unique wrap edge on this
wire. All other successor edges on the same closed wire, going from \(p\) back
to \(r\), have gain \(0\). Hence there is a directed path in \(Z_\eta\) from \(p\)
to \(r\). Since \(r\in P_\eta\), this again contradicts the maximality of \(p\).

Therefore every incoming edge of \(p\) has \(\varepsilon_\eta\)-gain \(0\), and the claim
is proved: \(p\) is at the back of \(C_\eta\).

Define a new function $\eta^-$ as follows
\begin{equation}
  \eta^-(p)=\eta(p)-1,
  \qquad
  \eta^-(q)=\eta(q)\quad(q\neq p).
\end{equation}
Since \(\eta(p)>0\), we still have \(\eta^-\ge 0\), and
\begin{equation}
  |\eta^-|=|\eta|-1.
\end{equation}

We compare \(\varepsilon_{\eta^-}\) with \(\varepsilon_\eta\). Only the gains of edges incident to
\(p\) change. Since \(p\) is at the back for \(\varepsilon_\eta\), every outgoing edge
\(p\to q\) has \(\varepsilon_\eta(p,q)=1\), and every incoming edge \(r\to p\) has
\(\varepsilon_\eta(r,p)=0\). Hence
\begin{equation}
  \varepsilon_{\eta^-}(p,q)=\varepsilon_\eta(p,q)-1=0
\end{equation}
for every outgoing edge \(p\to q\), while
\begin{equation}
  \varepsilon_{\eta^-}(r,p)=\varepsilon_\eta(r,p)+1=1
\end{equation}
for every incoming edge \(r\to p\). All other gains are unchanged.

Thus \(\mathring\Gamma(\eta^-)\) is exactly the cyclic occurrence graph obtained
from \(\mathring\Gamma(\eta)\) by sliding the occurrence \(p\) through the cyclic
boundary from the back to the front. Hence \(\eta^-\) is admissible, and the
corresponding circuit \(C_{\eta^-}\) is obtained from \(C_\eta\) by this single
backward boundary slide. Equivalently, \(C_\eta\) is obtained from \(C_{\eta^-}\) by
sliding \(p\) from the front to the back.

Repeating this construction finitely many times, starting from \(\eta_0=h\), gives
a sequence
\begin{equation}
  \eta_0=h,\ \eta_1,\ldots,\eta_N=0
\end{equation}
such that, for each \(i=0,\ldots,N-1\),
\(C_{\eta_{i+1}}\) is obtained from \(C_{\eta_i}\)
by sliding some gate through the cyclic boundary from the back to the front. Thus \(C_{\eta_0}=C_h\) is cyclically equivalent to \(C_{\eta_N}=C\).

Finally, by construction,
\(\mathring\Gamma(h)\) is the pullback of \(\mathring\Gamma_D\) along \(\phi\).
Thus the pullback circuit \(C_h\) and circuit \(D\) have the same cyclic
occurrence graph. Hence they represent the same  circuit up to ordinary
gate slides. Consequently \(D\) is cyclically equivalent to \(C_h\), and since
\(C_h\) is cyclically equivalent to \(C\), we conclude that \(D\) is cyclically
equivalent to \(C\).
\end{proof}

\begin{example}[name=Running example,continues=example:abcd_circuit]
  The cyclic occurrence graphs of the circuits \(C(cbabdb)\) and \(C(ab^2dbc)\) are the following (where edges of gain $0$ are solid and edges of gain $1$ are dashed):
  \begin{align}
    \mathring\Gamma(cbabdb) =\quad &
    {\begin{tikzpicture}[baseline=(current bounding box.center)]
      \node (c) at (0,2) {};
      \node (b1) at (0,0) {};
      \node (a) at (2,2) {};
      \node (b2) at (2,0) {};
      \node (d) at (3,1) {};
      \node (b3) at (4.7,1) {};
      \fill (c) circle (1.2pt);
      \fill (b1) circle (1.2pt);
      \fill (a) circle (1.2pt);
      \fill (b2) circle (1.2pt);
      \fill (d) circle (1.2pt);
      \fill (b3) circle (1.2pt);
      \node[above] at (c) {$(c,1)$};
      \node[below] at (b1) {$(b,1)$};
      \node[above] at (a) {$(a,1)$};
      \node[below] at (b2) {$(b,2)$};
      \node[above right] at (d) {$(d,1)$};
      \node[right] at (b3) {$(b,3)$};
      \draw[-{Stealth[length=2mm]}] (c) -- (a);
      \draw[-{Stealth[length=2mm]}] (b1) -- (b2);
      \draw[-{Stealth[length=2mm]}] (a) -- (d);
      \draw[-{Stealth[length=2mm]}] (b2)  -- (d);
      \draw[-{Stealth[length=2mm]}] (d) -- (b3);
      \draw[-{Stealth[length=2mm]},dashed] (d) -- (c);
      \draw[-{Stealth[length=2mm]},dashed] (b3) -- (b1);
    \end{tikzpicture}}\nonumber \\
    & \nonumber \\
    \mathring\Gamma(ab^2dbc) =\quad &
    {\begin{tikzpicture}[baseline=(current bounding box.center)]
      \node (c) at (0,2) {};
      \node (b1) at (0,0) {};
      \node (a) at (2,2) {};  
    \node (b2) at (2,0) {};
      \node (d) at (3,1) {};
      \node (b3) at (4.7  ,1) {};
      \fill (c) circle (1.2pt);
      \fill (b1) circle (1.2pt);
      \fill (a) circle (1.2pt);
      \fill (b2) circle (1.2pt);
      \fill (d) circle (1.2pt);
      \fill (b3) circle (1.2pt);
      \node[above] at (c) {$(c,1)$};
      \node[below] at (b1) {$(b,1)$};
      \node[above] at (a) {$(a,1)$};
      \node[below] at (b2) {$(b,2)$};
      \node[above right] at (d) {$(d,1)$};
      \node[right] at (b3) {$(b,3)$};
      \draw[-{Stealth[length=2mm]}] (d) -- (c);
      \draw[-{Stealth[length=2mm]}] (b1) -- (b2);
      \draw[-{Stealth[length=2mm]}] (a) -- (d);
      \draw[-{Stealth[length=2mm]}] (b2) -- (d);
      \draw[-{Stealth[length=2mm]}] (d) -- (b3);
      \draw[-{Stealth[length=2mm]},dashed] (b3) -- (b1);
      \draw[-{Stealth[length=2mm]},dashed] (c) -- (a);
    \end{tikzpicture}}
  \end{align}
  The bijection \(\phi\) is the identity, and the height function \(h\) is given by
  \begin{equation}
    h(c,1)=1,\quad h(b,1)=0,\quad h(a,1)=0,\quad h(b,2)=0,\quad h(d,1)=0,\quad h(b,3)=0.
  \end{equation}
\end{example}

\subsection{Algorithmic implications.}
The preceding proposition gives a finite graph-theoretic criterion for cyclic
equivalence. In particular, after identifying a label-preserving bijection $\phi$ between the two cyclic occurrence graphs, one may test whether the gain assignments differ by a height transformation $h$. This is just a potential-existence problem on a directed graph, which is linear in the number of edges.
Alternatively, at the level of dependency graphs, cyclic equivalence can be
viewed as membership in the same source-to-sink flip class of acyclic
orientations \cite{Pretzel, develin}.

For implementations of the PCPO or TPCPO hierarchy, however, it is preferable to use the wire representation of pc-monomial as tuple of wire words. A direct algorithmic treatment is provided by
Liu--Wrathall--Zeger~\cite{liu1990efficient} who give a linear-time
procedure for deciding whether two wire representations are cyclically equivalent\footnote{Note that there is a small difference in terminology. Liu--Wrathall--Zeger say that two words are cyclically equal if they are related by a single elementary cyclic move. Here, we use the term \emph{cyclic equivalence} for the equivalence relation generated by such moves. Thus our cyclic equivalence is the transitive closure of their cyclic equality relation. This is referred to as conjugacy in \cite{duboc1986some,liu1990efficient}.}. Namely, two pc-monomials $[u]$ and $[v]$ in the alphabet $X$ are cyclically equivalent if $(i)$ each letter occurs the same number of times in $[u]$ and $[v]$ and $(ii)$ there are pc-monomials $[l]$ and $[r]$ such that $[l][u][r] = [v]^n$ for some $n = 1, \ldots, |X|$. Both of these conditions can be checked with the wire representations $\Pi(u)$ and $\Pi(v)$.

Consequently, cyclic pc-monomials can be built incrementally during the SDP
construction. As pc-monomials are generated, one stores one representative for
each cyclic class and compares each new pc-monomial with the existing
representatives using the Liu--Wrathall--Zeger test. If the test succeeds, the
new monomial is merged with the corresponding existing class; otherwise, it
starts a new cyclic class.

\section{Discussion}\label{sec:discussion}
In this paper, we detailed the quotient-based approach for NCPO, where polynomial equalities are handled directly at the algebraic level. Quotient reductions are already used in practice, especially in simple settings such as Bell scenarios. We pointed out that the ad-hoc substitution rules used in these cases, however, are not necessarily complete in more complex instances, even in simple scenarios where partial commutations are the only equality relations.

We then introduced PCPO, and its tracial counterpart TPCPO, which interpolate between the fully non-commutative and fully commutative regimes. We showed how existing representations of the underlying partially-commutative monoid can be used to construct quotient-based SDP relaxations for these problems. We introduced a circuit representation of pc-monomials providing a convenient and intuitive way to visualize the underlying algebraic structure.

An implementation of the PCPO and TPCPO hierarchies based on the approach developed here is available as a Julia package \cite{pcpo_implementation_package}. A detailed description of the implementation, together with numerical benchmarks and comparisons with existing approaches, is provided in \cite{pcpo_implementation}.

Several directions remain open. A first natural extension is to incorporate in the quotient approach additional algebraic constraints that frequently arise in quantum information, such as projective, involutive, or unitary-type relations (e.g., $x^2=x$, $x^2=1$, $x^*x=1$). A second direction is to develop Gröbner-basis-like methods directly in the partially commutative setting to handle additional families of polynomial equalities on top of partial commutation. Progress on these questions would further improve the efficiency and robustness of SDP hierarchies in non-commutative optimization.

\section*{Acknowledgements}
We acknowledge funding from the European Union’s Horizon Europe research and innovation programme under the project `Quantum Security Networks Partnership" (QSNP, grant agreement No 101114043), 
from the F.R.S-FNRS through the PDR T.0171.22,
from the FWO and F.R.S.-FNRS under the Excellence of Science (EOS) programme project 40007526,
from the FWO through the BeQuNet SBO project S008323N. 
S.P. is a Research Director of the Fonds de la Recherche Scientifique – FNRS.

Funded by the European Union. Views and opinions expressed are however those of the authors only and do not necessarily reflect those of the European Union. Neither the European Union nor the granting authority can be held responsible for them.

% Bibliography
\bibliographystyle{alpha}
\bibliography{refs}

\end{document}